\newcommand{\xnorm}[1]{ \Vert #1 \Vert }
\newcommand{\var}{\text{\rm var}}
\newcommand{\sgn}{\text{\rm sgn}}
\newcommand{\zz}[1]{\mathbb{#1}}
\newtheorem{theorem}{Theorem}
\newtheorem{proposition}{Proposition}
\newtheorem{lemma}[proposition]{Lemma}
\newtheorem{corollary}[proposition]{Corollary}
\newtheorem{claim}[proposition]{Claim}
\newtheorem{berryEsseenTh}{Berry-Esseen Theorem}
\newtheorem{hoeffding}{Hoeffding's Inequality}
\newcommand{\xset}[1]{\left\{ #1 \right\}}
\begin{document}

\title{Nash Equilibria for Quadratic Voting\thanks{We are grateful to numerous colleagues for useful comments.  We especially appreciate the role of Lars Peter Hansen for introducing us and suggesting our collaboration.}}

\author{Steven P. Lalley%
\thanks{Department of Statistics, University of Chicago: 5734 S. University
Avenue, Chicago, IL 60637; lalley@galton.uchicago.edu, http://www.stat.uchicago.edu/lalley/.%
} \and E.~Glen Weyl%
\thanks{Microsoft Research New England, 641 Avenue of the Americas, New York, NY 10011; glenweyl@microsoft.com,
http://www.glenweyl.com.%
}}

\date{July 2019}

\maketitle

\thispagestyle{empty}

\begin{abstract}
  A group of $N$ individuals must choose between two collective
  alternatives.  Under Quadratic Voting (QV), agents buy votes in
  favor of their preferred alternative from a clearing house, paying
  the square of the number of votes purchased; the sum of all votes
  purchased determines the outcome. We provide the first rigorous
  results for this mechanism, in a canonical independent private
  values environment with bounded value distributions.  In addition to
  characterizing the nature of equilibria, we demonstrate that for all
  bounded value distributions, the utilitarian welfare losses of the
  mechanism as a proportion of the maximum possible welfare tends to
  zero as the population size becomes large.
\end{abstract}

\emph{Keywords:} social choice, collective decisions, large markets,
costly voting, vote trading, Bayes-Nash equilibrium.

\section{Introduction}\label{sec:intro}

Consider a binary collective-decision problem in which a group of $N$
individuals must choose between two alternatives. Each individual $i$
has a privately known value $u_{i}$ that determines her willingness to
pay for one alternative over the other; positive values indicate
affinity for outcome $+1$, negative values for outcome $-1$.  {\em
Quadratic Voting} ({\em QV}) is a simple and detail-free mechanism
designed to maximize utilitarian efficiency in this
setting.\footnote{Clearly many other objectives are possible for this
problem, and many involve distributional considerations. However, we
focus on a utilitarian objective because it is the one most
extensively studied in the literature \citep{bowen, groves}.} In this
system, individuals buy votes (either negative or positive, depending
on which alternative is favored) from a clearing house, paying the
square of the number of votes purchased.  The sum of all votes
purchased then determines the outcome. The utility (payoff) of the
outcome to an individual with value $u$ is $+u$ if outcome $+$ is
adopted, but $-u$ if outcome $-$ is adopted.\footnote{Our results will
apply to a modified version of the problem in which the utility is
``smoothed'' in such a way that each voter's utility is a continuous
function of the vote total. See section~\ref{ssec:model} for details.}

QV is of interest for several reasons.  First, it is among a small
collection of mechanisms whose approximate efficiency has been
suggested to hold under relatively broad conditions
\citep{tidemansurvey}.  Second, several non-rigorous calculations
based on asymptotic approximation \citep{robustnessQV} and numerical
work \citep{finitepopulations}, as well as both laboratory
\citep{storableqv} and field experiments \citep{wild, holland,
  cavaille} provide some evidence that QV may be approximately
efficient --- or at least of practical utility --- in a wider range of
environments than other approximately efficient mechanisms.  Third,
variations on QV have recently been implemented in a wide range of
field settings with stakes of hundreds of millions of dollars
\citep{qvimplement, colorado}.   Prior to the first draft of this
paper (published on arxiv in 2014), there were no fully rigorous
game-theoretic proofs analyses of the mechanism (though some have
emerged since the first draft). It is the aim of this article to
present such a rigorous analysis.

The heuristic rationale for QV is quite simple. The marginal benefit
to a voter of an additional vote is her value multiplied by her {\em
marginal pivotality} (roughly, the perceived probability that an
additional unit of vote will sway the decision).  She maximizes
utility by equating this marginal utility to the linear marginal cost of a
vote. Therefore, \emph{if voters share the same marginal pivotality}, they
will buy votes in proportion to their values, thus bringing
about utilitarian efficiency.  Furthermore, the quadratic cost
function is the \emph{unique} cost function with this property. This
argument is explained in further detail  in Subsection
\ref{intuition}.

Variations of this rationale have been used to justify quadratic
mechanisms in a number of related collective-decision-making
problems. However, to our knowledge, this heuristic rationale has
never been translated into a rigorous argument for efficiency in the
sort of non-cooperative, incomplete information game theoretic model
in which mechanisms for allocating private goods have been studied, at
least since the work of \citet{myersonauction}.  In fact, as we
will show, in the modified setting of quadratic voting that we will
consider the crucial \emph{ansatz} of the rationale -- that in
equilibrium all voters will have the same marginal pivotality -- is
false. As a result, formal equilibrium analysis is a far more subtle
task than the heuristic argument of the previous paragraph might
suggest.  Nevertheless, we will show that for voters with values in
the ``bulk'' of the distribution, the marginal pivotality is
approximately constant.\footnote{Theorems of \citet{kahn}
and \citet{nabil} imply marginal pivotality must
converge to zero as $N \rightarrow \infty$. Our results will show
that, with probability approaching $1$, the \emph{ratio} between the
marginal pivotalities of two randomly chosen voters will be close to
$1$.}

While interest in QV has grown in recent years, the core idea behind
it is one of the oldest in the mathematical theory of
voting. \cite{penrose} observed that the, if voters act randomly, the
power of an individual with $m$ votes relative to that of an
individual with a single vote grows as $m^2$.  This suggests that the
voting weight of sub-units in a system of multi-level representation
should be proprotional to the square root of population size, an
option considered seriously by the European Union \citep{compromise}.
QV, as we analyze, may be seen as an application of this intuition to
the setting where agent can express preference intensity using a
scarce resource. 

Despite this connection, the use of quadratic pricing for collective decision-making was
apparently first suggested by \citet{gl}, who proposed it as a Nash
implementation of the optimal level of continuous public goods under
complete information that avoids the fragility of previously suggested
efficient mechanisms.  \citet{hyzeck} provided the first variant of
the heuristic rationale above (based on incentive theory, rather than chances of being pivotal) to uniquely justify quadratic pricing
mechanism and proposed an iterative procedure that they conjectured
would converge to \citet{gl}'s complete information optimum in the
presence of private information. In a preprint circulated in 2012,
\citet{qvb} first proposed the use of QV for binary collective
decision problems, and conjectured that it would lead to
asymptotically efficient decisions in the environment considered in
the present paper. (This was based on an extension of \citeauthor{hyzeck}'s
heuristic rationale.)  \citet{goeree} independently suggested using a
detail-based, approximately direct variant of QV in the special case
where values are sampled from zero-mean normal distributions, and
derived an equilibrium in the case $N=2$.\footnote{\citeauthor{goeree}
  also derive an asymptotic efficiency result (cf. Proposition 2 of
  their article).  While the result is
  true, our analysis below shows that their short proof is
  fallacious. The flaw is  their use of the central limit
  theorem, in particular, the assertion
  \begin{quote}
    for a large electorate, the central limit theorem implies that
    $G(b_{i})$ limits to $\Phi (b_{i}/\sqrt{2 \pi n }\sigma)$.
  \end{quote}
  This assertion is problematic for at least two critical reasons: (1)
  Because the optimal ``bid'' rule $v_{i}\mapsto b_{i}$ is \emph{a
    priori} unknown -- and in principle dependent on the size $N$ of the
  electorate -- the distribution of the summands $b_{j}$ could in
  principle vary wildly as $N \rightarrow\infty$. (2) Even in the
  simplest settings where a central limit theorem holds (e.g., the
  DeMoivre-Laplace theorem for sums of Bernoulli random variables) the
  normal approximation breaks down in the tails of the
  distribution. This  is, as we will show later,
  precisely the region where voters  might take an ``extremist''
  position, producing large discontinuities in the function
  $v_{i}\mapsto b_{i}$. }

Our main results hold generally in the non-cooperative, independent
private values setting with arbitrary value distributions whose
supports are bounded and which have smooth densities that do not
vanish at the endpoints of the support. This range of applicability is
broader than is common in the literature for normative analysis of
non-direct mechanisms; for example, \citet{casella} studies only a few
specific value distributions numerically for fixed population sizes.
However, our model assumptions are almost certainly too narrow to
encompass the broad range of settings where QV is already being
applied practically, which are likely to involve complications such as
aggregate uncertainty and collusion.  Results of \citet{myatt} and
\citet{robustnessQV} suggest that asymptotic efficiency fails
generically with aggregate uncertainty, as different agents have
different estimates of their marginal pivotalities based on their private
information \citep{robustnessQV}.  Similarly, \citet{finitepopulations}
show numerical examples of imperfect efficiency in finite populations.
To our knowledge no example has been shown where these lead to a
welfare loss of greater than 10\%, in contrast to (for example)
standard one-person-one-vote which often loses 100\% of potential
welfare.

Despite its somewhat limited scope of applicability, our analysis
makes several important theoretical contributions to the literature
surrounding QV. First, we develop techniques to account for deviations
from the conjectured linear asymptotic equilibrium, using a series of
increasingly precise approximations that allow us to successively
bound these deviations and their impact on incentives in a
``ratchet''.  Second, we show that at least in the ``unbalanced'' case
of a value distribution with non-zero mean, very substantial
deviations from linearity do in fact occur in the extreme tails of the
distribution, and that these deviations actually ``drive'' the
equilibrium, by forcing agents in the bulk of distribution to buy
votes as insurance against ``extremists''. Our techniques have
already  been applied to study the performance of non-quadratic
rules \citep{eguia} and to establish analogous approximate efficiency
results for a version of QV for social choices with more than two
alternatives \citep{multioption}, both under complete information.

In summary, this paper makes the following three main
contributions to the study of quadratic voting:

(1)  It provides the first fully rigorous equilibrium analysis for
QV.

(2) It is  the only fully rigorous equilibrium analysis that allows
for incomplete information.

(3)  It shows that significant deviations from linearity do in fact
  occur in equilibrium, and at the same time introduces a suite of
  mathematical techniques for dealing with these nonlinearities.

\section{Statement of Main Results}

\subsection{Model Assumptions}\label{ssec:model}

We consider an independent symmetric private values environment with
$N$ voters $i=1, \ldots, N$.  Each voter $i$ is characterized by a
value, $u_i$; these values are drawn independently from a continuous
probability distribution $F$ with $C^{\infty}$, strictly
positive density $f$ supported by a finite closed interval
$\left[\underline u, \overline u\right]$.\footnote{The
assumption that the density $f$ is positive at the endpoints
$\underline u ,\overline u$ is of critical importance for our main
results, as ``extremists'' play a crucial role in the Bayes-Nash
equilibria for the game. Our methods would extend to densities $f$ that
vanish at one or both of the endpoints, but the nature of the
Bayes-Nash equilibria changes in these cases.} Each individual knows
her own value, but not the values of any of the other $N-1$ voters;
however, the sampling distribution $F$ is known to all. For
normalization, we assume the numeraire has been scaled so that $\min
(-\underline u, \overline u)\geq 1$.  We denote by $\mu$, $\sigma^2$,
and $\mu_3$, respectively, respectively the mean, variance, and raw
third moments of $u$ under $F$.

We consider a variant of the payoff described above, in which the
utility of the outcome is ``smoothed''.  \footnote{Although both the
discrete binary choice set-up of \cite{qvb} and the continuous public
goods model of \cite{hyzeck} helped inspire this model, it differs
from both. Consequently, our results have no direct implications for
those models. It differs from \cite{qvb}'s model in that the outcome
is smoothed rather than jumping discontinuously at $0$.  It differs in
a variety of respects from \cite{hyzeck}'s, notably in that utility is
linear in the common and bounded outcome, whereas \cite{hyzeck} assume
strictly concave preferences with heterogeneous ideal points and an
outcome that may take values in the full real space.  \cite{hyzeck}
also consider a multidimensional issue space with no access to
transfers and an iterative procedure to converge to this outcome, none
of which feature in our model.}  Each voter $i$ chooses a number of
votes $v_i\in \zz R$ to buy, and pays $v_i^2$ dollars for these. The
payoff to voter $i$ is then
\begin{equation}\label{eq:payoff}
	\Psi (V)u_{i}, \quad \text{where} \; V=\sum_{i=1}^{N} v_{i}
\end{equation}
is the vote total and $\Psi :\zz{R}\rightarrow [-1,1]$ is an odd, nondecreasing,
$C^{\infty}$ function such that for some $\delta >0$,
\begin{compactenum}
\item [(M1)] $\Psi (x) =\sgn (x) \quad \text{for all} \;\; |x|\geq
  \delta ;$ 
\item [(M2)] $\psi (x):=\Psi ' (x) >0 \quad \text{for all} \;\; x\in
  (\delta ,\delta); $  
\item [(M3)] $\psi ' (x)>0  \quad \text{for all} \;\; x\in (-\delta
  ,0)$; and 
\item [(M4)] $\psi(x)$ has a unique inflection point $x=\iota$ in
  $(-\delta,0)$, such that
  \begin{compactenum}
    \item [(M4a)] $\psi'(x)$ is strictly increasing in
      $[-\delta,\iota]$, and 
    \item [(M4b)] $\psi'(x)$ is strictly decreasing in $[\iota ,0]$.
  \end{compactenum}

\end{compactenum}
Thus, the function $\frac{1}{2} \psi$ is an even probability density
with support $[-\delta,\delta]$.  We shall refer to $\Psi$ as the
\emph{payoff function}, because it determines the quantity by which
values $u_i$ are multiplied to obtain the allocative component of each
individual's utility.\footnote{The assumptions on the payoff function
  $\Psi$ are primarily for mathematical convenience. However, there
  are some circumstances where a smoothing of the payoff for vote
  totals near $0$ might be natural: for instance, (i) in some close
  elections, it might be necessary for the winning side to form a
  coalition with some of the losers to form a functioning majority; or
  (ii) for vote totals near $0$, a recount might be necessary, leading
  to the possibility that the winning side might be overturned.}
Conditional on the values $\{v_i\}$, individual $i$ earns expected
utility
\begin{equation}\label{utility}
	 \Psi(V)u_{i}-v_i^2.
\end{equation}
Thus, in a \emph{type-symmetric, pure-strategy Bayes-Nash
  equilibrium}\footnote{ See
  section~\ref{ssec:assumptions-terminology} for the definition and a
  proof that Bayes-Nash equilibria use non-randomized
  strategies. Roughly, a type-symmetric equilibrium is a function
  $v(u)$ such that, if all players use the rule $u\mapsto v (u)$ for
  buying votes then no player could improve her expected utility by
  defecting from the strategy.}, a voter with value $u$ will maximize
\begin{equation}\label{eq:wealth-maximization}
	 E\left[u\Psi\left(S_{n}+v\right)\right]-v^{2},
\end{equation}
where $n=N-1$ and $S_{n}:=\sum_{ i= 1}^{n}v_{i}$ is the {\em one-out
  vote total}, the sum of all votes cast by all but a single
individual.  For brevity, we shall refer to \emph{type-symmetric,
  pure-strategy Bayes-Nash equilibria} as \emph{Nash equilibria}.

We define the {\em expected inefficiency} as 
\[
EI\equiv \frac{1}{2} - \frac{ E\left[ U \Psi(V)
\right]}{ 2 E\left[ \left| U\right| \right]} \in[0,1],
\]
where $U\equiv
\sum_i u_i$.  This measure is the unique negative monotone linear
functional of realized aggregate utility  $ U \Psi(V) $ that is
normalized to lie in the unit interval.

\subsection{Existence of Equilibria}\label{existence}

\begin{proposition}\label{proposition:existence}
\label{Lem:existence}For any $N>1$ a monotone increasing,
pure-strategy Nash Equilibrium $v$ exists.
\end{proposition}

This result follows directly from \cite{renybayesian}, Theorem
4.5. All of \cite{renybayesian}'s conditions can easily be
checked, so we highlight only the less obvious ones.  Continuity of
payoffs as functions of the  actions $v_{i}$ follows from the
continuity and boundedness of
$\Psi$.  Type-conditional utility is only bounded from above, not
below, but boundedness from below can easily be restored by simply
deleting for each value type $u$ votes of magnitude greater $\sqrt{2
\left| u \right|}$.  The existence of a monotone best-response follows
from the obvious super-modularity of payoffs in value and votes. 

Although Nash equilibria always exist, they need not be
unique. Indeed, we will show that in some circumstances
(cf. Theorem~\ref{munot0characterization}) Nash equilibria have points $u_{*}$
of discontinuity; at any such point, there are at least two distinct
pure-strategy Nash equilibria, one with $v (u_{*})=v (u_{*}+)$, the
other with $v (u_{*})=v (u_{*}-)$. We conjecture, however, that at
least when $N$ is large, non-uniqueness of Nash equilibria can only
occur for this trivial reason: in particular, we conjecture that if
$v_{1}$ and $v_{2}$ are distinct Nash equilibria then $v_{1} (u)=v_{2}
(u)$ for all but at most one value $u$.

\subsection{Rationale for QV}\label{intuition} Formally differentiating expression
\eqref{eq:wealth-maximization} with respect to $v$ (see
section~\ref{ssec:nec-conditions} for a formal proof) yields the following
first-order condition for maximization:
\begin{equation}\label{foc}
 u  E\left[\psi\left(S_{n}+v\right)\right] = 2v \implies
v(u)=\underbrace{\frac{E\left[\psi\left(S_{n}+v(u)\right)\right] }{2}}_{\textrm{marginal
pivotality}} u.  
\end{equation} 
The marginal benefit of an additional
unit of vote is thus twice the individual's value multiplied  by the influence
this extra vote has on the chance the alternative is adopted, the
vote's {\em marginal pivotality}.  The marginal cost of a vote is
twice the number of votes already purchased.

When the number $N$ of voters is large, most would reason that their votes
$v (u)$ will have a negligible effect on the vote total $S_{n}+v
(u)$.  Taking this logic to an extreme, if voters acted as if marginal
pivotality $p$ were constant across the population, then an individual
with value $u$ would buy $v(u)=pu$ votes. This voting strategy would
imply $V=p\sum_i u_i$; that is, the vote total would be exactly
proportional to the sum of the values, and consequently the expected
inefficiency would be $0$.  Clearly, this argument holds only for a
quadratic cost function, because only quadratic functions have linear
derivatives. 

Our main results will show, however, that the marginal pivotality is
not constant; in fact, when the mean $\mu$ of the sampling
distribution $F$ is non-zero the marginal pivotality can have large
jump discontinuities in the tail of the distribution. Thus, voters do
not buy votes strictly in proportion to their values, and so in
general the vote total will not always be a scalar multiple of the aggregate
value $\sum_{i}u_{i}$. Nevertheless, as our results will show,
quadratic voting is asymptotically efficient, in the sense that the
expected inefficiency converges to $0$ as $N \rightarrow\infty$.

Although it is perhaps obvious, we emphasize that one-person-one-vote
is in many cases \emph{not} efficient. Such inefficiency will occur,
for instance, if the distribution $F$ has positive mean $\mu$ but attaches
probability $q< 1/2$  to the interval $[0,\overline{u} ]$, because by
the law of large numbers, when $N $ is large,
\begin{displaymath}
  \frac{1}{N}\sum_{i=1}^{N}U_{i} \approx \mu >0
\end{displaymath}
but
\begin{align*}
  \frac{1}{N}\sum_{i=1}^{N} \mathbf{1}_{\xset {U_{i}\geq 0}}&\approx q
  \quad \textrm {and}\\
  \frac{1}{N}\sum_{i=1}^{N} \mathbf{1}_{\xset {U_{i}< 0}}&\approx 1-q,
\end{align*}
so under one-person-one-vote the vote total would, with high
probability, be near $(-1+2q )N <0$.

\subsection{Main Results}\label{main}

Our main results concern the structure of equilibria in the game
described in the previous section when the number $N$ of agents is
large, and the implications for the efficiency of QV.

\subsubsection{Characterization of equilibrium in the zero mean
case}\label{balancedsample}

The structure of a Nash equilibrium differs
radically depending on whether $\mu=0$ or $\mu\neq0$.  The case   $\mu
=0$  is of particular interest because in some
elections -- for instance, when two candidates are vying for an elected
office -- the alternatives may be tailored so that an approximate
population balance is achieved \cite{ledyardvote}.

\begin{theorem}\label{mu0characterization}
For any sampling distribution $F$ with mean $\mu =0$ that
satisfies the hypotheses above, there exist
constants $\epsilon_{N} \rightarrow 0$  such that in any Nash
equilibrium, $v (u)$ is $C^{\infty}$ and strictly increasing on $[\underline u 
,\overline u ]$ and satisfies the following approximate
proportionality rule:
\begin{equation}\label{eq:proportionality-rule}
	\bigg| \frac{v (u)}{p_N u}-1 \bigg| \leq
	\epsilon_{N}
	\quad \text{where} \quad
	p_{N}=\frac{1}{2^{\frac{3}{4}}\sqrt{\sigma}\sqrt[4]{\pi(N-1)}}.
\end{equation}
Furthermore, there exist constants $\alpha_{N},\beta_{N}\rightarrow 0$ 
such that in any equilibrium the vote total $V=V_{N}$ and expected
inefficiency satisfy
\begin{gather}\label{eq:e-vote-total}
	| E[V]|\leq \alpha_{N} \sqrt{\var (V)} \quad \text{and}\\
\label{eq:eff}
	\quad EI<\beta_{N}.		
\end{gather}
\end{theorem}

The proof will be given in section \ref{sec:balanced}.

Thus, in any equilibrium, agents buy votes \emph{approximately} in
proportion to their values $u_{i}$.  Given this fact, it is not
difficult to understand why the number of votes a typical voter buys
should be of order $N^{-1/4}$. If the vote function $v (u)$ in a Nash
equilibrium follows a proportionality rule $v (u)\approx \beta u$, the
constant $\beta$ must be the consensus marginal pivotality. On the
other hand, by the local limit theorem of probability (see
\cite{feller}, ch.~XVI), if $\beta =CN^{-\alpha}$ for some constants
$C\not =0$ and $\alpha \in \zz{R}$, the chance that
$V\in [-\delta ,\delta]$ would be of order $N^{\alpha -\frac{1}{2}}$,
and so $\alpha$ must be $1/4$.

Although the relation \eqref{eq:proportionality-rule} asserts the
ratio $v (u)/u$ is approximately constant, it is not \emph{exactly}
constant: in fact, $v (u)$ is a genuinely nonlinear function of
$u$. Thus, even though $ E[U]=0$, it need not be the case that
$E[V]=0$.  To establish the asymptotic efficiency assertion
\eqref{eq:eff}, we must establish assertion \eqref{eq:e-vote-total},
namely, that the non-linearities vanish rapidly enough that the bias
created by non-linearity is smaller than the sampling variation in
$u$.  This will require a rather subtle application of the {\em
  Edgeworth expansion} (cf. \cite{feller}, ch.~XVI) of the
distribution of $S_{n}$.  If it were the case that $ E[V]=0$, and if
the distribution of $S_{n}$ were exactly normal, a standard Taylor
expansion and the $N^{-1/4}$ decay of $v (u)/u$ could be used directly
to show that non-linearities vanish with $N^{-1}$ even relative to the
leading term of $v (u)/u$.  A detailed analysis of this argument leads
us to conjecture that, under the hypotheses of
Theorem~\ref{mu0characterization}, the inefficiency of QV decays like
$\mu_{3}^{2}/ (16\sigma^{6}N)$.

\subsubsection{Characterization of equilibrium in the non-zero mean
case}\label{unbalancedsample}

When $\mu$ is not zero, the nature of equilibria can be quite
different: in particular, if the payoff function $\Psi$ is
sufficiently sharp (i.e., the support of its derivative $\psi$ is
sufficiently small) then for sll large $N$, every Nash equilibrium has
a large discontinuity in the extreme tail of the sampling
distribution. Nevertheless, in all cases the quadratic voting
mechanism is asymptotically efficient, as the following theorem shows.

\begin{theorem}\label{theorem:efficiency}
  Assume that the sampling distribution $F$ has mean $\mu>0$ and that
  $F$ and $\Psi$ satisfy the hypotheses laid out in
  section~\ref{ssec:model} above. Then there exist
  constants $\beta_{N}\rightarrow 0$ such that in any Nash equilibrium
  $v (u)$,
\begin{equation}\label{eq:asym-eff}
	EI < \beta_{N}.
\end{equation}
Furthermore, there exist constants $\xi \geq \delta$ and $\beta>0$
depending on the sampling distribution $F$ and the payoff function
$\Psi$ but not on $N$ such that in any equilibrium $v (u)$, for any
$\epsilon >0$,
\begin{gather}\label{eq:approx-proportionality}
	\sup_{\underline u +\beta N^{-3/2}\leq u\leq \overline u}\bigg|v
        (u)-
        \left(
          \frac{\xi}{\mu N}
        \right)u\bigg|
	<\epsilon_{N}/N \quad \text{and hence}\\
	\label{eq:vote-total-concentration}
	P\{|V_{N}-\xi |>\epsilon \}\leq \epsilon'_{N},
\end{gather}
where $\epsilon_{N},\epsilon'_{N}\rightarrow 0$ are constants that depend
only on the sample size $N$, and not on the particular equilibrium.
\end{theorem}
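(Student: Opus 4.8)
The plan is first to derive \eqref{eq:asym-eff} from \eqref{eq:vote-total-concentration}. Since the $u_i$ are i.i.d., bounded, and have mean $\mu>0$, a Chernoff bound gives $P(U<0)\le e^{-cN}$ and $\mathbb E|U|=N\mu+o(1)$, while $\mathbb E[U^2]=N^2\mu^2+N\sigma^2$. Writing $|U|-U\Psi(V_N)=|U|\bigl(1-\sgn(U)\Psi(V_N)\bigr)\in[0,2|U|]$, bounding this integrand by $\eta|U|$ on $\{U>0,\ \Psi(V_N)\ge 1-\eta\}$ and by $2|U|$ (with Cauchy--Schwarz) off it, one gets $EI\le\tfrac{\eta}{2}+\tfrac{\sqrt{\mathbb E[U^2]}}{\mathbb E|U|}\sqrt{P(U\le0)+P(\Psi(V_N)<1-\eta)}$. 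Since $\Psi\equiv1$ on $[\delta,\infty)$ and is continuous, $\{\Psi(V_N)<1-\eta\}=\{V_N<\delta-c_\eta\}$ for some $c_\eta>0$, so \eqref{eq:vote-total-concentration} with $\alpha\ge\delta$ forces $P(\Psi(V_N)<1-\eta)\to0$ for every $\eta$, hence $EI\to0$. It thus suffices to establish \eqref{eq:approx-proportionality} and \eqref{eq:vote-total-concentration}.

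\textbf{Step 2: a priori bounds and concentration of $V_N$.} By the first-order condition \eqref{foc}, in any equilibrium $v(u)=\pi(u)\,u$ with $\pi(u)\equiv\tfrac12\mathbb E[\psi(V_{-1}+v(u))]\in[0,\tfrac12\|\psi\|_\infty]$; hence $v$ is monotone (Lemma \ref{Lem:existence}), has the sign of $u$, and is uniformly bounded, $|v(u)|\le M$ with $M$ independent of $N$, so $V_{-1}$ is a sum of $N-1$ i.i.d.\ bounded terms obeying $P(|V_{-1}-\mathbb E V_{-1}|>t)\le 2e^{-t^2/(2(N-1)M^2)}$. I would then run an iterative tightening argument on the pair $(\mathbb E[V_{-1}],\var(V_{-1}))$: whenever $V_{-1}$ concentrates at a point strictly outside $[-\delta,\delta]$, $\pi(u)$ is super-polynomially small (as $\psi$ vanishes to all orders at $\pm\delta$), so $v(u)$ and then $\mathbb E[V_{-1}]=(N-1)\mathbb E[v(u)]$ collapse toward $0$ --- a contradiction unless $\mathbb E[V_{-1}]$ was already $O(1)$; and whenever $\var(V_{-1})$ is of order $N$, a local-limit bound $P(|V_{-1}+v(u)|\le\delta)=O(1/\sqrt{\var(V_{-1})})$ forces $v(u)=O(N^{-1/2})$ and hence $\var(V_{-1})=O(1)$ on the next pass. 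Chaining these dichotomies drives $\mathbb E[V_N]$ into a bounded window and $\var(V_N)\to0$, so along subsequences $V_N\to\alpha$ in probability for some constant $\alpha$. If $\alpha\in(-\delta,\delta)$, then $\pi(u)\to\tfrac12\psi(\alpha)>0$, forcing $v(u)\to\tfrac12\psi(\alpha)u$ and $V_{-1}\to\tfrac12\psi(\alpha)U_{-1}\to\infty$, contradicting $V_{-1}\to\alpha$; and $\alpha\le-\delta$ is excluded by the $\mu>0$ hypothesis. Hence $\alpha\ge\delta$.

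\textbf{Step 3: identifying $\alpha$ and the proportionality rule.} Because $V_{-1}\to\alpha\ge\delta$ with Gaussian-scale fluctuations of order $N^{-1/2}$ and $\psi$ vanishes to all orders at $\delta$, the typical population contributes only $e^{-\Omega(N)}$ to $\pi(u)$; yet $\pi(u)$ must be of order $1/N$ in order that $\mathbb E[V_N]=N\,\mathbb E[\pi(u)u]$ converge to the positive constant $\alpha$. That residual pivotality must therefore come from the rare event that a single other voter casts a vote of order $-\Theta(1)$ large enough to drag $V_{-1}$ into the support of $\psi$. To make this precise I would analyze \eqref{foc} for $u$ near $\underline u$ with $V_{-1}$ frozen at $\alpha$: the map $v\mapsto\tfrac12 u\,\psi(\alpha+v)$ acquires a second, $\Theta(1)$-negative fixed point once $\alpha$ is not too large relative to $|\underline u|$ and the peak of $\psi$ (this is where the ``sufficiently sharp $\Psi$'' hypothesis enters), and the marginal \emph{extremist} type --- indifferent between the small and the large fixed points --- lies within $O(1/N^2)$ of $\underline u$. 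A Poisson approximation for the $\Theta(1/N)$-mean number of extremists, together with a local-limit (indeed Edgeworth) expansion for the non-extremist part of $V_{-1}$, then yields $\pi(u)=\alpha\mu^{-1}/N+o(1/N)$ uniformly for $u\ge\underline u+1/N$, with $\alpha\ge\delta$ pinned by the self-consistency requirement that the per-extremist pivotality contribution, multiplied by $N^{2}$, the extremist probability, and $\mu$, reproduce $\alpha$. Substituting into $v(u)=\pi(u)u$ gives \eqref{eq:approx-proportionality}, whence $V_N=\sum_i\pi(u_i)u_i=\tfrac{\alpha}{N\mu}U+o(1)=\alpha+o(1)$ off the $o(1)$-probability event that the realized sample contains an extremist, which is \eqref{eq:vote-total-concentration}.

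\textbf{Main obstacle.} The real difficulty lies entirely in Step 3 (and the sharp form of the bootstrap in Step 2): the equilibrium pivotality is defined through the law of $V_{-1}$, which is itself generated by the equilibrium, and its $\Theta(1/N)$ magnitude is carried not by ordinary fluctuations but by a vanishingly rare, exotic extremist sub-population whose existence, location (within $O(1/N^2)$ of $\underline u$), vote size, and Poisson-type count must all be pinned down \emph{uniformly over all equilibria}, and whose net effect on $V_N$ has to be shown negligible using an Edgeworth --- not merely a normal --- approximation for $V_{-1}$. Proving that the marginal-extremist indifference equation, coupled with the self-consistency condition on $\alpha$, has a solution with $\alpha\ge\delta$ (unique and $N$-independent to leading order) is the delicate analytic heart of the argument, and is where the geometry of $\Psi$ does its work.
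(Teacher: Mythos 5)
Your Step~1 reduction of efficiency to concentration is sound, and your Step~3 correctly identifies the extremist structure and the self-consistency requirement on $\alpha$ (which is the paper's Optimization Problem~\eqref{eq:alpha-w}). The genuine gap is in the Step~2 bootstrap. You invoke an anti-concentration bound $P(|V_{-1}+v|\le\delta)=O(1/\sqrt{\var(V_{-1})})$ to drive $\var(V_{-1})$ down, but the summands $v(U_i)$ have an endogenous, equilibrium-dependent law, and such a local-limit estimate requires uniform control of the third-moment-to-variance ratio of $v(U_i)$ (the paper's Lemma~\ref{proposition:llt} makes exactly this a hypothesis). Nothing you have established rules out $v$ concentrating almost all of its $L^{2}$ mass on a tiny endogenous subpopulation, in which case that bound fails. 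The paper's route to the needed moment control is the weak consensus Lemma~\ref{lemma:consensusA} --- that for any two values $u,u'$ in the bulk of $F$ the pivotalities $E\psi(v(u)+S_n)$ and $E\psi(v(u')+S_n)$ differ only by a $(1\pm\alpha)$ factor, proved by an exchangeability argument on multinomial occupancy counts --- together with Lemmas~\ref{proposition:size} and \ref{proposition:discontinuities-necessary}, which localize both large votes and discontinuities to an $O(N^{-3/2})$-wide neighborhood of $\underline u,\overline u$. This is the missing linchpin: your claim that ``whenever $V_{-1}$ concentrates strictly outside $[-\delta,\delta]$, $\pi(u)$ is super-polynomially small'' is precisely what the extremist subpopulation you invoke in Step~3 falsifies, unless one has first shown that subpopulation can only occupy the extreme tails, which your proposal never establishes.

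Two smaller issues. First, Step~3 treats only the $\alpha>\delta$ (sharp $\Psi$) branch; the theorem also permits $\alpha=\delta$, which is the case when the Optimization Problem has no solution, and you do not address it. Second, you invoke an Edgeworth expansion, but the paper reserves Edgeworth for the $\mu=0$ case (Theorem~\ref{mu0characterization}); for $\mu>0$ Hoeffding and Berry--Esseen (Corollaries~\ref{corollary:hoeffding}--\ref{corollary:berry-esseen}) suffice, and the identification $|ES_n-\alpha|\to 0$ is obtained by a two-sided squeeze --- $ES_n<\alpha-3\epsilon$ triggers profitable deviations by near-$\underline u$ agents, while $ES_n>\alpha+3\epsilon$ forces continuity and hence exponential concentration, contradicting Lemma~\ref{lemma:consensus-lower} --- rather than a fixed-point analysis of a per-extremist pivotality contribution.
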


This theorem allows for two cases.  In the first, where $\xi=\delta
$, the vote total is near $\delta $ with high probability for large
$N$.  This case occurs for large $\delta$ and thus relatively smooth
payoff functions.  In the second, $\xi>\delta$, so that with high
probability the vote total is outside $\left[-\delta, \delta\right]$
for large $N$. This case arises when $\delta >0$ is small (see
Proposition~\ref{proposition:min-problem} below). In both
cases, the approximate proportionality rule
\eqref{eq:approx-proportionality} holds except possibly in the extreme
lower tail of the value distribution $F$.

 To see how the dichotomy arises, suppose that for some $\xi \geq
\delta$ there were a value $w\in (-\delta ,\delta)$ such that
\begin{equation}\label{eq:min-probability-a}
	(1-\Psi (w))\left|\underline u\right|> (\xi  -w)^{2} ;
\end{equation}
then an agent with value $u$ near the lower extreme $\underline
u$, knowing that with high probability the one-out vote total
$S_{n}$ is near $\xi $, would find it
worthwhile to buy $-\xi +w$ votes and thus single-handedly move the
vote total to $w$.  Consequently, there can be no equilibrium in which
$S_{n}$ concentrates strictly below $\xi$ if such a $w$ exists, as
this would lead a large number of individuals to act as extremists,
contradicting the concentration of the vote total. Therefore, in any
equilibrium the voters with positive values $u_{j}$ must buy enough
votes to guarantee that the vote total concentrates at or above
$\xi$. The minimal value $\xi\geq \delta$ below which there is no  advantage
to ``extremist'' behavior in the extreme lower tail  thus
determines the equilibrium behavior \eqref{eq:approx-proportionality}.
This will be at $\xi =\delta$ \emph{unless} there is a solution to
the following problem.

\bigskip \noindent
\textbf{Optimization Problem.} Determine $\xi >\delta$
and a matching real number $w\in [-\delta ,\delta]$  such that
\begin{align}\label{eq:xi-w}
	 (1-\Psi (w))\left|\underline u\right|&= (\xi  -w)^{2}
	 \quad \text{and}  \\
\notag 	  (1-\Psi (w'))\left|\underline u\right|&<   (\xi
	   -w')^{2} \quad \text{for all}\; w'\in [-\delta ,\delta ]\setminus \{w \}
\end{align}

\begin{proposition}\label{proposition:min-problem}
If $\delta <1/\sqrt{2}$ then there exists a unique pair $\xi >\delta$
and  $w\in (-\delta ,\delta )$ that satisfy the
Optimization Problem \eqref{eq:xi-w}.
\end{proposition}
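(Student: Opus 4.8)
The plan is to turn the Optimization Problem \eqref{eq:alpha-w} into a one-dimensional strictly concave maximization over $[-\delta,0]$, the range of relocations of the vote total contemplated in the discussion preceding the statement. Set
\[
	A(w):=w+\sqrt{|\underline u|\,\bigl(1-\Psi(w)\bigr)},\qquad w\in[-\delta,0].
\]
Since $\Psi$ is odd and nondecreasing, $\Psi(w)\le\Psi(0)=0$ on $[-\delta,0]$, so $1-\Psi(w)\ge 1$, and the normalization $|\underline u|\ge 1$ gives $A(w)-w=\sqrt{|\underline u|(1-\Psi(w))}\ge 1>0$; thus $A$ is a well-defined $C^{\infty}$ function on $[-\delta,0]$. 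A pair $(\alpha,w)$ with $w\in[-\delta,0]$ then solves \eqref{eq:alpha-w} if and only if $\alpha=A(w)$ and $\alpha\ge A(w')$ for all $w'\in[-\delta,0]$: the equality in \eqref{eq:alpha-w} forces the positive square root because $\alpha>\delta>0\ge w$, and the displayed inequality, since $\alpha-w'>0$, is equivalent after taking square roots to $\alpha-w'\ge\sqrt{|\underline u|(1-\Psi(w'))}$. Hence a solution is precisely a pair formed by $\alpha^{\star}:=\max_{[-\delta,0]}A$ and a maximizer $w^{\star}$ of $A$; the whole proposition reduces to showing that this maximizer is unique and that $\alpha^{\star}>\delta$.

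I would establish uniqueness by proving $A$ strictly concave on $[-\delta,0]$. Writing $A(w)=w+\sqrt{B(w)}$ with $B(w):=|\underline u|(1-\Psi(w))>0$, we have $B'=-|\underline u|\psi$ and $B''=-|\underline u|\psi'$, and hypothesis (c) ($\psi'>0$ on $(-\delta,0)$) gives $B''<0$ there, so
\[
	A''=\bigl(\sqrt{B}\bigr)''=\frac{2BB''-(B')^{2}}{4B^{3/2}}<0\qquad\text{on }(-\delta,0).
\]
Thus $A'$ is strictly decreasing and $A$ strictly concave on $[-\delta,0]$, and a strictly concave continuous function on a compact interval has a unique maximizer $w^{\star}$. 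For the bound, hypothesis (a) gives $\Psi(-\delta)=\sgn(-\delta)=-1$, so
\[
	\alpha^{\star}\ge A(-\delta)=-\delta+\sqrt{2|\underline u|}\ge\sqrt{2}-\delta>\delta,
\]
the final inequality being exactly where $\delta<1/\sqrt{2}$ is used. That $(\alpha^{\star},w^{\star})$ solves \eqref{eq:alpha-w} is immediate from the reduction, and so is uniqueness: any solution $(\alpha,w)$ has $\alpha=A(w)\le\alpha^{\star}$ from the equality and $\alpha\ge A(w')$ for every $w'$, hence $\alpha\ge\alpha^{\star}$, from the inequality; therefore $\alpha=\alpha^{\star}$, and then $A(w)=\alpha^{\star}$ forces $w=w^{\star}$ by strict concavity.

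The single substantive step is the strict concavity of $A$ on $[-\delta,0]$ -- equivalently, of $w\mapsto\sqrt{1-\Psi(w)}$ -- and this is exactly where the hypothesis that $\psi$ is increasing on $(-\delta,0)$ is needed: drop it and $A$ may have several local maxima, so the matching relocation $w$ need not be unique. Everything else -- well-definedness of $A$, the correspondence between solutions of \eqref{eq:alpha-w} and maximizers of $A$, and the estimate $\alpha^{\star}>\delta$ -- is routine; the endpoints are harmless, since although $B''(0)=-|\underline u|\psi'(0)=0$ ($\psi$ being even), the strict inequality $B''<0$ on the open interval $(-\delta,0)$ already yields strict concavity on the closure $[-\delta,0]$.
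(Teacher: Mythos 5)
Your reduction—rewriting the Optimization Problem in terms of maximizing the single-variable function $A(w) = w + \sqrt{\lvert\underline u\rvert(1-\Psi(w))}$—is a genuinely different and more algebraic route than the paper's, which analyses the critical points of $w\mapsto H(\alpha,w)$ geometrically (as intersections of the line $y=2(\alpha-w)$ with the graph of $\lvert\underline u\rvert\psi$) and then tracks a branch $w_-(\alpha)$ of solutions as $\alpha$ varies. Your concavity computation on $[-\delta,0]$ is correct: on $(-\delta,0)$ hypothesis~(c) gives $\psi'>0$, so $B''=-\lvert\underline u\rvert\psi'<0$ and the $\tfrac{-(B')^2}{4B^{3/2}}$ term takes over at the endpoints, yielding strict concavity and a unique maximizer on the closed interval. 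The bound $\alpha^\star\ge A(-\delta)=-\delta+\sqrt{2\lvert\underline u\rvert}>\delta$ is also exactly right and isolates where $\delta<1/\sqrt 2$ enters.

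However, there is a genuine gap in the equivalence you assert. The inequality in \eqref{eq:alpha-w} must hold for \emph{all} $w'\ne w$, but your ``if and only if'' only verifies it for $w'\in[-\delta,0]$. For $w'<-\delta$ and $w'>\delta$ the check is easy (as you could quickly note: $A$ is affine of slope $1$ there, and the $w'>\delta$ case reduces to $0\le(\alpha-w')^2$), but for $w'\in(0,\delta)$ the condition is again exactly $\alpha\ge A(w')$, and this you never check. Moreover your concavity argument does not carry over to $(0,\delta)$: there $\psi'<0$, so $B''=-\lvert\underline u\rvert\psi'>0$, and the sign of $A''=(2BB''-(B')^2)/(4B^{3/2})$ is no longer forced. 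Whether $\max_{(0,\delta)}A\le\max_{[-\delta,0]}A$ is precisely the substantive content the paper extracts from the single-inflection-point hypothesis on $\psi$ via its analysis of the two intersection points $w_\pm(\alpha)$ (and which controls whether the maximizing $w$ can drift into $(0,\delta)$); your proof does not address it. As written, you have constructed the unique candidate with $w\in[-\delta,0]$, but have not shown it actually satisfies the global inequality, nor ruled out that the true solution of the Optimization Problem lies elsewhere.
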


  The proof will be given in Appendix~\ref{sec:min-problem}. When the
Optimization Problem has a solution, Nash equilibria take a rather
interesting form in which extremists must appear, but with vanishing
probability, as the following theorem shows.

\begin{theorem}\label{munot0characterization}
  Assume that the sampling distribution $F$ has mean $\mu>0$ and that
  $F$ and $\Psi$ satisfy the hypotheses above.  Assume further that
  the Optimization Problem \eqref{eq:xi-w} has a unique solution
  $(\xi ,w)$ such that $\xi >\delta$. Then there exists a constant
  $\zeta >0$ depending on $F$ such that for any $\epsilon >0$ and any
  Nash equilibrium $v(u)$, when $N$ is sufficiently large, there
  exists $u_{*}\in [\underline{u} . \overline{u} ]$ such that
\begin{compactenum}
\item [(i)] $v (u)$ has a jump  discontinuity at $u=u_{*}$;
\item [(ii)] $v(u)$ is continuous and continuously differentiable for
  all $u\in (u_{*},\overline{u} ]$;
\item [(iii)]  the approximate proportionality rule
\eqref{eq:approx-proportionality} holds for all $u\in (u_{*},\overline
u]$; 
\item [(iv)] $|v (u) +\xi -w| <\epsilon$  for $u\in
\left[\underline u,u_{*}\right)$; and
\item [(v)] $|u_{*} -(\underline{u} +\zeta N^{-2})|<\epsilon N^{-2}$.
\end{compactenum}
\end{theorem}

Theorems \ref{theorem:efficiency} and \ref{munot0characterization}
will be proved in
section~\ref{sec:unbalanced}. Theorem~\ref{munot0characterization}
asserts that any Nash equilibrium has a single large discontinuity
near $\underline{u} +\zeta N^{-2}$; it does not preclude the possibity
of other discontinuities in the interval $[\underline{u}, u_{*} )$,
but (iv) implies that if these occur, the jumps must be small. 

Theorem~\ref{theorem:efficiency} implies that an agent with value $u$
will buy approximately $\xi \mu ^{-1}u/N$ votes {unless} $u$ is in
the extreme lower tail of $F$. Theorem~\ref{munot0characterization}
implies that when a solution to the Optimization Problem exists, such
exceptional agents occur only with probability
$\approx \zeta N^{-1}f (\underline u)$; consequently, by the law of large
numbers, with probability
$\approx 1-\zeta N^{-1}f (\underline u)$ the vote total will be very
near $\xi $.  If, on the other hand, the sample contains an agent
with value less than $u_{*}$ then this agent will buy approximately
$w-\xi \approx -\sqrt{|\underline u |}$ votes, enough to move the
overall vote total close to $w$. Agents of the first type will be
called \emph{moderates}, and agents of the second kind {\em extreme
  contrarians} or {\em extremists} for short. Because the tail region
in which extremists reside has $F-$probability on the order $N^{-2}$,
the sample of agents will contain an extremist with probability only
on the order $N^{-1}$, and will contain two or more extremists with
probability on the order $N^{-2}$. Given that the sample contains no
extremists, the conditional probability that $|V-\xi |>\epsilon $
is $O (e^{-\varrho n})$ for some $\varrho >0$, by standard large
deviations estimates, and so the event that $V<0$ essentially
coincides with the event that the sample contains an extremist.

Why does equilibrium take the somewhat counter-intuitive form
described in Theorem \ref{munot0characterization}? Following is a
brief heuristic explanation. For an agent 
with value $u$ in the ``bulk'' of the sampling distribution $F$,
there is very little information about the vote total $V$ in the
agent's value, and so for most such agents the marginal
pivotality $ \frac{1}{2} E\psi \left(S_{n}+v(u)\right)$
will be approximately $\frac{1}{2}  E\psi (V)$. Consequently, $v(u)$ will be
approximately linear in $u$ except possibly in the extreme tails of
the distribution, and so by  the law of large numbers,
the vote total will, with high probability, be near $\frac{1}{2} N \mu E
\psi (V)$.

Because $\mu >0$, agents with negative values will, with high
probability, be on the losing side of the election. However, if
$ \frac{1}{2} N\mu E \psi (V)$ were small, then an agent with even
moderately negative value could increase her expected utility by
buying  a large number of  (negative) votes;
since many voters with negative values would find it beneficial to
adopt such a strategy, the vote total would, with high probability, be
negative, in contradiction to the fact that it must be concentrated
near $\frac{1}{2} N \mu E \psi (V)$.Therefore, $N E\psi (V)\mu $ must
remain bounded away from $0$.

On the other hand, if $\frac{1}{2} N\mu  E \psi (V)$ were too large,
then  no individual could profitably act as an
extremist, so except with exponentially small probability the  vote
total $V$ would be
bounded away from $[-\delta ,\delta]$. But this would force  $\mathbb E \psi
(V)$ to be exponentially small, which is impossible. Thus,
the aggregate number of votes must concentrate near a constant value,
and so most voters must buy on the order of $1/N$ votes. For this
scenario to occur, $ E \psi (V)$ must decay as
$\frac{1}{N}$. But the primary contribution to this expectation
must come from the event in which an extremist exists, and so the
probability of this event must decay as $\frac{1}{N}$.

\subsection{Proofs of the Main Results: Two Key Ideas}

\emph{A priori}, we know almost nothing about a Nash equilibrium $v(u)$,
apart from monotonicity and boundedness. The heuristic arguments
advanced in section~\ref{intuition} above suggest that $v(u)$ should
be at least approximately linear in $u$, but on close inspection these
arguments fail to hold water: in particular, a large discontinuity in
$v(u)$ might also produce a large discontinuity in the marginal
pivotality (see equation~\eqref{foc}), and so the equality \eqref{foc}
might persist across such a discontinuity. The major part of the
analysis below will focus on the problems of determining where -- or if
-- such discontinuities can occur, and  how much variation in the
marginal pivotality there can be in  regions free of
discontinuities. Two key ideas, both based in the mathematical laws
governing random sampling, will prove central to this analysis.

\bigskip
\textbf{Weak Consensus:} The first 
idea concerns sampling (with replacement) from a multinomial
population with $K$ categories. Let $p_{i}$ be the population
frequency  of category $i$, and let $N_{i}$ be the number of
individuals in a random sample of size $N \gg K$ that fall in category
$i$. Then for any two count vectors  $\mathbf{n}=(n_{i})_{i \leq K}$
and $\mathbf{n}'=(n'_{i})_{i \leq K}$ that differ by at most one (in
absolute value) in each slot and satisfy
\begin{displaymath}
  |n_{i}-Np_{i}|\leq N \epsilon _{N}\quad \textrm{for each} \; i=1,2,
  \cdots  ,K
\end{displaymath}
where $\epsilon_{N}\rightarrow 0$ as $N \rightarrow \infty$, we have
\begin{displaymath}
  \frac{P \xset {N_{i}=n_{i} \; \textrm{for each} \; i}}{P \xset
    {N_{i}=n'_{i} \; \textrm{for each} \; i}} \approx 1.
\end{displaymath}
Moeover,  this approximation holds \emph{uniformly} for all
probability vectors $\mathbf{p}=(p_{i})_{i \leq K}$ with
minimum entry bounded away from $0$. It should be noted that the
approximation breaks down when one of the categories has
probability near $0$. The upshot is that for two agents with values
$u_{1},u_{2}$ \emph{not} in the tail of the value distribution $F$,
the distributions (conditional on their values $u=u_{1},u_{2}$,
respectively) of the vote total
$S_{n}+v(u)$ will be nearly the same, because by the multinomial
sampling principle moving one agent from
the ``category'' $[u_{1}-\epsilon, u_{1}+\epsilon]$ to the
``category'' $[u_{2}-\epsilon, u_{2}+\epsilon]$ will not change
sampling probabilities appreciably. However, for an agent with value
$u_{3}$ in the extreme tail of the  distribution this approximate
consensus breaks down, because the ``category'' $[u_{3},\bar{u}]$
has sampling probability near zero. See the proof of
Lemma~\ref{lemma:consensusA}  for a precise formulation of this
assertion. 

\bigskip
\textbf{Anti-Concentration.} The second idea concerns concentration
properties of the distribution of a sum of independent, identically
distributed random variables, and originates in the (local) central
limit theorem. A precise formulation is given in
Proposition~\ref{proposition:anti-concentration} in
section~\ref{ssec:size} below; following is a rough discussion. If
$X_{1},X_{2}, \cdots  $ are independent, identically distributed
random variables with variance $\sigma^{2}<\infty$ and centered third
moment bounded by $C \sigma^{3/2}$, then for any interval $J \subset
\mathbb{R}$ of length $2\delta$,
\begin{displaymath}
  P \xset {\sum_{i=1}^{n}X_{i}\in J} \leq \frac{2C'\delta}{\sigma \sqrt{n}} ,
\end{displaymath}
where $C'$ is a finite constant depending only on $C$. Now the function $\psi$ has,
by assumption, support contained in the interval $[-\delta,\delta]$;
consequently, for any Nash equilibrium $v(u)$ such that
$|v(u_{1})|\geq \epsilon$ for \emph{some} $u_{1}$, we must have
\begin{displaymath}
  \epsilon \leq \frac{1}{2} E\psi(S_{n}+v(u_{1})) \leq
  \frac{\xnorm{\psi}_{\infty}}{2}P \xset {S_{n}\in [-v(u_{1})-\delta,
    -v(u_{1})+\delta] } \leq \frac{C'\delta}{\sqrt{n \textrm {var}(v(U_{1}))}}.
\end{displaymath}
Thus, the size of $\textrm {var}(v(U_{1})$ is severely constrained,
and so a relatively large  $|v(u_{1})|$ can occur only if \emph{most}
voters buy fewer than $O(1/\sqrt{n})$ votes. Variations of this
argument will be used in section~\ref{ssec:size} to limit the
locations of discontinuities of Nash equilibria $v(u)$ to the extreme
tails of the value distribution $F$ (see, in particular,
Proposition~\ref{proposition:max-cont-interval}).

We remark that this argument depends crucially on the as-yet unproven
fact that the third centered moment of $v(U_{1})$ is bounded by $C
\textrm {var}(v(U_{1}))^{3/2}$. This is where the weak consensus
bounds discussed above will prove critical, as they will enable us to
obtain uniform bounds on the third moments for any Nash equilibrium $v(u)$.

\subsection{Plan of the paper}\label{ssec:plan} The remainder of the
paper will be devoted to the proofs of Theorems 1--3 and
Proposition~\ref{proposition:min-problem}.  Because essentially
nothing (other than monotonicity) is known \emph{a priori} about the
nature of Nash equilibria, information must be teased out in steps,
each relying on the previous steps. We begin in
section~\ref{sec:nec-conditions} by collecting some relatively easy
consequences of monotonicity and the first-order necessary
condition~\eqref{eq:weierstrass}, including a useful necessary
condition (section~\ref{ssec:discontinuities}) for discontinuities of
Nash equilibria, which will ultimately be used to prove that these can
occur only in the extrreme tails of the distribution $F$.  In
section~\ref{ssec:consensus}, a weak form of the approximate
proportionality rule will be proved for agents in the bulk of the
distribution $F$. Using this weak approximate proportionality rule, we
will, in section~\ref{ssec:size}, use an anti-concentration inequality
for sums of i.i.d. random variables to derive bounds for Nash
equilibria. We will then be able to deduce, in
section~\ref{sec:app-prop}, that approximate proportionality holds
except in the extreme tails of $F$. The proofs of Theorems 2--3 will
be given in section~\ref{sec:unbalanced}, and the proof of Theorem~1
in section~\ref{sec:balanced}. 

\subsection{Notation}\label{ssec:assumptions-terminology}
The symbols $\Psi,\psi $, $\delta $, $F$, $f$, $\mu$, $\sigma^{2}$,
$\xi$, $\zeta $, $\iota$, $\underline u,\overline u$ will be reserved
for the functions and constants specified in
Subsection~\ref{ssec:model} above, and the letters $N,n$ will be used
only for the sample size and sample size minus one. The symbols
$\alpha ,\beta ,\gamma ,\epsilon ,\varrho $ and $C, C', \cdots $ will
be used for generic constants whose values might change from one lemma
to the next. Because many of the arguments will involve the values of
the equilibrium vote function $v$ at points near one of the endpoints
$\underline u , \overline u$, we will use the following shorthand
notation, for any $0<\epsilon <1$:
\[
	\overline u_{\epsilon}=\overline u -\epsilon \quad \text{and}
	\quad 
	\underline u_{\epsilon}=\underline u +\epsilon .
\]

\section{Nash Equilibria: Basic Properties}\label{sec:nec-conditions}

\subsection{Monotonicity of Nash Equilibria}\label{ssec:monotonicity}

\begin{proposition}
  \label{proposition:strict-monotonicity} Any pure-strategy Nash
  equilibrium $v(u )$ is \emph{strictly} increasing in $u$.
\end{proposition}

Monotonicity of Nash equilibria has already been established
(cf. Proposition~\ref{proposition:existence}); this follows from
general results in game theory. Strict monotonicity requires an
additional argument. Because this is relatively standard, we relegate
the proof to Appendix~\ref{sec:monotonicity}.

\subsection{Necessary Condition for a Nash
  Equilibrium}\label{ssec:nec-conditions}

\begin{proposition}\label{proposition:weierstrass}
If $v(u)$ is a  Nash equilibrium then at every $u \in
[\underline{u},\overline{u} ]$  the function $v$ 
satisfies  the functional equation
\begin{equation}\label{eq:weierstrass}
	E\psi (S_{n}+v (u))u=2 v (u).
\end{equation}
\end{proposition}

The necessary condition \eqref{eq:weierstrass} will be of central
importance in the analysis to follow.  The proof, which is both easy
and completely standard, is given in
Appendix~\ref{sec:weierstrass}. Observe, though, that the proposition
depends crucially on the differentiability of the payoff function $\Psi$; for
functions with discontinuities, such as
$\Psi =\mathbf{1}_{[0,\infty)}-\mathbf{1}_{(-\infty,0)}$, the argument
breaks.

\subsection{Discontinuities of Nash Equilibria}\label{ssec:discontinuities}

Because any Nash equilibrium $v (u)$ is a monotone functions of $u$,
it can have at most countably many discontinuities, all of which are
jumps. Clearly, any Nash equilibrium $v(u)$ is continuous at $u=0$,
because for very small $|u|$ an agent with value $u$ would never pay
more than $2|u|$ for votes, since this is the maximal change in the
agent's utility that could result. The following proposition asserts
that there is a lower bound on the magnitude $|v(u)|$ of a Nash
equilibrium near any point of discontinuity.

\begin{proposition}
  \label{proposition:discontinuities-necessary}
There exists $\Delta >0$  such that for all sufficiently large
$n$, at any point $u_{*}$ of discontinuity of a Nash equilibrium,
\begin{equation}\label{eq:discontinuities-size}
	\limsup_{u \rightarrow u_{*}}|v (u)| \geq \Delta .
\end{equation}
\end{proposition}

By the monotonicity of $v$, the limsup must be  the limit of
$v(u)$ as either $u \downarrow u_{*}$ or as $u \uparrow u_{*}$,
depending on whether $u_{*}$ is negative or positive. Thus, at any point
$u_{*}$ of discontinuity, $|v(u)|$ must jump to at least $\Delta$ as
$u$ passes through the value $u_{*}$.  \emph{A priori}, we have no
information about the size of a Nash equilibrium; however, in
sections~\ref{ssec:consensus} and \ref{ssec:size} we will show that large
values of $|v(u)|$ can only occur in the extreme tails of the sampling
distribution $F$ (in particular, within distance $O(n^{-3/2})$ of one
of the endpoints $\underline{u} , \overline{u} $).

The proof of Proposition~\ref{proposition:discontinuities-necessary}
will require two auxiliary lemmas, both of which will be of use later
in the paper.

\begin{lemma}\label{lemma:discontinuities-necessary}
Let $v (u)$ be a Nash equilibrium. If $v$ is discontinuous
at $u\in \left(\underline u,\overline u\right)$ then
\begin{equation}\label{eq:discontinuities-necessary}
	E\psi ' (\tilde{v}+S_{n})u=2
\end{equation}
for some $\tilde{v}\in [v_{-},v_{+}]$, where $v_{-}$ and $v_{+}$ are
the left and right limits of $v (u')$ as $u' \rightarrow u$.
\end{lemma}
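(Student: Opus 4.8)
I would prove this as a direct mean value theorem argument applied to the first-order condition \eqref{eq:weierstrass}. The observation that makes this work is that the quantity $E\psi'(\tilde v+S_n)u-2$ in \eqref{eq:discontinuities-necessary} is exactly the second derivative, with respect to the vote quantity $v$, of the expected payoff of an agent with utility $u$, namely $\Phi_u(v):=E\Psi(v+S_n)u-v^2$. So the lemma is really the assertion that $\Phi_u'$ has a zero strictly between the two one-sided limits of $v$ at a jump, which is Rolle's theorem once we know $\Phi_u'$ vanishes at both of those limits.

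\textbf{Key steps.} First I would record that $\Phi_u$ is $C^\infty$ in $v$: since $\Psi$ is $C^\infty$ and $\psi=\Psi'$ has compact support, $\psi$ and all of its derivatives are bounded, so differentiation under the expectation is legitimate, giving $\Phi_u'(v)=E\psi(v+S_n)u-2v$ and $\Phi_u''(v)=E\psi'(v+S_n)u-2$, both continuous (indeed $C^\infty$) in $v$. Second, I would show $\Phi_u'$ vanishes at both one-sided limits $v_-$ and $v_+$. By Lemma~\ref{proposition:monotonicity} the equilibrium $v$ is nondecreasing, so these limits exist, and since $v$ is discontinuous at $u$ we have $v_-<v_+$, so the interval $[v_-,v_+]$ is nondegenerate. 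By Lemma~\ref{proposition:weierstrass} the functional equation \eqref{eq:weierstrass} holds at \emph{every} utility value, i.e. $E\psi(v(u')+S_n)u'=2v(u')$ for all $u'$; letting $u'\uparrow u$ and then $u'\downarrow u$, and using that $\psi$ is bounded and continuous and that the law of $S_n$ does not depend on $u'$ (dominated convergence), one obtains $E\psi(v_\pm+S_n)u=2v_\pm$, that is, $\Phi_u'(v_-)=\Phi_u'(v_+)=0$. Finally, Rolle's theorem applied to the $C^1$ function $\Phi_u'$ on $[v_-,v_+]$ yields some $\tilde v\in(v_-,v_+)\subset[v_-,v_+]$ with $\Phi_u''(\tilde v)=0$, which is precisely \eqref{eq:discontinuities-necessary}.

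\textbf{Main obstacle.} Honestly there is very little here that is delicate: the only points needing care are the justification of differentiation under the integral sign (immediate from the compact support of $\psi$) and the passage to the limit in the functional equation (immediate from $\xnorm{\psi}_{\infty}<\infty$ and continuity of $\psi$). The substance of the lemma is realized only downstream: because $\psi'$ is supported on $(-\delta,\delta)$, condition \eqref{eq:discontinuities-necessary} forces the distribution of $S_n$ to place appreciable mass on $[-\delta,\delta]$, and combining this with the concentration bound of Lemma~\ref{proposition:anti-concentration} (and its refinement) will show that discontinuities of $v$ can occur only for $u$ in the extreme tails of $F$ and that jumps have a lower bound on their size.
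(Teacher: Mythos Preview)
Your proposal is correct and follows essentially the same route as the paper: both arguments first pass to the limit in the necessary condition \eqref{eq:weierstrass} to obtain $E\psi(v_{\pm}+S_{n})u=2v_{\pm}$, and then invoke a mean value argument. The paper subtracts the two equations, writes the difference as $u\int_{v_{-}}^{v_{+}}E\psi'(t+S_{n})\,dt$, and applies the mean value theorem for integrals, whereas you apply Rolle's theorem directly to $\Phi_{u}'$; these are the same step in different clothing.
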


\begin{proof}
  The necessary condition \eqref{eq:weierstrass} holds at all $u'$ in
  a neighborhood of $u$, so by monotonicity of $v$ and continuity of
  $\psi$, Equation \eqref{eq:weierstrass} must hold when $v (u)$ is
  replaced by either of $v_{\pm}$, that is,
  \begin{align*}
    	2v_{+}&=E\psi (v_{+}+S_{n})u \quad \text{and}\\
	2v_{-}&=E\psi (v_{-}+S_{n})u.
  \end{align*}
  Subtracting the second equation from the first and using the
  differentiability of $\psi$ we obtain
  \begin{displaymath}
    2v_{+}-2v_{-}=uE\int_{v_{-}}^{v_{+}} \psi ' (t+S_{n}) \,dt
    =u\int_{v_{-}}^{v_{+}}E \psi ' (t+S_{n}) \,dt . 
  \end{displaymath}
  The lemma now follows from the mean value theorem of calculus.
\end{proof}

\begin{lemma}\label{lemma:psi-psi-prime}
For any $\alpha >0$  there exists $\beta =\beta ( {\alpha})>0$ 
such that for any Nash equilibrium $v (u)$, any $\tilde{v}\in \zz{R}$, any
$u\in \left[\underline u,\overline u\right]$, and all  $n$,
\begin{align}\label{eq:psi-psi-prime}
	E|\psi ' (\tilde{v}+S_{n})u |\geq \alpha  \quad
	&\Longrightarrow \quad
	E\psi  (\tilde{v}+S_{n})|u| \geq \beta \;\;\text{and}\\
\notag 	E|\psi '' (\tilde{v}+S_{n})u |\geq \alpha  \quad
	&\Longrightarrow \quad
	E\psi  (\tilde{v}+S_{n})|u| \geq \beta.
\end{align}
\end{lemma}

\begin{proof}
 Recall that $\psi /2$ is a $C^{\infty}$ probability density with
support $[-\delta ,\delta]$, and  that $\psi$ is \emph{strictly}
positive in the open interval $(-\delta ,\delta)$. Consequently, on
any interval $J\subset (-\delta ,\delta)$ where $|\psi '|$ (or $|\psi
''|$) is bounded below by a positive number, so is $\psi$.

Fix $\epsilon >0$ so small that
$\epsilon \max\left(|\underline u|,\overline u\right) <\alpha /2$. In
order that $E|\psi ' (\tilde{v}+S_{n})u| \geq \alpha $, it must be the
case that the event $\{|\psi ' (\tilde{v}+S_{n})|\geq \epsilon \}$
contributes at least $\alpha /2$ to the expectation; hence,
\begin{displaymath}
  P\{|\psi ' (\tilde{v}+S_{n})|\geq \epsilon \}\geq \frac{\alpha
  }{2\xnorm{\psi '}_{\infty}\max(|\underline{u}|, \overline{u}}. 
\end{displaymath}
But on this event the random variable $\psi (\tilde{v}+S_{n})$ is
bounded below by a positive number $\eta =\eta_{\epsilon}$, so it
follows that
\begin{displaymath}
  E\psi  (\tilde{v}+S_{n})|u| \geq \frac{\eta \alpha
        }{2\xnorm{\psi '}_{\infty}}.
\end{displaymath}
A similar argument proves the corresponding result for $\psi ''$.
\end{proof}

\begin{proof}[Proof of
  Proposition~\ref{proposition:discontinuities-necessary}]
  Without loss of generality, we can assume that the point $u_{*}$ of
  discontinuity is positive, because if necessary we can exchange the
  roles of positive and negative voters. By
  Lemma~\ref{lemma:psi-psi-prime},
  equation~\eqref{eq:discontinuities-necessary} implies that
     \begin{displaymath}
       E\psi (\tilde{v}+S_{n})u_{*}\geq \beta=\beta(2)>0  
     \end{displaymath}
     for some $\tilde{v}\in[v_{-},v_{+}]$, where $v_{-},v_{+}$ are the
     left and right limits of $v(u)$ at $u_{*}$ and $\beta(\alpha)>0$
     is the constant in \eqref{eq:psi-psi-prime}.
     Consequently, either
     \begin{align*}
       E\psi(v_{-}+S_{n}) u_{*}& \geq \beta/2 \quad \textrm {or}\\
       (E\psi (\tilde{v}+S_{n}) -E\psi(v_{-}+S_{n}) )u_{*}& \geq \beta/2.
     \end{align*}
     In the former case, we must have $\lim_{u \rightarrow
       u_{*}-}2v(u)\geq \beta/2$, by the necessary condition
     \eqref{eq:weierstrass}. 
     In the latter case,
     \begin{align*}
       \beta/2 &\leq (E\psi (\tilde{v}+S_{n})-E\psi(S_{n}) )u_{*}\\
                              &=u_{*}E\int_{v_{-}}^{\tilde{v}} \psi'(v+S_{n}) \,dv \\
       &\leq \overline{u} \xnorm{\psi'}_{\infty} (\tilde{v}-v_{-}),
     \end{align*}
     which implies that $v_{+}\geq \tilde{v}\geq \tilde{v}-v_{-} \geq
     \beta/(4\xnorm{|\psi'|}_{\infty}\overline{u} )$.
\end{proof}

\subsection{Smoothness}\label{ssec:smoothness}

Because Nash equilibria are monotone, by
Lemma~\ref{proposition:existence}, they are necessarily differentiable
almost everywhere. The following proposition gives more precise
information about points of non-differentiability. Together with the
results of sections~\ref{ssec:consensus} and \ref{ssec:size} below,
this proposition will imply that Nash equilibria must be smooth except
in the extreme tails of $F$.

\begin{proposition}\label{proposition:smoothness}
  A Nash equilibrium $v(u)$ is $C^{\infty}$ at every $u\in
  (\underline{u} ,\overline{u} )$ except those $u$ at which $v(u)$ is
  discontinuous or at which
  \begin{equation}
    \label{eq:bad-u}
    E\psi'(v(u)+S_{n})u=2.
  \end{equation}
  At any point $u\in(\underline{u} ,\overline{u})$ where $v(u)$ is
  differentiable, the first and second derivatives are determined  by
  the equations
  \begin{align}
    \label{eq:first-derivative}
    2v'(u)&=E\psi(v(u)+S_{n})+E\psi'(v(u)+S_{n})v'(u)u,\\
    \label{eq:second-derivative}
    2v''(u)&=E\psi'(v(u)+S_{n})(2v'(u)+v''(u)u)+E\psi''(v(u)+S_{n})v'(u)^{2}u.    
  \end{align}
  
\end{proposition}

\begin{proof}
  The necessary condition \eqref{eq:weierstrass} can be written as
  $H(v(u),u)=0$ where $H$ is the function of two variables
  defined by
  \begin{displaymath}
    H(v,u)=E\psi (v+S_{n})u - 2v.
  \end{displaymath}
  Since $\psi$ is by hypothesis $C^{\infty}$ and has compact support,
  $H$ is also $C^{\infty}$, with first partial derivative
  \begin{displaymath}
    \frac{\partial H}{\partial v}=E\psi'(v+S_{n})u -2
  \end{displaymath}
  Hence, the implicit function theorem of calculus implies that every point
  $(u,v(u))$ where equation \eqref{eq:bad-u} does not hold has an open
  neighborhood in the plane in which the solution set of the equation
  $H(v,u)=0$ defines a $C^{\infty}$ function $v(u)$. The necessary
  condition  \eqref{eq:weierstrass} implies that this function must
  coincide locally with the Nash equilibrium unless the equilibrium has a
  discontinuity at $u$.

  Since the left side of equation \eqref{eq:bad-u} is continuous in
  $u$, any point $u$ where \eqref{eq:bad-u} fails has an open
  neighborhood in which \eqref{eq:bad-u} fails, and hence in which
  $v(u)$ is differentiable.  The equations \eqref{eq:first-derivative}
  and \eqref{eq:second-derivative} for the first two derivatives
  follow from \eqref{eq:weierstrass} by the chain rule
  (differentiation under the expectation is justified because $\psi$
  is $C^{\infty}$ with bounded support).
\end{proof}

The differential equation \eqref{eq:first-derivative} can be rewritten
as
\begin{equation}
    \label{eq:1st-derivative-b}
    v'(u)=\frac{E\psi(v(u)+S_{n})}{2-E\psi'(v(u)+S_{n})u}.
\end{equation}
 From this it is evident that the derivative $v'(u)$ could blow up in
 a neighborhood of a point $u$ where $E\psi'(v(u)+S_{n})u=2$. The
 following corollary shows that this cannot occur in any region where
 the function $|v(u)|$ remains suitably small. 

\begin{corollary}
  \label{corollary:big-v'-big-v}
   There exists a constant $\beta>0$ independent of $n$
   and of the particular Nash equilibrium $v(u)$ such that
  for all $u\in [\underline{u} ,-1]\cup [1,\overline{u} ]$,
  \begin{equation}
    \label{eq:big-v'-big-v}
    \limsup _{u' \rightarrow u} |v(u')|\leq \beta \quad
    \Longrightarrow \quad |v'(u)|\leq \xnorm{\psi}_{\infty}. 
  \end{equation}
\end{corollary}

\begin{proof}
  The constant $\beta$ can be chosen smaller than the discontinuity
  threshold $\Delta$
  (cf. Proposition~\ref{proposition:discontinuities-necessary}), in
  which case the inequality
  $\limsup _{u' \rightarrow u} |v(u')|\leq \beta$ implies that
  $v(\cdot)$ is continuous at $u$.  Consequently, $v(\cdot)$ is
  differentiable at $u$ unless the denominator in
  \eqref{eq:1st-derivative-b} is $0$, by
  Proposition~\ref{proposition:smoothness}. 

  Assume now that $|u|\geq 1$.  By
  the necessary condition \eqref{eq:weierstrass}, if
  $|2v(u)|\leq \beta$ then $E\psi(v(u)+S_{n})\leq \beta$, and by
  Lemma~\ref{lemma:psi-psi-prime}, if $\beta >0$ is sufficiently
  small, then $E|\psi'(v(u)+S_{n})u|< 1$, and so the denominator on
  the right side of \eqref{eq:1st-derivative-b} is greater than
  $1$. Thus, \eqref{eq:1st-derivative-b} implies that if
  $|v(u)|<\beta$ then
  \begin{displaymath}
    |v'(u)|\leq E\psi(v(u)+S_{n})\leq \xnorm{\psi}_{\infty}.
  \end{displaymath}
\end{proof}

\subsection{An \emph{A Priori} Bound on Nash Equilibria}

\begin{lemma}\label{lemma:aPrioriLB}
  For all sufficiently large $n$, in any Nash equilibrium,
\begin{equation}\label{eq:first-lower-bound}
      \max _{\underline{u}_{1/n}\leq u \leq \overline{u}_{1/n}}|v(u)| \geq \frac{\delta}{4n}.
\end{equation} 
\end{lemma}

\begin{proof}
  By Proposition~\ref{proposition:strict-monotonicity}, the maximum occurs
  at one of the two endpoints $\underline{u} _{1/n}$ or
  $\overline{u} _{1/n}$.  Suppose that the inequality
  \eqref{eq:first-lower-bound} were not true; then an agent with value
  $u \in[1,\overline{u}_{1/n}]$ would purchase at most $\delta/4n$
  votes (in absolute value). On the other hand, by the necessary
  condition \eqref{eq:weierstrass}, the number of votes $v(u)$ bought
  by this agent is  $u \times$ the agent's marginal pivotality
  $\frac{1}{2} E\psi(v(u)+S_{n})$. Consider the event that all $n$ of the
  remaining agents have values in
  $[\underline{u} _{1/n},\overline{u} _{1/n}]$: the probability of
  this event is
  \begin{align*}
    \left(
      1- \int_{\underline{u} }^{\underline{u} _{1/n}}f(t)\,dt -
      \int_{\overline{u} _{1/n}}^{\overline{u} }f(t)\,dt
    \right)^{n} 
    &\approx (1-f(\underline{u} )/n - f(\overline{u} )/n)^{n}\\
    &\approx \exp\{-f (\underline u)-f (\overline u)\}:=p>0,
  \end{align*}
  and on this event, the vote total  would not exceed $\delta/4
  +\delta/(4n)\leq \delta/2$ in absolute value. Consequently, the
  marginal pivotality satisfies
  \begin{displaymath}
    E\psi(v(u)+S_{n})\geq p \min _{x\in
      [-\delta/2,\delta/2]}|\psi(x)|, 
  \end{displaymath}
   and so
  \begin{displaymath}
    2|v(u)|\geq |u|  p E\psi(v(u)+S_{n}) \geq p\min _{x\in [-\delta/2,\delta/2]}|\psi(x)|.
  \end{displaymath}
  Since $\psi$ is bounded away from $0$ on the interval $[-\delta/2,\delta/2]$,
  for large $n$ this would contradict our supposition that
  $\max |v(u)|$ on the interval $u\in [\underline{u}
  _{1/n},\overline{u} _{1/n}]$ is below $\delta/(4n)$.
\end{proof}

\section{Weak Consensus Estimates}\label{ssec:consensus} 

According to Proposition  \ref{proposition:weierstrass}, in any Nash
equilibrium the number of votes $v (u)$ an agent with utility $u$
purchases is $u\times$ the \emph{marginal pivotality}
$\frac{1}{2} E\psi (v(u)+S_{n})$. When the sample size $N=n+1$ is large, the
effect of a single voter's contribution $v(u)$ to the vote total
$v(u)+S_{n}$ should be small, and so one expects that an agent with
value $u$ should have approximately the same marginal pivotality as
an agent with value $u'$. However, as we will show later, this is not always
true: an agent with value $u$ in the extreme tails of the distribution
$F$ can, in some circumstances, have a drastically different marginal
pivotality than agents with values in the bulk of the distribution.

Nevertheless, agents with values in the bulk of the distribution do
have nearly the same marginal pivotalities, as the following lemma
shows. This lemma will figure critically in the proof of
Lemma~\ref{proposition:concentrate} (the key anti-concentration estimate)
in section~\ref{ssec:size} below.

\begin{lemma}\label{lemma:consensusA}
For any $\epsilon >0$ and any $0<\alpha <1$,  if $n$
is sufficiently large  then in every
Nash equilibrium $v (\cdot)$,
\begin{equation}\label{eq:consensus}
  1-\alpha\leq\frac{E\psi (v (u)+S_{n}) }{E\psi (v (u')+S_{n})}\leq(1-\alpha)^{-1}
\end{equation}
for any two values $u,u'$ not within distance $\epsilon$ of either
$\underline u$, or $\overline u$, or $0$. Furthermore, there exists
$C>0$ such that  for any $\beta>0$, if $n$ is sufficiently large then
in any equilibrium, for all
$u\in[\underline{u} ,\overline{u} ]\setminus (-\epsilon,\epsilon )$, 
\begin{align}
  \label{eq:tail-consensus+}
	\frac{E\psi (v (u)+S_{n})}{E\psi (v (\overline u_{\beta
	n^{-3/2}})+S_{n})}&\geq \frac{C\beta}{n^{1/2}}
                            \quad \text{and}\\
  \label{eq:tail-consensus-}
		\frac{E\psi (v (u)+S_{n})}{E\psi (v (\underline u _{\beta
	n^{-3/2}})+S_{n})}&\geq \frac{C\beta}{n^{1/2}}.
\end{align}
\end{lemma}

\begin{proof}
  Let $J_{1},J_{2},\dotsc ,J_{k}$ be any partition of the interval
  $[\underline u ,\overline u]$ into non-overlapping Borel sets of
  positive Lebesgue measure, and for each index $i$ let $M_{i}$ be the
  number of agents (in the entire sample of size $N=n+1$) with values
  in the set $J_{i}$.  The random vector $(M_{1},M_{2},\dots ,M_{k})$
  has the multinomial distribution
\[
	P (M_{i}=m_{i} \; \text{for each} \; i\leq k)
	=\frac{(n+1)!}{m_{1}!m_{2}!\dotsb m_{k}!} \prod_{i=1}^{k} p_{i}^{m_{i}}
\]
where 
\[
	p_{i}= P (U_{1}\in J_{i})=\int_{J_{i}} f (u)\,du.
\]
Conditional on the event that $M_{i}=m_{i}$ for each $i\leq k$, the
sample $\{ U_{1},U_{2},\dotsc ,U_{n}\}$ has the same distribution as a
stratified random sample gotten by choosing $m_{i}$ elements from the
set $J_{i}$ according to the density $f\mathbf{1}_{J_{i}}/p_{i}$
for each $i\leq k$. Consequently, for any choice of index $i\leq k$, 
\begin{multline}\label{eq:stratify-cond}
	E[\psi (v (U_{n+1})+S_{n})\,|\, U_{n+1}\in J_{i}]=\\
	\sum_{m_{1},m_{2},\dotsc ,m_{k}}
	\frac{n!}{m_{1}!m_{2}!\dotsb m_{k}!} 
	\left( \prod_{i=1}^{k}
	p_{i}^{m_{i}}\right) E_{*} (m_{1},m_{2},\dotsc ,m_{i-1} ,m_{i}+1, m_{i+1},\dotsc m_{k})
\end{multline}
where
\[
	E_{*} (m_{1},m_{2},\dotsc ,m_{k})=E (\psi (S_{n+1})\,|\, M_{i}=m_{i} \; \forall \,i\leq k)
\]
for any set of nonnegative integers $m_{j}$ that sum to $n+1$. Note
that the conditional expectations $E_{*} (m_{1},m_{2},\dotsc ,m_{k})$ are all
nonnegative, and are bounded above by $\xnorm{\psi}_{\infty}$.

The proof of the lemma will be based on systematic exploitation of
equation \eqref{eq:stratify-cond}.  To relate the conditional
expectation on the left side of \eqref{eq:stratify-cond} to the marginal pivotality
$\frac{1}{2} E\psi(v(u)+S_{n})$, we appeal to the monotonicity of $v (\cdot)$. Fix
$u\in (\epsilon ,\overline u]$ and let $J=[u-\alpha u,u]$ and
$J'=[u,u+\alpha u]\cap [u, \overline{u} ]$, where $\alpha >0$ is small enough that
$J\subset [0 ,\overline u]$; then by \eqref{eq:weierstrass} and
the monotonicity of $v$,
\begin{align*}
	 (1-\alpha ) E\psi
	(v (u')+S_{n})&\leq E\psi (v (u)+S_{n})  \quad \text{for all} \; u'\in J\quad \text{and}\\
	(1+\alpha ) E\psi
	(v (u'')+S_{n})&\geq E\psi (v (u)+S_{n})\quad \text{for all} \; u''\in J'.
\end{align*}
Consequently,
\begin{align}\label{eq:deconditioning}
		(1-\alpha)E[\psi ( v (U_{n+1})+S_{n})\,|\, U_{n+1}\in
  J]&\leq E\psi (v (u)+S_{n})\quad \text{and}\\ 
\notag 	(1+\alpha)E[\psi ( v (U_{n+1})+S_{n})\,|\, U_{n+1}\in J']&\geq E\psi (v (u)+S_{n}).
\end{align}
A similar argument shows that for values
$u\in[\underline{u} , -\epsilon]$, the inequalities
\eqref{eq:deconditioning} hold with $J=[u,u-\alpha u]$ and
$J'=[u+u\alpha ,u]\cap[\underline{u} ,u]$.  Thus, to prove the
inequalities \eqref{eq:consensus}, \eqref{eq:tail-consensus+}, and
\eqref{eq:tail-consensus-}, it suffices to prove analogous
inequalities for conditional expectations of the form
\eqref{eq:stratify-cond}.

\begin{proof}
  [Proof of inequalities \eqref{eq:tail-consensus+} and
  \eqref{eq:tail-consensus-}]
  There are two cases for each inequality, depending on whether $u\geq
  \epsilon$ or $u\leq -\epsilon$; by symmetry (i.e., reversing the
  roles of \emph{positive} and \emph{negative} values and votes), it
  is enough to consider only the case $u \geq \epsilon$. Moreover, it
  suffices to prove only one of the inequalities \eqref{eq:tail-consensus+} and
  \eqref{eq:tail-consensus-}, specifically, the one for which the
  denominator on the left side is larger, because the other will then
  follow trivially. For the sake of exposition, let's assume that
  \begin{displaymath}
    E\psi (v (\overline u_{\beta
	n^{-3/2}} +S_{n})\geq E\psi (v (\underline u _{\beta
	n^{-3/2}}+S_{n});
    \end{displaymath}
    the other case is similar. Under this assumption,
    \begin{equation}\label{eq:big-denominator}
      E\psi (v (\overline u_{\beta
	n^{-3/2}}+S_{n} )\geq \gamma n^{-1}
    \end{equation}
    for some $\gamma >0$ independent of $n$, by the \emph{a priori}
    bound of Lemma~\ref{lemma:aPrioriLB}. 
    
  Fix 
  $\beta>0$ and $\alpha\in (0,1)$,
  and let
  \begin{align*}
    J_{1}&=[u-\alpha u,u],\\
    J_{2}&=[\overline{u} _{\beta n^{-3/2}},\overline{u} ], \quad
           \textrm {and}\\
    J_{3}&= [\underline{u} ,\overline{u} ]\setminus (J_{1}\cup J_{2}).
  \end{align*}
  By 
  \eqref{eq:deconditioning}, to prove \eqref{eq:tail-consensus+} it
  suffices to show that 
  \begin{equation}\label{eq:tail-comparability}
	\frac{E\psi ( v (U_{n+1})+S_{n})\,|\, U_{n+1}\in J_{1}]}{E\psi
	( v (U_{n+1})+S_{n})\,|\, U_{n+1}\in J_{2}]} \geq \frac{C'\beta}{n^{1/2}},
    \end{equation}
    for some constant $C'>0$ independent of $n$ and $\beta$.
    Now \eqref{eq:big-denominator}, together with inequality
    \eqref{eq:deconditioning} and the monotonicity of $v$, implies
    that the denominator is of size at least
    \begin{displaymath}
      E\psi ( v (U_{n+1})+S_{n})\,|\, U_{n+1}\in J_{2}]\geq \gamma ' n^{-1};
    \end{displaymath}
    consequently, in proving \eqref{eq:tail-comparability} we can
    ignore errors that decay exponentially in $n$.

    For brevity, denote the numerator on the left side of
    \eqref{eq:tail-comparability} by $E_{1}$ and the denominator
    by $E_{2}$. Both $E_{1}$ and $E_{2}$ can be expressed as sums of
    the form \eqref{eq:stratify-cond} (with $k=3$). For any triple
    $(m_{1},m_{2},m_{3})$ with $m_{1}\geq 1$ and $m_{2}\geq1$, terms with factor $E_{*}
    (m_{1},m_{2},m_{3})$ occur in both sums, but with different coefficients
    \begin{align*}
      		&\frac{n!}{( m_{1}-1)!m_{2}!m_{3}!}
		p_{1}^{m_{1}-1}p_{2}^{m_{2}}p_{3}^{m_{3}} \quad
		\text{in}\; E_{1}, \quad \text{and}\\
	&	\frac{n!}{m_{1}!( m_{2}-1)!m_{3}!}
		p_{1}^{m_{1}}p_{2}^{m_{2}-1}p_{3}^{m_{3}}	 \quad 
		\text{in}\;E_2,
    \end{align*}
    where $p_{1},p_{2},p_{3}$ are the $F-$probabilities of
    $J_{1},J_{2},J_{3}$, respectively. The ratio of these coefficients
    ($E_{1}$ to $E_{2}$) is
    \begin{displaymath}
      \frac{m_{1}}{m_{2}}\frac{p_{2}}{p_{1}}\sim
	\frac{m_{1}\beta f (\overline u)}{p_{1}m_{2}n^{3/2}}.
      \end{displaymath}
      We will show that  triples $(m_{1},m_{2},m_{3})$ with either
      $m_{1}\leq np_{1}/2$ or $m_{2}\geq 5$ can be ignored; for all
      other triples, the coefficient ratio is, for large $n$, at least
      \begin{displaymath}
        \frac{m_{1}}{m_{2}}\frac{p_{2}}{p_{1}}\geq
        \frac{\beta f(\overline{u} )}{(2\times 2\times 5)n^{1/2}}
      \end{displaymath}
      This will prove that inequality \eqref{eq:tail-comparability}
      holds with $C'= f(\overline{u})/(40 n^{1/2})$, provided $n$ is
      sufficiently large.

      Consider first those triples $(m_{1},m_{2},m_{3})$ with
      $m_{1}\leq np_{1}/2$. In both sums $E_{1}$ and $E_{2}$, the
      total contribution of the terms with factors
      $E_{*}(m_{1},m_2,m_3 )$ where $m_{1}\leq np_{1}/2$ is at most
      $\xnorm{\psi}_{\infty}$ times the probability that a random
      sample of size $n$ from the distribution $F$ has fewer than
      $1+np_{1}/2$ elements in the interval $J_{1}$. By Hoeffding's
      inequality (cf. Appendix~\ref{sec:hoeffding}), this probability
      is decays exponentially in $n$; therefore, the total
      contribution of these terms to either $E_{i}$ is negligible
      compared to $E_{i}$ when $n$ is large.

      Finally, consider those triples with $m_{2}\geq 5$. In  $E_{1}$,
      these triples correspond to samples with at least $5$ elements
      in the interval $J_{2}$, while in $E_{2}$ they correspond to
      samples with at least $4$ entries in $J_{2}$. The probability
      that a random sample of size $n$ contains at least $k$ elements
      in an interval of length $\beta n^{-3/2}$  is no larger than
      \begin{displaymath}
        \binom{n}{k} (\beta n^{-3/2}\xnorm{f}_{\infty})^{k}.
      \end{displaymath}
      For $k=4$ this is $O(n^{-2})$, which is of smaller order of
      magnitude than $E_{2}$ (which is $\geq \gamma'/n$, as noted
      above); and for $k=5$ it is $O(n^{-5/2})$, which is of smaller
      order than $E_{1}$, as this (by the argument above) is of size
      at least $C'E_{2}/(40 n^{1/2})$.

    \end{proof}

    \begin{proof}[Proof of inequality \eqref{eq:consensus}]
      The proof of inequality \eqref{eq:consensus} follows a similar
      line. Clearly, it is enough to prove the lower bound in
      \eqref{eq:consensus}, because the upper bound will then follow
      by reversing the roles of $u$ and $u'$.  By
      inequalities \eqref{eq:deconditioning}, it suffices to show that
      for any $\alpha >0$ and any two fixed, non-overlapping intervals
      $J_{1},J_{2}$ of positive length contained in
      $[\underline{u} _{\epsilon}, -\epsilon]\cup[\epsilon ,\overline u -\epsilon]$,
      \begin{displaymath}
        \frac{E[\psi ( v (U_{n+1})+S_{n})\,|\, U_{n+1}\in
	J_{1}]}{E[\psi ( v (U_{n+1})+S_{n})\,|\, U_{n+1}\in J_{2}]}
	\geq 1-\alpha 
      \end{displaymath}
      provided $n$ is sufficiently large. In proving this, we can
      ignore terms of exponentially small size, because
      \eqref{eq:tail-consensus+} and \eqref{eq:tail-consensus-},
      together with Lemma~\ref{lemma:aPrioriLB}, imply that both
      numerator and denominator are of magnitude at least
      (constant)$\times n^{-3/2}$. 

      As in the proof of inequality \eqref{eq:tail-consensus+}, denote
      the numerator and denominator by $E_{1}$ and $E_{2}$,
      respectively.  Each of the conditional expectations
      $E_{1},E_{2}$ has a representation \eqref{eq:stratify-cond} with
      $k=3$, where $J_{3}$ is the complement of $J_{1}\cup J_{2}$ in
      $[\underline u ,\overline u]$. For any $\gamma >0$, the
      contribution to either $E_{1}$ or $E_{2}$ from terms of
      \eqref{eq:stratify-cond} for which
      $|m_{i}-np_{i}|\geq n\gamma p_{i}$, for either $i=1$ or $i=2$,
      is exponentially small, by Hoeffding's inequality
      (cf. Appendix~\ref{sec:hoeffding}), and hence can be
      ignored. For any triple $(m_{1},m_{2},m_{3})$ with $m_{i}\geq 1$
      and $m_{2}\geq 1$, terms with factor $E_{*} (m_{1},m_{2},m_{3})$
      occur in both $E_{1}$ and $E_{2}$, with the same coefficients as
      in the proof of \eqref{eq:tail-consensus+}.  The ratio of these
      coefficients ($E_{1}$ to $E_{2}$) is
      \begin{displaymath}
        \frac{m_{1}}{m_{2}} \frac{p_{2}}{p_{1}}.
      \end{displaymath}
      Since only those triples with $|m_{i}-np_{i}|< n\gamma p_{i}$
      contribute substantially to the expectations, it follows that
      for large $n$,
      \begin{displaymath}
        \frac{E_{1}}{E_{2}}\geq \frac{1+2\gamma}{1-2\gamma }.
      \end{displaymath}
      Clearly, if $\gamma >0$ is sufficiently small then the lower bound will
      exceed $1-\alpha$.
      \end{proof}

\end{proof}

\section{Concentration and size constraints}\label{ssec:size} Because
the vote total $S_{n}$ is the sum of independent, identically
distributed random variables $v (U_{i})$ (albeit with unknown
distribution), its distribution is subject to concentration
restrictions, such as those imposed by the following lemma.

\begin{lemma}\label{proposition:anti-concentration}
For any $\epsilon >0$ and any $\beta>0$ there exists $\gamma<\infty$
such that for all sufficiently large values of $n $ and any Nash
equilibrium $v (u)$, if
\begin{equation}\label{eq:v-big-norm}
	\xnorm{v}_{\infty}\geq \epsilon 
\end{equation}
then
\begin{equation}\label{eq:mid-range-bound}
	|v (u)|\leq \frac{\gamma}{\sqrt{n}} \quad \text{for all} \;
	u\in [\underline u_{\beta},\overline u_{\beta}].
\end{equation}
\end{lemma}

We will deduce Lemma~\ref{proposition:anti-concentration} from the following
general fact about sums of independent, identically distributed random
variables.

\begin{proposition}\label{proposition:llt}
  Fix $\alpha >0$.  For any $\epsilon >0$ and any $C<\infty$ there
  exist constants $C'>0$ and $n'<\infty$ such that the following
  statement is true: if $n\geq n'$ and $Y_{1},Y_{2},\dotsc ,Y_{n}$ are
  independent, identically distributed random variables such that
\begin{equation}\label{eq:var-to-third}
	E|Y_{1}-EY_{1}|^{3}\leq C\text{\rm var} (Y_{1})^{3/2}
	\quad \text{and} \quad \text{\rm var} (Y_{1}) \geq C'/n
\end{equation}
then for every interval $J\subset \zz{R}$ of length $\alpha $ or greater, 
\begin{equation}\label{eq:small-local-prob}
	P\left\{\sum_{i=1}^{n}Y_{i}\in J \right\}\leq \epsilon |J|/\alpha  .
\end{equation}
\end{proposition}

The proof of this proposition, a routine exercise in the use of the
Berry-Esseen theorem, is relegated to  Appendix~\ref{sec:berry-esseen}.

\begin{proof}
  [Proof of Lemma~\ref{proposition:anti-concentration}]
  By the monotonicity of Nash equilibria and 
  the necessary condition \eqref{eq:weierstrass}, if a Nash
  equilibrium $v(u)$ has supremum norm $\xnorm{v}_{\infty}>\epsilon$
  then for at least one of the endpoints $u= \underline u$ or $u=\overline u$,
  \begin{equation}\label{eq:big-concentration}
    |u| E\psi (v(u)+S_{n}) >\epsilon ,
  \end{equation}
  and so
  \begin{displaymath}
    P \xset {S_{n}+v(u)\in [-\delta,\delta]}\geq  \frac{2\epsilon}{|u|
      \xnorm{\psi}_{\infty}} \geq \epsilon':= \min_{u=\underline u
      ,\overline u }\frac{2\epsilon}{|u| \xnorm{\psi}_{\infty}}.
  \end{displaymath}
    We will use Proposition~\ref{proposition:llt} to show that such a
    high concentration of probability in an interval of length
    $\alpha=2\delta$ is impossible unless the function $v$ is bounded
    above by $\gamma/\sqrt{n}$ in the interval $[\underline u
    _{\beta},\overline u _{\beta}]$.

    We can assume without loss of
    generality that $\beta >0$ is sufficiently small  that the
    interval $[-2\beta,2\beta]$ is contained in the open interval
    $(\underline{u} _{\beta},\overline{u} _{\beta})$, because the
    condition \eqref{eq:mid-range-bound} becomes less stringent as
    $\beta$ increases.     
    For such $\beta$, the intervals $[\underline{u}
    _{\beta},-2\beta]$ and $[2\beta, \overline{u} _{\beta}]$ both have
    positive length; hence, since the
    density $f$ is bounded below, it follows that for suitable
    constants $C>0$ and $\frac{1}{2} >p>0$ there
    are intervals
    \begin{align*}
      J_{+}&=[u_{+},\overline     u_{\beta}]\subset    (2\beta,\overline
             u   ) \quad \text{and}\\
      J_{-}&=[u_{-},\underline u_{\beta}]\subset (\underline u
             ,-2\beta),
    \end{align*}
    both with $F-$probability $p$.

      \bigskip
      \noindent
      \textbf{Claim:} If $U$ has density $f$, then for a suitable
      constant $C'<\infty$ the \emph{conditional} distribution of
      $v(U)$ given the event $G:=\xset {U\in J_{+}\cup J_{-}}$
      satisfies the moment conditions
      \begin{gather}
        \label{eq:check2}
        E(|v(U)-E(v(U)\,|\, G) |^{2}\,|\,G) \geq \frac{1}{2}\min
        (v(u_{+})^{2},v(u_{-})^{2}) \quad \textrm {and}\\
        \label{eq:check3}
        E(|v(U)-E(v(U)\,|\, G) |^{3}\,|\,
        G) \leq C' E(|v(U)-E(v(U)\,|\, G) |^{2}\,|\,G) ^{3/2}
      \end{gather}

      \begin{proof}[Proof of the Claim]
        By Lemma~\ref{proposition:no-zeros}, any Nash
        equilibrium $v(u)$ is nondecreasing, strictly 
        positive for $u>0$, and strictly negative for
        $u<0$. Consequently, $v(U)\geq v(u_{+})>0$ on the event
        $G\cap \xset {U>0}$, and $v(U)\leq v(u_{-})<0$ on the event
        $G\cap \xset {U\leq 0}$.  By construction, the conditional
        probability that $U>0$ given the event $G$ is
        $p/(2p)=1/2$. Thus,
        \begin{align*}
          E(|v(U)-E(v(U)\,|\, G) |^{2}\,|\,G) &\geq E(|v(U)
                                                |^{2}
                                                \mathbf{1}_{\xset {U
                                                >0}}\,|\,G)\geq
                                                \frac{1}{2} 
                                                v(u_{+})^{2} \quad
                                                \textrm {if} \;
                                                E(v(U)\,|\, G)\leq
                                                0,\\
          E(|v(U)-E(v(U)\,|\, G) |^{2}\,|\,G) &\geq E(|v(U)
                                                |^{2}
                                                \mathbf{1}_{\xset {U
                                                \leq 0}}\,|\,G)\geq
                                                \frac{1}{2} 
                                                v(u_{-})^{2} \quad
                                                \textrm {if} \;
                                                E(v(U)\,|\, G)\geq 0.                                                
        \end{align*}
        This proves \eqref{eq:check2}.

        Denote by $v_{*}=\min (v(u_{+}),|v(u_{-})|)>0$ the minimum
        value of $|v(u)|$ on the set $u\in J_{+}\cup J_{-}$.
        By Lemma~\ref{lemma:consensusA}, there is a
        constant $A>0$ depending on $\beta >0$ but not on $n$ such
        that for any Nash equilibrium,
    \begin{displaymath}
      v_{*}:= \min_{ u\in J_{+}\cup J_{-}.}|v(u)| \geq A \max _{u\in J_{+}\cup J_{-}}|v(u)|.
    \end{displaymath}
    Clearly, the conditional expectation $E(v(U)\,|\, G)$ falls in the
    interval $[-v_{*},v_{*}]$, so deviations from the conditional
    expectation are bounded by $2 v_{*}$. Therefore,
    \begin{displaymath}
      E(|v(U)-E(v(U)\,|\, G) |^{3}\,|\,
        G) \leq 8  \max _{u\in J_{+}\cup J_{-}}|v(u)|^{3} \leq 8 (v_{*}/A)^{3}.
      \end{displaymath}
      But \eqref{eq:check2} implies that  the conditional variance of
      $v(U)$ is at least $\frac{1}{2} v_{*}^{2}$, so \eqref{eq:check3}
      follows, with $C'= 8 \cdot 2^{3/2} \cdot A^{-2}$.
      \end{proof}

      Let $M$ be the number of points $U_{i}$ in the sample
      $U_{1},U_{2},\dotsc ,U_{n}$ that fall in $J_{+}\cup J_{-}$, and
      let $S^{*}_{n}$ be the sum of the votes $v (U_{i})$ for those
      agents $i$ whose values $U_{i}$ fall in this range. By
      construction, $M$ has the binomial-$(n,2p)$ distribution.
      Moreover, conditional on the event $M=m$ and
      $S_{n}-S^{*}_{n}=w$, the random variable $S^{*}_{n}$ is the sum
      of $m$ independent random variables $Y_{i}$ whose common
      distribution is the conditional distribution of $v(U)$ given the
      event $G=\xset {U\in J_{+}\cup J_{-}}$. By the Claim above, the third
      moment hypothesis of Proposition~\ref{proposition:llt} is met,
      so for any $\epsilon' >0$ there exists
      $C''=C''(\epsilon')<\infty$ such that if
      \begin{equation}\label{eq:too-big}
        \max (v(\overline{u} _{\beta}),|v(\underline{u}
        _{\beta})|)\geq C''/\sqrt{n}
      \end{equation}
       then the conditional probability, given
      $M=m\geq n p$ and $S_{n}-S^{*}_{n}=w$, that $S^{*}_{n}$ lies in
      any interval of length $4\delta$ is bounded above by
      $\epsilon'$. This in turn implies 
\[
	E\psi (S_{n}+v)\leq \xnorm{\psi}_{\infty}\left( P\{M\leq np \}
	+\epsilon '\right)  \quad \textrm {for all} \; v\in \mathbb{R}.
\]
Since $P\{M\leq n p \}\rightarrow 0$ as $n \rightarrow \infty$, by the
weak law of large numbers, it then follows that for all sufficiently
large $n$,
\begin{displaymath}
  E\psi (S_{n}+v)\leq 2\epsilon' \xnorm{\psi}_{\infty} .
\end{displaymath}
For small $\epsilon'$, this bound is smaller than $\epsilon/\max
(\overline{u} _{\beta}, |\underline{u} _{\beta}|)$, so the hypothesis
\eqref{eq:too-big} is incompatiable with \eqref{eq:big-concentration}. 

\end{proof}

Lemma~\ref{proposition:anti-concentration} implies that for any
$\beta >0$, if $n$ is sufficiently large then for any Nash
equilibrium $v (u)$, the absolute value $|v (u)|$ must be small except
at  values $u$ within distance $\beta $ of one of
the endpoints $\underline u ,\overline u$. The following lemma
improves this bound to the extreme tails of the distribution.

\begin{lemma}\label{proposition:size}
For any $0<\epsilon<\infty $ there exists $\alpha >0$ such that for
all sufficiently large $n$, every Nash equilibrium $v (u)$ satisfies
the inequality
\begin{equation}\label{eq:size}
	|v (u)|\leq \epsilon 
\end{equation}
for all $u$ at distance greater than $\alpha n^{-3/2}$ from both
endpoints $\underline u ,\overline u$. 
\end{lemma}

\begin{proof}
  It suffices, by symmetry, to consider only values $u >0$.
  Lemma~\ref{proposition:anti-concentration} implies that for any
  $\epsilon >0$ there exists $\gamma >0$ such that if $n$ is
  sufficiently large and $\xnorm{v}_{\infty}\geq \epsilon$ then
  $2|v (u)|\leq \gamma /\sqrt{n}$ for all
  $u\in [\underline u_{\epsilon},\overline u_{\epsilon}]$. Hence, by
  the necessary condition \eqref{eq:weierstrass},
  \begin{displaymath}
    E\psi (v (\overline u _{\epsilon })+S_{n})\overline u _{\epsilon }\leq \frac{\gamma
	}{\sqrt{n}}.
      \end{displaymath}
      But by Lemma~\ref{lemma:consensusA} (cf. inequalities
      \eqref{eq:tail-consensus+} and \eqref{eq:tail-consensus-}),
      there exists $C>0$ such that for any $\alpha >0$, if $n$ is
      sufficiently large then for every Nash equilibrium $v$,
      \begin{displaymath}
        	\frac{E\psi (v (u)+S_{n})}{E\psi (v (\overline u_{\alpha
	n^{-3/2}})+S_{n})}\geq \frac{C\alpha}{\sqrt{n}} 
  \end{displaymath}
  for all $u\in[\underline{u} _{\epsilon},\overline{u} _{\epsilon}]$,
  and in particular for $u=\overline u_{\epsilon }$.  The last two
  displayed inequalities now combine to yield
  \begin{displaymath}
    E\psi (v (\overline u_{\alpha n^{-3/2}})+S_{n})\leq
  \frac{\gamma}{C\alpha}
	\quad \Longrightarrow \quad 2v (\overline u_{\alpha
          n^{-3/2}})\leq \frac{\gamma \overline{u} }{ C\alpha}.
      \end{displaymath}
      Thus, the inequality $2v (u)>\epsilon$ can hold at some
      $u = u_{\alpha n^{-3/2}}$ only if
      $\alpha < (\gamma \overline{u} )/(C \epsilon)$.
\end{proof}

\begin{proposition}
  \label{proposition:max-cont-interval}
  For any Nash equilibrium $v(u)$ there is a maximal nonempty interval
  $J_{\max }$ containing $u=0$ in its interior on which $v(u)$ is
  continuous.  For sufficiently large $n$ the endpoints
  $u_{-}<0<u_{+}$ of this interval lie within distance
  $\beta n^{-3/2}$ of $\underline{u} $ and $\overline{u} $,
  respectively, for some constant $\beta>0$ not depending on $n$ or on
  the particular Nash equilibrium.  The function $v(u)$ is
  $C^{\infty}$ on the interior of $J_{\max}$, and for any
  $\epsilon>0$, if $n$ is sufficiently large then
  \begin{equation}
    \label{eq:max-cont-interval}
    \sup_{u \in J_{\max}}|v(u)|<\epsilon.
  \end{equation}
  Consequently, for  any $\epsilon>0$ there exists  $n_{\epsilon}\in
  \mathbb{N}$ such 
  that if $n \geq n_{\epsilon}$ then any Nash equilibrium $v(u)$ with
  no discontinuities satisfies
  \begin{equation}
    \label{eq:continuous-equilibria-small}
    \xnorm{v}_{\infty}<\epsilon.
  \end{equation}
\end{proposition}

\begin{proof}
  The existence of a nonempty interval of continuity containing $u=0$
  is clear, because any Nash equilibrium is continuous at $0$. That
  the endpoints $u_{-}<u_{+}$ of this interval lie within distance
  $\beta n^{-3/2}$ of the endpoints $\underline{u} , \overline{u} $
  follows from Proposition~\ref{proposition:discontinuities-necessary}
  and Lemma~\ref{proposition:size}, because the former asserts
  that there is no discontinuity on any interval in which
  $|v(u)|<\Delta$, where $\Delta>0 $ is the discontinuity threshold,
  and the latter implies that $|v(u)|$ is bounded by $\Delta$ on the
  interval
  $ (\underline{u} _{\beta n^{-3/2}}, \overline{u} _{\beta
    n^{-3/2}})$, for some $\beta >0$ independent of $n$ and the
  particular Nash equilibrium.

  It remains to show that $v(u)$ is not only continuous but
  $C^{\infty}$ on the interior of $J$, and that for sufficiently large
  $n$ the inequality \eqref{eq:max-cont-interval} holds.  Since $v(u)$
  is continuous on $J$, Proposition~\ref{proposition:smoothness}
  implies that $v(u)$ fails to be $C^{\infty}$ only at points $u$
  where the equality \eqref{eq:bad-u} holds.  But
  Lemma~\ref{lemma:psi-psi-prime} and the necessary condition
  \eqref{eq:weierstrass} imply that if $\epsilon>0$ is 
  sufficiently small then equation \eqref{eq:bad-u}  cannot be
  satisfied at any point where $2|v(u)|<\epsilon$. Thus, to complete
  the proof it     will suffice to show that for any $\epsilon>0$, if
  $n$ is sufficiently large then $|v(u)|<2\epsilon$ on the interval $J_{\max}$.

  Assume that $2\epsilon <\Delta$, where $\Delta$ is the discontinuity
  threshold. By 
  Corollary~\ref{corollary:big-v'-big-v}, if $\epsilon>0$ is
  sufficiently small then $|v'(u)|\leq \xnorm{\psi}_{\infty}$ for all $u \geq 1$ at which
  $|v(u)|<\epsilon$. By Proposition~\ref{proposition:size}, the
  inequality 
   $|v(u)|<\epsilon$ holds for all $u$ at distance more
  than $\beta n^{-3/2}$ from the endpoints
  $\underline{u} ,\overline{u} $, where $\beta>0$ does not depend on
  $n$ or on the particular Nash equilibrium.  Now suppose that
  $v(u)\geq 2\epsilon$ for some $u \in [1,\overline{u} ]\cap J_{\max}$, and define
  \begin{displaymath}
    u_{*}:= \inf \xset {u \geq 1\,:\, v(u)\geq 2\epsilon}.
  \end{displaymath}
  Since $v(u)$ has no discontinuities in $J_{\max}$, it must be the
  case that $v(u_{*})=2 \epsilon$.  But the fundamental theorem of
  calculus implies
  \begin{align*}
    v(u_{*})&=v(\overline{u} _{\beta n^{-3/2}})+\int_{\overline{u}
      _{\beta n^{-3/2}}}^{u_{*}}v'(u)\,du \\
    &\leq \epsilon + (u_{*}-\overline{u} _{\beta n^{-3/2}})\xnorm{\psi}_{\infty}\\
    &\leq \epsilon + \beta n^{-3/2}\xnorm{\psi}_{\infty};
  \end{align*}
  thus, if $n$ is sufficiently large then the equality
  $v(u_{*})=2\epsilon$ is impossible. The same argument shows that for
  large $n$
  there can be no $u_{*}\in [\underline{u} ,-1]\cap J_{\max}$ at which
  $v(u_{*})=-2\epsilon$. 
\end{proof}

\section{Approximate Proportionality}\label{sec:app-prop}

\subsection{The approximate proportionality
rule}\label{ssec:linearity}   We
have shown in Lemma~\ref{proposition:size} that any Nash equilibrium
$v (u)$ must be small (in magnitude) except in the extreme tails of
the distribution (in particular, for all $u$ at distance much more
than $n^{-3/2}$ from both endpoints $\underline u ,\overline
u$). Because $\psi$ is uniformly continuous, it follows that the
marginal pivotality $\frac{1}{2} E\psi (v (u)+S_{n})$ cannot differ by
very much from $\frac{1}{2} E\psi (S_{n})$. Thus, the approximation
$2v(u)\approx E\psi (S_{n})u$ is valid up to an error of size
$\epsilon_{n }|u|$ where $\epsilon_{n}\rightarrow 0$ as
$n \rightarrow \infty$. However, as $n \rightarrow \infty$ the
expectation $E\psi (S_{n})\rightarrow 0$, and so the error in the
approximation above might be considerably larger than the
approximation itself. The following proposition makes the stronger
assertion that when $n$ is large the relative error in the approximate
proportionality rule is small. This extends the range of validity of
the weak consensus estimate \eqref{eq:consensus} to the maximal
interval $J_{\max}$ of continuity.

\begin{proposition}\label{proposition:proportionality}
  For all $n$ sufficiently large, in any Nash equilibrium,
  $E\psi(S_{n})>0$, and for any $\epsilon >0$ there exists
  $n_{\epsilon}$ such that for all $n\geq n_{\epsilon}$ and any Nash
  equilibrium $v(u)$, if $J_{\max}$ is the maximal interval of
  continuity of $v(u)$ containing $u=0$ then
\begin{equation}\label{eq:proportionality}
	(1+\epsilon )^{-1}\leq \frac{2v(u)}{E\psi (S_{n}) u}
        \leq (1+\epsilon ) \quad \textrm {for all} \; u \in
      J_{\max}\setminus \xset{0}.
\end{equation}
\end{proposition}

\begin{proof} [Proof of Proposition~\ref{proposition:proportionality}]
  Since $\psi$ is $C^{\infty}$ and has compact support
  $[-\delta,\delta]$, the function $v\mapsto E\psi (v+S_{n})$ is also
  $C^{\infty}$, with derivative $E\psi '(v+S_{n})$. By
  Proposition~\ref{proposition:max-cont-interval}, the function $v(u)$
  is $C^{\infty}$ on the interval $J_{\max}$, and so, by the chain
  rule, the function $u\mapsto E\psi(v(u)+S_{n})$ is $C^{\infty}$,
  with derivative $E\psi'(v(u)+S_{n})v'(u)$. Hence, by Taylor's
  theorem, for every $u\in J_{\max}$ there exists $\tilde{u}(u)$
  intermediate between $0$ and $u$ such that
  \begin{gather}
    \label{eq:consensus-B}
    \notag 
    2v (u)=E\psi (v (u)+S_{n})u=E\psi (S_{n})u +E\psi'
    (v(\tilde{u}(u))+S_{n})v (u) u \quad \Longrightarrow \\
    (2-E\psi'(v(\tilde{u}(u))S_{n})u)v(u)=E\psi (S_{n})u .
  \end{gather}
  We will prove that for any $\epsilon'>0$, if $n$ is sufficiently
  large then in any Nash equilibrium $v(u)$,
  \begin{equation}
    \label{eq:Epsi-prime}
    |E\psi'(v(\tilde{u}(u))+S_{n})u|<\epsilon' \quad \textrm {for all}
    \; u \in J_{\max}.
  \end{equation}
  This will imply  that $E\psi(S_{n})\not=0$ (because otherwise
  equation \eqref{eq:consensus-B} would imply that $v(u)=0$ for all $u
  \in J_{\max}$, contradicting
  Proposition~\ref{proposition:strict-monotonicity}), and 
  the assertion \eqref{eq:proportionality} will then follow directly
  from \eqref{eq:consensus-B}.

  By Lemma~\ref{lemma:psi-psi-prime}, for any $\epsilon'>0$ there
  exists $\varrho=\varrho(\epsilon')>0$ such that if
  $E\psi(\tilde{v}+S_{n})|u|<\varrho$ then
  $E|\psi'(\tilde{v}+S_{n})u|<\epsilon'$. Thus, to prove
  \eqref{eq:Epsi-prime} it suffices to show that for any $\varrho >0$,
  if $n$ is sufficiently large then in every  Nash equilibrium,
  \begin{equation}\label{eq:rhoPrime}
    E\psi(v(\tilde{u}(u))+S_{n})|u| <\varrho \quad \textrm {for all} \;
    u\in J_{\max}.
  \end{equation}
  Since $|u|$ is bounded by $\max (|\underline{u} |,\overline{u} )$
  and $\tilde{u}(u)$ is intermediate between $0$ and $u$, to prove
  \eqref{eq:rhoPrime} it is enough  to show that for any $\varrho' >0$,
  if $n$ is sufficiently large then in every Nash equilibrium,
  \begin{equation}
    \label{eq:margPivotalitySmall}
    E\psi(v(u)+S_{n})<\varrho ' \quad \textrm {for all} \;
    u\in J_{\max}.
  \end{equation}
  
  Proposition~\ref{proposition:max-cont-interval} implies that for
  any $\varrho ''>0$ the function $2|v(u)|$ is bounded above by
  $\varrho ''$ on the interval $J_{\max}$, provided $n$ is 
  large. Hence, by the necessary condition~\eqref{eq:weierstrass},
  \begin{equation}
    \label{eq:rho''}
    E\psi(v(u)+S_{n})<\varrho '' \quad \textrm {for all} \;
    u\in J_{\max}\setminus  (-1,1).
  \end{equation}
  Thus, inequality \eqref{eq:margPivotalitySmall} will hold outside
  the interval $(-1,1)$ if $\varrho ''< \varrho '$. On
  the other hand,
   by the fundamental theorem of
   calculus, for any $u $ in the interval $(-1, 1)$,
   \begin{align*}
     |E\psi (v(u)+S_{n})-E\psi(v(-\alpha)+S_{n})|&\leq\int_{v(-1
       )}^{v(u)}E|\psi'(v(t)+S_{n})| \, dt \\
                                                 &\leq \xnorm{\psi'}_{\infty} (v(1)-v(-1)) \\
                                                 &\leq 2\xnorm{\psi'}_{\infty} \max _{u\in J_{\max}}|v(u)| \\
     &\leq \xnorm{\psi'}_{\infty} \varrho '';
   \end{align*}
   consequently, if $\xnorm{\psi'}_{\infty} \varrho ''<\varrho '$ then
   \eqref{eq:margPivotalitySmall} will hold in the interval $(-1,1)$. 
\end{proof}

\subsection{Consequences}\label{ssec:consequences}

Proposition~\ref{proposition:proportionality} implies that any agent with
value $u$ in the maximal interval $J_{\max}$ of continuity of a Nash
equilibrium $v(u)$ will buy $E\psi(S_{n})u$ votes, up to an error of
size $o(v(u))$. Henceforth, we will refer to agents with values
$u\in J_{\max}$ as \emph{moderates}, and agents with values
$u\not\in J_{\max}$ as \emph{extremists}. By
Proposition~\ref{proposition:max-cont-interval}, there is a constant
$\beta$ such that the endpoints of $J_{\max}$ are within distance
$\beta n^{-3/2}$ of the endpoints $\underline{u} ,\overline{u} $ of
the support of the sampling density $f$; since $f$ is continuous at
the endpoints $\underline{u} ,\overline{u} $, it follows that when $n$ is
large, the extremist region
$[\underline{u} ,\overline{u} ]\setminus J_{\max}$ has $F-$probability
less than $2\beta n^{-3/2}(f(\underline{u} )+f(\overline{u}
))$. Consequently, if
\begin{equation}\label{eq:no-extremists}
  G=\bigcap_{i=1}^{n}\xset {U_{i}\in J_{\max}}
\end{equation}
is the event that the all-but-one sample contains no extremists, then for all
sufficiently large $n$,
\begin{equation}\label{eq:probability-no-extremists}
  P(G^{c})\leq 2\beta n^{-1/2} (f(\underline{u})+f(\overline{u})).
\end{equation}
In particular, the probability that the sample contains an extremist
is  vanishingly small as $n \rightarrow\infty$.

Given the event $G$, the conditional distribution of the sample
$U_{1},U_{2}, \cdots ,U_{n}$ is the same as the \emph{unconditional}
distribution of a random sample of size $n$ from the density
\begin{displaymath}
  f_{G}(u):= \frac{f(u)\mathbf{1}_{J_{\max}}}{\int_{J_{\max}}f}.
\end{displaymath}
Since voters in the moderate range follow the approximate
proportionality rule \eqref{eq:proportionality}, it follows that for
large $n$, conditional on the event $G$, the votes $v(U_{i})$ are
independent, identically distributed random variables bounded above
and below by $2E\psi(S_{n})\underline{u} $ and
$2E\psi (S_{n})\overline{u}$. Thus, Hoeffding's inequality
(cf. Appendix~\ref{sec:hoeffding}) implies the following.

\begin{corollary}\label{corollary:hoeffding}
For all sufficiently large $n$ and any Nash
equilibrium $v (u)$,
\begin{equation}\label{eq:hoeffding-conditional}
	P (|S_{n}-E (S_{n}|G)|\geq t \,|\, G)\leq 2\exp
	\{-2t^{2}/(4n [E\psi (S_{n})]^{2} (\overline u -\underline u)^{2})\}; 
\end{equation}
and so for any Nash equilibrium with no discontinuities,
\begin{equation}\label{eq:hoeffding-no-discontinuities}
	P (|S_{n}-ES_{n}|\geq t ) \leq 2\exp
	\{-2t^{2}/(4n [E\psi (S_{n})]^{2} (\overline u -\underline u)^{2})\}. 
\end{equation}
\end{corollary}

Proposition~\ref{proposition:proportionality} also implies uniformity in
the normal approximation to the distribution of $S_{n}$, because the
proportionality rule \eqref{eq:proportionality} guarantees that the
ratio of the (conditional on $G$) third moment to the $3/2$ power of
the (conditional) variance of $v (U_{i})$ is uniformly bounded. Hence,
by the Berry-Esseen theorem (cf. Appendix~\ref{sec:berry-esseen}), we
have the following corollary.

\begin{corollary}\label{corollary:berry-esseen}
There exists $\kappa <\infty$  such that  for all sufficiently large
$n$ and any Nash equilibrium $v (u)$, the vote total $S_{n}$ satisfies
\begin{equation*}
	\sup_{t\in \mathbb R} |P (S_{n}-E (S_{n}|G)\leq t\sqrt{\var
	(S_{n}|G)}\,|\,G)-\Phi (t)|\leq \kappa n^{-1/2},
\end{equation*}
and consequently, since $1-P (G)=O (n^{-1/2})$, there exists $\kappa
'<\infty$ such that 
\begin{equation*}
	\sup_{t\in \mathbb R} |P \{ S_{n}-E (S_{n}|G)\leq t\sqrt{\var
	(S_{n}|G)}\}-\Phi (t)|\leq \kappa ' n^{-1/2}.
\end{equation*}
Here $\Phi$ denotes the standard normal cumulative distribution
function.
\end{corollary}

\section{Unbalanced Populations: Proofs of  Theorems
  2--3}\label{sec:unbalanced}

\subsection{Concentration of the vote
  total}\label{ssec:concentration-vote-total}

The following lemma shows that if the sampling distribution $F$ has
positive mean $\mu >0$ then when $n$ is large the vote total, in any
Nash equilibrium, concentrates at a value between $\delta$ and
$\delta+ \sqrt{2| \underline{u} |}$. (Recall that $[-\delta,\delta]$
is the support of the function $\psi$.)

\begin{lemma}\label{proposition:concentrate}
If $\mu>0$ then for all large $n$ no Nash equilibrium $v (u)$ has a
discontinuity at a nonnegative value of $u$. Moreover, for any
$\epsilon >0$, if $n $ is sufficiently large then in any Nash
equilibrium the vote total $S_{n}$ of the one-out sample must satisfy
\begin{gather}\label{eq:expectation-bounds}
	\delta -\epsilon \leq ES_{n}\leq \delta +\epsilon
	+\sqrt{2\left|\underline u\right|}] \quad \text{and} \\
\label{eq:vote-concentration}
	P\{|S_{n}-ES_{n}|>\epsilon \}<\epsilon.
\end{gather}
In addition, for any $\epsilon >0$ there exists $\gamma >0$ such that
if $n$ is sufficiently large and $v (u)$ is a Nash equilibrium with no
discontinuities, then 
\begin{equation}\label{eq:exponential-vote-concentration}
	P\{|S_{n}-\delta|>\epsilon \}<e^{-\gamma n}.
\end{equation}
\end{lemma}

The proof of this will rely on a simple estimate for the expected vote
total of the extremists in the out-sample. Denote by $N_{e}$ the
number of extremists among the first $n$ agents, and by $S_{n}'$ their
vote total, formally,
\begin{align}
  \label{eq:n-extremists}
  N_{e}&=\sum_{i=1}^{n} \mathbf{1}_{J_{\max}^{c}}(U_{i})\quad \textrm
  {and}\\
  \label{eq:s-extremists}
  S_{n}'&=\sum_{i=1}^{n}v(U_{i})\mathbf{1}_{J_{\max}^{c}}(U_{i}).  
\end{align}

\begin{lemma}
  \label{lemma:extremist-contrib}
  There exists a constant $C<\infty$ independent of $n$ and of the
  particular Nash equilibrium such that
  \begin{equation}
    \label{eq:small-extremist-contrib}
    E|S_{n}'|\leq Cn^{-1/2} \quad \textrm {and} \quad
    |ES_{n}-E(S_{n}\,|\,G)|\leq Cn^{-1/2}.
  \end{equation}
\end{lemma}

\begin{proof}
  Since not even an extremist would ever buy more than
  $\kappa :=\max(\sqrt{2\left|\underline u\right|},\sqrt{2\overline
    u})$ votes, the extremist total $S_{n}'$ is bounded by
  $\kappa N_{e}$. For  large $n$ the extremist range
  $[\underline{u} ,\overline{u} ]\setminus J_{\max}$ has
  $F-$probability less than
  $2\beta n^{-3/2}(f(\overline{u} )+f(\underline{u} )):= \alpha
  n^{-3/2}$; consequently,
  \begin{equation}\label{eq:NeBound}
    P \xset {N_{e}\geq k} \leq \frac{n^{k}}{k!} (\alpha n^{-3/2})^{k}\leq
    \frac{\alpha^{k}}{k! n^{k/2}} 
     \quad \textrm {for  all} \; k\geq 1.
  \end{equation}
   Thus, if $n$ is sufficiently large,
  \begin{displaymath}
    E |S_{n}'| \leq \alpha\kappa \sum_{k \geq 1}n^{-k/2} \leq 2\alpha \kappa
    n^{-1/2}.
  \end{displaymath}
  This proves the first inequality in
  \eqref{eq:small-extremist-contrib}.

  Suppose now that the $N_{e}$ extremists in the sample were replaced
  by $N_{e}$ independently chosen moderates.  Denote by $S_{n}''$ the
  vote total of these replacements. Clearly, $|S_{n}''|\leq \kappa
  N_{e}$, since no agent ever buys more than $\kappa$ votes, and so by
  the same calculation as above,
  \begin{displaymath}
    E|S_{n}''|\leq 2\alpha\kappa n^{-1/2}.
  \end{displaymath}
  By construction, the sample obtained by replacing the extremists by
  moderates is a sample of size $n$ chosen from $J_{\max}$, and so
  $E(S_{n}-S_{n}'+S_{n}'')=E(S_{n}\,|\, G)$. Consequently, the second
  inequality in \eqref{eq:small-extremist-contrib} follows from the
  bounds on $E|S_{n}'|$ and $E|S_{n}''|$ obtained above.
\end{proof}

\begin{proof}[Proof of Lemma~\ref{proposition:concentrate}]
  By Lemma~\ref{lemma:extremist-contrib}, the difference
  $ES_{n}-E(S_{n}\,|\,G)$ is vanishingly small, and so the expectation
  $E S_{n}$ can be replaced by $E(S_{n}\,|\,G)$ in the inequalities
  \eqref{eq:expectation-bounds}, \eqref{eq:vote-concentration}, and
  \eqref{eq:exponential-vote-concentration}.  By
  Proposition~\ref{proposition:proportionality}, Nash equilibria
  $v (u)$ obey the approximate proportionality
  rule~\eqref{eq:proportionality} in the moderate range, and by
  inequality~\eqref{eq:probability-no-extremists}, the difference
  between $\mu$ and $E(U_{i}\,|\, U_{i}\in J_{\max})$ is negligible; consequently,
  for any $\epsilon >0$, if $n$ is sufficiently large then
  \begin{equation}
    \label{eq:sandwich}
    nE\psi (S_{n}) (\mu-\epsilon) (1-\epsilon)\leq E(S_{n}\,|\,G)\leq
    nE\psi (S_{n}) (\mu+\epsilon) (1+\epsilon). 
  \end{equation}
  By Proposition~\ref{proposition:proportionality}, $E\psi(S_{n})>0$,
  and by hypothesis, $\mu >0$, so it follows that the lower bound in
  \eqref{eq:sandwich} is positive (provided $\epsilon>0$ is
  sufficiently small).

  Suppose now that for some $\epsilon'>0$ there are Nash equilibria
  for arbitrarily large $n$ such that
  $E(S_{n}\,|\,G)<\delta-2\epsilon'$. Since the constant $\epsilon >0$
  in \eqref{eq:sandwich} can be chosen arbitrarily small relative to
  $\epsilon '$, the lower inequality in \eqref{eq:sandwich} implies
  that $n E\psi (S_{n})\mu\leq \delta$; thus, in particular, the
  quantity $nE\psi(S_{n})$ remains bounded as $n
  \rightarrow\infty$. Consequently, the Hoeffding bound
  \eqref{eq:hoeffding-conditional} is exponentially decaying in
  $n$: 
  \begin{displaymath}
    P(|S_{n}-E(S_{n}\,|\,G)| \geq \epsilon) \leq2\exp \xset
    {-2n\epsilon^{2}/(4 [nE\psi(S_{n})]^{2}(\overline{u}
      -\underline{u} )^{2})} \leq 2e^{-\gamma n}  
  \end{displaymath}
  for some constant $\gamma=\gamma(\epsilon')>0$. Since
  $0\leq E(S_{n}\,|\,G)<\delta -2\epsilon' $, it follows that 
  \begin{displaymath}
    P (-\epsilon' < S_{n}<\delta - \epsilon' \,|\, G) \geq 1-2 e^{-n\gamma}.
  \end{displaymath}
  Now $P(G)\rightarrow 1$, by inequality 
  \eqref{eq:probability-no-extremists}, so for large $n$ we have
  \begin{displaymath}
    E\psi (S_{n})\geq \frac{1}{2} \min_{v\in [-\epsilon', \delta	-\epsilon '] }\psi (v) .
  \end{displaymath}
  But since $\psi$ is bounded away from $0$ on the
  interval$[-\epsilon',\delta-\epsilon']$,  this contradicts the fact
  that $nE\psi (S_{n})$ remains bounded as $n \rightarrow\infty$. This
  proves the lower bound in \eqref{eq:expectation-bounds}.

  Next, suppose that the upper bound in \eqref{eq:expectation-bounds}
  were not true, that is, suppose that for some $\epsilon'>0$ and
  indefinitely large $n$ there were Nash equilibria for which
  \begin{displaymath}
    E(S_{n}\,|\,G)>\delta +\sqrt{2\left|\underline u\right|}+\epsilon' .
  \end{displaymath}
  Then  the event $S_{n}\leq \delta+\sqrt{2|\underline{u} |}$ would be
  a large deviation event in \eqref{eq:hoeffding-conditional}: in
  particular, if $\alpha:=\epsilon'/(\delta+\sqrt{2| \underline{u}
    |}+\epsilon')$ then for any $\epsilon>0$, if $n$ is sufficiently large,
  \begin{displaymath}
    |S_{n}-E(S_{n}\,|\,G)|\geq \alpha E(S_{n}\,|\,G) \geq \alpha
    E\psi(S_{n})(\mu-\epsilon) (1-\epsilon),  
  \end{displaymath}
  by inequality \eqref{eq:sandwich}. Consequently, by
  Corollary~\ref{corollary:hoeffding}, there exists
  $\gamma =\gamma (\epsilon ', \epsilon)>0$ such that
  \begin{displaymath}
    P (S_{n}\leq \delta +\sqrt{2\left|\underline	u\right|}\,|\,G)\leq e^{-\gamma n}.
  \end{displaymath}
  But for all $v\geq -\sqrt{2\left|\underline u\right|}$,
  \begin{align*}
    E\psi (v+S_{n})&\leq \xnorm{\psi }_{\infty} P(S_{n}\leq \delta +
                   \sqrt{2|\underline{u} |}\,|\, G)+ \xnorm{\psi }_{\infty}
                   P(G^{c})\\
  &\leq e^{-\gamma n}\xnorm{\psi }_{\infty} + P
	 (G^{c})\xnorm{\psi }_{\infty}=O(n^{-1/2}).
  \end{align*}
  Hence, by the necessary condition \eqref{eq:weierstrass}, the
  function $v (u)$ must be vanishingly small (of order no greater than
  $O (n^{-1/2})$) for all $u\in [\underline u ,\overline u]$, and so
  by Proposition ~\ref{proposition:discontinuities-necessary} can have
  no discontinuities. But then the proportionality rule
  \eqref{eq:proportionality} would hold for \emph{all}
  $u\in \left[\underline u,\overline u \right]$, and so the Hoeffding
  bound \eqref{eq:hoeffding-no-discontinuities} for Nash equilibria
  with no discontinuities would give
  \begin{displaymath}
    P (S_{n}\leq \delta +\sqrt{2\left|\underline u\right|} )\leq
    e^{-\gamma n} \quad	\Longrightarrow \quad	 E\psi (S_{n})\leq
    e^{-\gamma n}\xnorm{\psi }_{\infty}, 
  \end{displaymath}
  which, in view of the approximate proportionality rule~\eqref{eq:proportionality},
  contradicts the \emph{a priori} lower bound on Nash equilibria
  provided by Lemma~\ref{lemma:aPrioriLB}. This proves that for every
  $\epsilon' >0$, if $n$ is sufficiently large then for every Nash
  equilibrium,
  \begin{displaymath}
    E(S_{n}\,|\, G)\leq \delta +\sqrt{2\left|\underline u\right|}+\epsilon ',
  \end{displaymath}
  thus establishing the upper bound in
  \eqref{eq:expectation-bounds}. A similar argument shows that for any
  $\epsilon>0$, if $n$ is sufficiently large then for any
  Nash equilibrium   $v(u)$ with no discontinuities,
  \begin{displaymath}
    \delta-\epsilon \leq E(S_{n}\,|\,G) \leq\delta+\epsilon.
  \end{displaymath}

  Because we have now proved that $E(S_{n}\,|\,G)$ is bounded away
  from $0$ and $\infty$, it follows as before, by the inequalities
  \eqref{eq:sandwich}, that $n E\psi (S_{n})$ is bounded away from $0$
  and $\infty$. Therefore, by the Hoeffding bound
  \eqref{eq:hoeffding-conditional}, for any $\epsilon>0$,
\begin{displaymath}
  P (|S_{n}-E(S_{n}\,|\,G)|>\epsilon \,|\,G)\leq 2\exp \xset
  {-2n\epsilon^{2}/(4 [nE\psi(S_{n})]^{2}(\overline{u} -
    \underline{u} )^{2})} \leq e^{-\gamma n}
\end{displaymath}
for some $\gamma=\gamma(\epsilon)>0$. This proves
\eqref{eq:exponential-vote-concentration}. Furthermore, since
$|ES_{n}-E(S_{n}\,|\,G)|=O(n^{-1/2})$, by
\eqref{eq:small-extremist-contrib}, it follows that
\begin{displaymath}
  P \xset {|S_{n}-ES_{n}|>\epsilon}\leq P
  (|S_{n}-E(S_{n}\,|\,G)|>2\epsilon \,|\,G)+P(G^{c}) =O(n^{-1/2}),
\end{displaymath}
proving \eqref{eq:vote-concentration}.
\end{proof}

\subsection{Identification of the concentration
point}\label{ssec:identification}

\begin{lemma}\label{lemma:identification}
Assume that $\mu >0$, and let $(\xi ,w)$ be the unique solution
of the Optimization Problem (cf. section~\ref{unbalancedsample}), if
one exists, or let $\xi =\delta$ if not. Then for any $\epsilon
>0$, if $n$ is sufficiently large then in every Nash equilibrium,
\begin{equation}\label{eq:xi-claim}
	\bigg\lvert\frac{n}{2} E\psi (S_{n})-\xi
	\mu^{-1}\bigg\rvert<\epsilon . 
\end{equation}
\end{lemma}

\begin{proof}
  The lemma is equivalent to the assertion that
  $|ES_{n}-\xi |\rightarrow 0$, by the proportionality
  rule~\eqref{eq:proportionality} and the estimate
  \eqref{eq:probability-no-extremists}. We will prove this assertion
  in two steps, by first showing that for any $\epsilon>0$ if $n$ is
  sufficiently large the expectation $ES_{n}$ cannot be smaller than
  $\xi -2\epsilon$, and then that it cannot be larger than
  $\xi +2\epsilon$.

  If $\xi =\delta$ then Lemma~\ref{proposition:concentrate} implies
  that $ES_{n}<\xi -2\epsilon$ is impossible for large $n$, so to prove
  that $ES_{n}\geq \xi -2\epsilon$ for large $n$ it suffices to
  consider the case where $\xi >\delta$. In this case $(\xi,w)$ is the
  unique solution to the Optimization Problem \eqref{eq:xi-w}, so for
  sufficiently small $\epsilon >0$ there is a  $\varrho >0$ such that
  \begin{displaymath}
    (1 -\Psi (w))|u|> (\xi -\epsilon -w)^{2} \quad \textrm{for all} \;
    u\in [\underline{u} ,\underline{u} +\varrho].  
  \end{displaymath}
   By
  Lemma~\ref{proposition:concentrate}, for any $\epsilon'>0$, for all
  sufficiently large $n$,
  \begin{displaymath}
    P \xset {|S_{n}-ES_{n}|\geq \epsilon}<\epsilon';
  \end{displaymath}
  hence,   if it were the case that $ES_{n}<\xi - 2\epsilon$ then a voter with value
  $u\in [\underline{u} ,\underline{u} +\varrho]$ could, with
  probability in excess of $1-\epsilon'$, improve her expected utility payoff from
  $\approx u$ to $\Psi (w)u$, at a cost less than
  $(\xi -\epsilon -w)^{2}$, and so all such voters would defect from
  the equilibrium strategy. Since $\epsilon'>0$ can be chosen
  arbitrarily small relative to $\epsilon$, this is a contradiction, because the
  approximate proportionality rule
  (Proposition~\ref{proposition:proportionality}) hold for all
  $u \in J_{\max}$, and $J_{\max} $ overlaps with the interval
  $[\underline{u} , u+\varrho]$ when $n$ is large, by
  Proposition~\ref{proposition:max-cont-interval}. This shows
  that for all large $n$, in any equilibrium,
  $ES_{n}\geq \xi -2\epsilon$.

  Now suppose that $ES_{n}>\xi +2\epsilon$. Then by
  Lemma~\ref{proposition:concentrate}, the one-out vote total $S_{n}$
  would exceed $\xi +\epsilon$ with probability near $1$, for large
  $n$. But by \eqref{eq:xi-w}, for all $w\leq \delta$,
  \begin{displaymath}
    (\xi +\epsilon -w)^{2}\geq \epsilon^{2} + (\xi-w)^{2} \geq
    \epsilon^{2} +(1-\Psi	(w))|\underline u |, 
  \end{displaymath}
  so for a voter with value $u<0$ the cost of buying enough votes to
  move the vote total from above $\xi+\epsilon$ to any value
  $w \in [-\delta ,\delta]$ would exceed the increase in expected
  utility. Hence, for any $\alpha>0$, if $n$ is large then it would be
  suboptimal for a voter with negative value $u$ to buy more than
  $\alpha$ votes. Since this is true, in particular, for
  $\alpha=\Delta/2$, where $\Delta$ is the discontinuity threshold, it
  follows by Proposition~\ref{proposition:discontinuities-necessary}
  that $v(u)$ would have no discontinuity at a value
  $u<0$. Lemma~\ref{proposition:concentrate} guarantees that there are
  no discontinuities at values $u \geq 0$, so it follows that $v(u)$
  is continuous on the entire interval
  $[\underline{u} ,\overline{u} ]$.  But then, because
  $ES_{n}\geq \xi+2\epsilon \geq \delta+2\epsilon$,
  assertion~\eqref{eq:exponential-vote-concentration} implies that
  \begin{displaymath}
    P\{S_{n}\leq\delta \}<e^{-\varrho n}
  \end{displaymath}
  which, since $\psi$ has support $[-\delta ,\delta]$,   implies that
  \begin{displaymath}
    E\psi (S_{n})\leq \xnorm{\psi }_{\infty}e^{-\varrho n}.
  \end{displaymath}
  This is impossible, because the proportionality
  rule~\eqref{eq:proportionality} would then imply that for some
  constant $C<\infty$ not depending on $n$ or the particular
  equilibrium, $\xnorm{v}_{\infty}\leq Ce^{-\varrho n}$, contradicting
  Lemma~\ref{lemma:aPrioriLB}.
\end{proof}

\subsection{Proof of Theorem 2}\label{ssec:proof-efficiency} 

\begin{proof}
[Proof of assertion \eqref{eq:asym-eff}] The asymptotic efficiency of
quadratic voting in the unbalanced case $\mu >0$ is a direct and easy
consequence of Lemma~\ref{proposition:concentrate}. This implies that
for any $\epsilon >0$, the probability that the vote total
$S_{N}=S_{n}+v ( U_{n+1})$ will fall below $\delta -2\epsilon$ is less
than $\epsilon $ for all large $n$, and so by the continuity of the
payoff function $\Psi$, for any $\epsilon >0$
\[
	P \{\Psi (S_{N})\leq 1-\epsilon \}<\epsilon 
\]
for all sufficiently large $N$ and all Nash equilibria. Moreover, the
law of large numbers guarantees that for large $N$,
\[
	P \left\{\lvert N^{-1}U_{+}-\mu \rvert \geq
	\epsilon \right\}<\epsilon  \quad \text{where} \quad
	U_{+}:=\sum_{i=1}^{N}U_{i}  .
\]
Because the random variables $U_{+}/N$ and $\Psi (S_{N})$ are bounded, it
therefore follows that for any $\epsilon >0$, if $N$ is sufficiently
large then in any equilibrium
\[
	\bigg\lvert\frac{E[U\Psi (S_{N})]}{2E|U|} -1 \bigg\rvert <\epsilon .
\]
\end{proof}

\begin{proof}
[Proof of assertions
\eqref{eq:approx-proportionality}--\eqref{eq:vote-total-concentration}]
The second assertion \eqref{eq:vote-total-concentration}  follows
immediately from the first, by the law of large numbers for the
sequence $U_{1},U_{2},\dots$, and the assertion
\eqref{eq:vote-total-concentration}  follows directly from the
approximate proportionality rule \eqref{eq:proportionality} and
Lemma~\ref{lemma:identification}.

\end{proof}

\subsection{Proof of  Theorem
3}\label{ssec:unbalanced-B}

Assume now that $\mu >0$ and that the Optimization Problem
\eqref{eq:xi-w} has a unique solution $(\xi ,w)$ such that
$\xi >\delta$. By Lemmas~\ref{lemma:identification} and
\ref{proposition:concentrate} (in particular,
relation~\eqref{eq:vote-concentration}), together with the approximate
proportionality rule~\eqref{eq:proportionality}, the sum $S_{n}$ of
the one-out sample must concentrate near $\xi$.  But by assertion
\eqref{eq:exponential-vote-concentration} of
Lemma~\ref{proposition:concentrate},  for \emph{continuous} Nash
equilibria the sum $S_{n}$ must concentrate near $\delta$. Therefore,
when the Optimization Problem has a unique solution $(\xi,w)$ with
$\xi>\delta$, all Nash equilibria must have discontinuities, at least
when $n$ is large. Lemma~\ref{proposition:concentrate} guarantees that
there are no discontinuities at nonnegative values $u$, so by
Proposition~\ref{proposition:max-cont-interval}, the discontinuities
must  occur in an interval $[\underline{u} ,\underline{u} +\beta
n^{-3/2}]$ of length $O(n^{-3/2})$.

 Let $v (u)$ be a Nash equilibrium, and let $u_{*}$ be the rightmost
point of discontinuity of $v$.  By
Lemma~\ref{proposition:discontinuities-necessary} and the monotonicity
of Nash equilibria,  $v (u)<-\Delta$ for every $u<
u_{*}$. Obviously, the expected payoff for an agent with value $u$
must exceed the expected payoff under the alternative strategy of
buying no votes. The latter expectation is approximately $\underline
u$, because $S_{n}$ is highly concentrated near $ES_{n}>\xi
-\epsilon$ and so $E\Psi (S_{n})\approx 1$. On the other hand, the
expected payoff at $u<u_{*}$ for an agent playing the Nash strategy
is approximately
\[
	\Psi (\xi  -v (u)) \left(\underline u\right)-v (u)^{2}.
\]
In order that this an improvement over the alternative strategy of
buying no votes, for which the expected payoff is about
$-|\underline{u} |$, it must be the case that
\begin{displaymath}
  |v (u)|\approx \xi -w,
\end{displaymath}
because by hypothesis, $(\xi ,w)$ is
the unique pair such that relations \eqref{eq:xi-w} hold.
This proves assertion (iv) of
Theorem~\ref{munot0characterization}.

It remains to prove assertion (v) of
Theorem~\ref{munot0characterization}, that the rightmost discontinuity
occurs at  $u_{*}\approx \underline{u} +\zeta n^{-2}$.  To accomplish
this, we will first argue that the major contribution to the
expectation $E\psi(S_{n})$ comes from the event $N_{e}=1$ that
there is precisely $1$ extremist in the out-sample. 
Recall first that $E\psi(S_{n})\sim 2\xi /(n\mu)$, by
\eqref{eq:xi-claim}; thus the expectation is of order $O(n^{-1})$. Now
on the event $N_{e}=1$, the vote total of the $n-1$ moderates in the
out-sample will concentrate near $\xi$, by
Lemmas~\ref{lemma:identification} and \ref{proposition:concentrate},
and the single extremist will buy approximately $-\xi+w$ votes, by the
argument above, so conditional on the event $N_{e}=1$ the vote total
$S_{n}$ will be close to $w$, with (conditional) probability
approaching $1$. Consequently, as $n \rightarrow\infty$
\begin{equation}\label{eq:EonN=1}
  E\psi(S_{n})\mathbf{1}_{\xset {N_{e}=1}} \sim \psi (w) P \xset
  {N_{e}=1} \sim \psi (w) n f(\underline{u} )(u_{*}-\underline{u} ).
\end{equation}
It follows that $u_{*}-\underline{u} $ cannot be larger than
$O(n^{-2})$.

Next, consider the event $N_{e}\geq 2$. This event has probability
\begin{align*}
  P \xset {N_{e}\geq 2} &= \sum_{k=2}^{\infty}P \xset {N_{e}=k}\\
  &\leq \sum_{k=2}^{\infty} \frac{n^{k}}{k!} (2f(\underline{u}
  )(u_{*}-\underline{u} ))^{k} \\
  &= O((n (u_{*}-\underline{u} ))^{2} )= O(n^{-3}),
\end{align*}
since $u_{*}-\underline{u} =O(n^{-2})$.  Therefore,
\begin{displaymath}
  E\psi(S_{n})\mathbf{1}_{\xset {N_{e}\geq 2}}\leq
  \xnorm{\psi}_{\infty}P \xset {N_{e}\geq 2} = O(n^{-3})
\end{displaymath}
is of smaller order of magnitude than $E\psi(S_{n})$, which is of size
$O(n^{-1})$.

Lastly, consider the event $G=\xset {N_{e}=0}$. This event has
probability converging to $1$, by inequality
\eqref{eq:probability-no-extremists}, and $|ES_{n}-E(S_{n}\,|\,G)|
\rightarrow 0$, by Lemma~\ref{lemma:extremist-contrib}. Since $ES_{n}
\rightarrow \xi$, by Lemma\ref{lemma:identification}, it follows that
$E(S_{n}\,|\,G)\rightarrow \xi>\delta$. But by
Corollary~\ref{corollary:hoeffding}, the conditional probability that
$S_{n}$ differs from $E(S_{n}\,|\,G)$ by more than $\epsilon$ decays
exponentially with $n$, so for some $\gamma>0$,
\begin{displaymath}
  P(S_{n}< \delta \,|\,G)\leq e^{-\gamma n}
\end{displaymath}
provided $n$ is sufficiently large. This implies that
\begin{displaymath}
  E\psi(S_{n})\mathbf{1}_{\xset {N_{e}=0}} =O(e^{-\gamma n}),
\end{displaymath}
which is negligible compared to $E\psi(S_{n})$, since this is of
magnitude $O(n^{-1})$.

We have now shown that $E\psi(S_{n})\sim
E\psi(S_{n})\mathbf{1}_{\xset {N_{e}=1}}$ as $n \rightarrow
\infty$. Since $E\psi(S_{n})\sim 2\xi/(n\mu)$, by
Lemma~\ref{lemma:identification},  it now follows by relation~\eqref{eq:EonN=1} 
that
\begin{displaymath}
  \psi(w)nf(\underline{u} )(u_{*}-u)\sim \frac{2\xi}{n\mu}.
\end{displaymath}
Thus,
\begin{displaymath}
  n^{2}(u_{*}-\underline{u} ) \longrightarrow \frac{2\xi}{n\mu
    \psi(w)f(\underline{u} )}:= \zeta .
\end{displaymath}

\qed

\section{Balanced Populations:  Proof of Theorem
\ref{mu0characterization}} 
\label{sec:balanced}

\subsection{Continuity of Nash equilibria}\label{ssec:balanced-continuity}

\begin{proposition}\label{theorem:no-extremists}
If $\mu=0$, then for all sufficiently large values of $n $ no Nash
equilibrium $v (u)$ has a discontinuity in $\left[\underline
u,\overline u \right]$. Moreover, for any $\epsilon >0$, if $n$ is
sufficiently large every Nash equilibrium $v (u)$ satisfies
\begin{equation}\label{eq:mean-zero-norm-bound}
	\xnorm{v}_{\infty}\leq \epsilon .
\end{equation}
\end{proposition}

\begin{proof}
The size of any discontinuity is bounded below by a positive constant
$\Delta$, by Lemma~\ref{proposition:discontinuities-necessary}, so it
suffices to prove the assertion \eqref{eq:mean-zero-norm-bound}.  Fix
$\epsilon >0$, and suppose that in some Nash equilibrium there is a
value $u\in [\underline u ,\overline u]$ (necessarily in the
extremist range $[\underline u ,\overline u]\setminus J_{\max}$, by
Proposition~\ref{proposition:max-cont-interval}) such that $|v
(u)|\geq \epsilon$; and hence, 
by the necessary
condition \eqref{eq:weierstrass}, 
\begin{equation}\label{eq:vote-concentration-at-v}
	E\psi (v (u)+S_{n})|u|\geq 2\epsilon.
\end{equation}
On the other hand, 
by Proposition~\ref{proposition:anti-concentration}
there exists $\gamma =\gamma (\epsilon)>0$ such that if $n$ is
sufficiently large then any Nash equilibrium $v (u)$ satisfying
$\xnorm{v}_{\infty}>\epsilon$ must also satisfy $|v (u)|\leq \gamma
/\sqrt{n }$ for all $u$ not within distance $\epsilon$ of one of the
endpoints $\underline u ,\overline u$. Hence, the approximate
proportionality relation \eqref{eq:proportionality} implies
\begin{equation}\label{eq:E-psi-small}
	E\psi (S_{n})\leq \frac{C}{\sqrt{n}}
\end{equation}
for a suitable $C=C (\gamma)<\infty$ independent of $n$. It then
follows from the approximate proportionality rule
\eqref{eq:proportionality} that the inequality $|v(u)|\leq \gamma
/\sqrt{n}$ (possibly with a different constant $\gamma$) holds for all
$u\in J_{\max}$.

We will show that if $\mu =0$ then inequality \eqref{eq:E-psi-small}
is impossible for large $n$; this will imply that the hypothesis
$|v(u)|\geq \epsilon$ for some $u\in[\underline{u} ,\overline{u} ]$ is
untenable. The strategy will be to show that the inequality
$|v(u)|\leq \gamma /\sqrt{n}$, which by
\eqref{eq:vote-concentration-at-v} must hold for all $u\in J_{\max}$,
forces concentration of the distribution of $S_{n}$ in the interval
$[-\delta/2,\delta/2]$.  This in turn keeps $E\psi(S_{n})$ bounded
away from $0$, contradicting \eqref{eq:E-psi-small}.

The necessary condition~\eqref{eq:weierstrass}  and Taylor's theorem
imply that for any $u \in[\underline{u} ,\overline{u} ]$,
\begin{displaymath}
	2v (u) =E\psi (S_{n})u+E\psi '(S_{n}) v (u)u 
	 +\frac{1}{2}
	E\psi '' (S_{n})v (u)^{2}u +\frac{1}{6}E\psi ''' (S_{n}+v(\tilde{u}))v (u)^{3}u
\end{displaymath}
for some $v(\tilde{u})$ intermediate between $0$ and $u$.  By
\eqref{eq:E-psi-small}, the expectation $E\psi(S_{n})$ is of order no
larger than $O(n^{-1/2})$; hence, by Lemma~\ref{lemma:psi-psi-prime},
there exists $C' <\infty$ such that
\begin{displaymath}
  \max (E\psi(S_{n}), E|\psi'(S_{n})| , E|\psi''(S_{n})|) \leq C'/\sqrt{n}.
\end{displaymath}
Since $|v(u)|\leq\gamma/\sqrt{n}$ for all $u\in J_{\max}$, and since
$\psi'''$ is bounded, it follows
that the Taylor expansion can be rewritten as
\begin{displaymath}
  2v(u)= E\psi(S_{n})u +\frac{1}{2} E\psi'(S_{n})E\psi(S_{n})u^{2} + R_{n}(u)
  \quad \textrm {for all} \; u\in J_{\max},
\end{displaymath}
where $|R_{n})(u)|\leq C'n^{-3/2}$ for a constant $C'<\infty$ independent of
$n$ and $u$. Now the extremist region $[\underline{u} ,\overline{u}
]\setminus J_{\max}$ has $F-$probability of order $n^{-3/2}$, by
Proposition~\ref{proposition:max-cont-interval}, and so the event
$G^{c}$ that the sample has at least one extremist has probability of
order no larger than $O(n^{-1/2})$ ; consequently, since
$\mu=EU_{i}=0$, the
second form of the Taylor expansion implies that
\begin{gather}
  \label{eq:cond-mean}
  |2E(S_{n}\,|\,G)- \frac{n}{2}
  E\psi'(S_{n})E\psi(S_{n})\sigma^{2}|=O(n^{-1/2}) \quad \textrm
  {and}\\
  \label{eq:cond-var}
  |4 \var (S_{n}\,|\,G)-n(E\psi(S_{n}))^{2}\sigma^{2}|=O(n^{-1/2})  
\end{gather}

Inequality \eqref{eq:cond-var} implies that $\var (S_{n}\,|\,G)$
remains bounded, by \eqref{eq:E-psi-small}. We will now argue that
$\var (S_{n}\,|\,G)$ must also remain bounded away from $0$. Suppose
not; then for indefinitely large sample sizes $n$ there would be Nash
equilibria for which $\var (S_{n}\,|\,G)$, and hence also
$n(E\psi(S_{n}))^{2}$, approach $0$.  But then, by
Lemma~\ref{lemma:psi-psi-prime} and relation \eqref{eq:cond-mean}, it
would also be the case that $E(S_{n}\,|\,G)$ approaches $0$, and so by
Chebyshev's inequality, for every $\alpha>0$
\begin{displaymath}
  P(|S_{n}|\geq \alpha \,|\, G) \longrightarrow 0.
\end{displaymath}
Since $P(G)\rightarrow 1$, this would force
$E\psi(S_{n})\rightarrow \psi(0)>0$, contradicting
\eqref{eq:E-psi-small}.

Now recall that  the Berry-Essen
theorem  (Corollary~\ref{corollary:berry-esseen}) implies that for
some constant $\kappa<\infty$ independent of $n$,
\[
	|P \{S_{n}-E (S_{n}|G)\leq t\sqrt{\var (S_{n}|G)} \}-\Phi
	(t)|\leq \kappa n^{-1/2}  \quad \text{for all} \; t\in \zz{R}.
\]
 Since the standard normal distribution gives positive probability to
 any nonempty open interval, and since
 $\var(S_{n}\,|\,G)$ and $E(S_{n}\,|\,G)$ remain
 bounded as $n \rightarrow \infty$, with $\var (S_{n}\,|\,G)$ also
 bounded away from $0$, it follows that
 \begin{displaymath}
   \liminf _{n \rightarrow \infty}P \xset {S_{n}\in [-\delta/2 ,\delta/2]}>0.
 \end{displaymath}
 Since $\psi$ is bounded away from $0$ on the interval
 $[-\delta/2,\delta/2]$, this implies that
 \begin{displaymath}
   \liminf _{n \rightarrow \infty}E\psi(S_{n})>0,
 \end{displaymath}
 contradicting \eqref{eq:E-psi-small}.

\end{proof}

\begin{corollary}
  \label{corollary:1st2ndEsts}
  If $\mu=0$ then for any $\epsilon>0$, if $n$ is sufficiently large
  then for any Nash equilibrium $v(u)$ the first and second
  derivatives satisfy
  \begin{equation}
    \label{eq:1st2ndEst}
    1-\epsilon \leq \frac{2v'(u)}{E\psi(S_{n})}\leq 1+\epsilon \quad
    \textrm {and} \quad
    |v''(u)|\leq \epsilon E\psi(S_{n}).
  \end{equation}
\end{corollary}

\begin{proof}
  This follows from the approximate proportionality
  rule~\eqref{eq:proportionality} and the formulas
  \eqref{eq:first-derivative}--\eqref{eq:second-derivative} for the
  derivatives $v',v''$. The key is that for every $\epsilon>0$, if
  $n$ is large then every Nash
  equilibrium satisfies $\xnorm{v}_{\infty}<\epsilon$, by
  Proposition~\ref{theorem:no-extremists}; together with
  equation~\eqref{eq:proportionality}, this implies that
  $0<E\psi(S_{n})<\epsilon $. Since $\epsilon>0$ can be taken
  arbitrarily small, it follows from Lemma~\ref{lemma:psi-psi-prime}
  that $E|\psi'(S_{n})|<\epsilon$. 

  Using formula~\eqref{eq:first-derivative} for $v'(u)$ and Taylor's
  theorem (twice), we obtain
  \begin{multline*}
    2v'(u)=E\psi(S_{n}) +E\psi'(S_{n})v(u)+E\psi'(S_{n})v'(u)u \\
    +\frac{1}{2} E
    \psi''(S_{n}+v(\tilde{u}))v(u)^{2}+E\psi''(S_{n}+v(\hat{u})) v'(u)v(u)u
  \end{multline*}
  where $\tilde{u}$ and $\hat{u}$ are intermediate between $0$ and
  $u$. Since $E\psi(S_{n})$, $E|\psi'(S_{n})|$, and
  $\xnorm{v}_{\infty}$ are all smaller than $\epsilon$, where
  $\epsilon>$ can be taken arbitrarily small, and since
  $|v(u)|\leq \frac{1}{2} E\psi(S_{n})\max (|\underline{u}
  |,\overline{u})$, all terms in the above equation for $2v'(u)$ are
  small compared to the term $E\psi(S_{n})$. This proves the first set
  of inequalities in \eqref{eq:1st2ndEst}.

  The inequality for the second derivative in \eqref{eq:1st2ndEst} can
  be proved in similar fashion. Use
  formula~\eqref{eq:second-derivative} together with Taylor's theorem
  to write
  \begin{multline*}
    2v''(u)=E\psi'(S_{n})(2v'(u)+v''(u)u) + E\psi''(S_{n})v'(u)^{2}u
    \\
    +E\psi''(S_{n}+v(\tilde{u}))v(u)(2v'(u)+v''(u)u)+E\psi'''(S_{n}+v(\hat{u}))v(u)v'(u)^{2}u
  \end{multline*}
  where $\tilde{u}$ and $\hat{u}$ are intermediate between $0$ and
  $u$. As above, for large $n$ the quantities $E\psi(S_{n})$,
  $E|\psi'(S_{n})|$, and $\xnorm{v}_{\infty}$ are all smaller than
  $\epsilon$, where $\epsilon>$ can be taken arbitrarily small. By the
  result of the preceding paragraph,
  $2|v'(u)|\leq 2(1+\epsilon)E\psi(S_{n})$ for large
  $n$. Consequently, all terms in the displayed equation for $2v''(u)$
  are small compared to the first, and so
  \begin{displaymath}
    2v''(u)=E\psi'(S_{n})(2v'(u))(1+o(1)).
  \end{displaymath}
  This proves that $|v''(u)|$ is small compared to $E|\psi'(S_{n})|$
  when $n$ is large.
\end{proof}

Because $\xnorm{v}_{\infty}$ is small for any Nash equilibrium $v$, the
distribution of the vote total $S_{n}$ cannot be too highly
concentrated. This in turn implies the proportionality constant
$E\psi (S_{n})$ in \eqref{eq:proportionality} cannot be too small.

\begin{lemma}\label{lemma:big-var}
For any $C<\infty$ there exists  $n_{C}<\infty$ exists such that for all
$n\geq n_{C}$ and every Nash equilibrium,
\begin{equation}\label{eq:big-var}
	nE\psi (S_{n})\geq C.
\end{equation}
\end{lemma}

\begin{proof}
  Since $\mu=0$, the approximate proportionality
  rule~\eqref{eq:proportionality} and the necessary
  condition~\eqref{eq:weierstrass} imply that for any $\epsilon >0$
  and all sufficiently large $n$,
\[
	|ES_{n}|\leq n\epsilon E\psi (S_{n})E|U| .
\]
Thus, by Hoeffding's inequality (Corollary~\ref{corollary:hoeffding}),
if $nE\psi (S_{n})<C$ then
the distribution of $S_{n}$ must be highly concentrated in a
neighborhood of $0$.  But if this were so we would have, for all large
$n$,
\[
	E\psi (S_{n})\approx \psi (0)>0,
\]
which is a contradiction.
\end{proof}

\subsection{Edgeworth expansions}\label{ssec:edgeworth}

For the analysis of the case $\mu=0$ refined estimates of the
errors in the approximate proportionality rule
\eqref{eq:proportionality} will be necessary. We derive these from the
Edgeworth expansion for the density of a sum of independent,
identically distributed random variables (cf. \cite{feller}, Ch. XVI,
sec. 2, Th. 2).  The relevant summands here are the random variables
$v (U_{i})$, and because the function $v (u)$ depends on the
particular Nash equilibrium (and hence also on $n$), we must employ a
version of the Edgeworth expansion in which the error is precisely
quantified. The following variant of Feller's Theorem 2 (which can be
proved in the same manner as in \cite{feller}) will suffice for our
purposes.

\begin{proposition}\label{theorem:edgeworth}
Let $Y_{1},Y_{2},\dotsc ,Y_{n}$ be independent, identically
distributed random variables with mean $EY_{1}=0$, variance
$EY_{1}^{2}=1$, and finite $2r$th  moment $E|Y_{1}|^{2r}=\mu_{2r}\leq
m_{2r}$. Assume the distribution of $Y_{1}$ has a density $f_{1}
(y)$ whose Fourier transform $\hat{f_{1}}$ satisfies $|\hat{f_{1}}
(\theta)|\leq g (\theta)$, where $g$ is a $C^{2r}$ function such that
$g\in L^{\nu}$ for some $\nu \geq 1$ and such that for every $\epsilon
>0$,
\begin{equation}\label{eq:g-condition}
	\sup_{|\theta |\geq \epsilon}g (\theta)<1.
\end{equation}
Then for some sequence $\epsilon_{n} \rightarrow 0$ depending only
on $m_{2r}$ and on the  function $g$,  the density $f_{n} (y)$
of $\sum_{i=1}^{n}Y_{i}/\sqrt{n}$ satisfies
\begin{equation}\label{eq:edgeworth}
	\bigg|f_{n} (x)-\frac{e^{-x^{2}/2}}{\sqrt{2\pi n}}\left(1+
	\sum_{k=3}^{2r}n^{- (k-2)/2}P_{k} (x)
	\right)
	\bigg| \leq \frac{\epsilon_{n}}{{n}^{r-1}}
\end{equation}
for all $x\in \zz{R}$, where $P_{k} (x)=C_{k}H_{k} (x)$ is a multiple
of the $k$th Hermite polynomial $H_{k} (x)$, and  $C_{k}$
is a continuous function of  the moments $\mu_{3},\mu_{4},\dotsc
,\mu_{k}$ of $Y_{1}$.
\end{proposition}

The following lemma ensures that in any Nash equilibrium the sums
$S_{n}=\sum_{i=1}^{n}v (U_{i})$, after suitable renormalization, meet
the requirements of Proposition~\ref{theorem:edgeworth}.

\begin{lemma}\label{lemma:edgeworth-hypotheses}
There exist
 constants $0<\sigma_{1}<\sigma_{2}<m_{2r}<\infty$ and a
function $g (\theta)$  satisfying the hypotheses of
Proposition~\ref{theorem:edgeworth} (with $r=4$)  such that for all
sufficiently large $n$ and any Nash equilibrium $v (u)$ the following
statement holds.  If $w(u)=2v (u)/E\psi (S_{n})$ 
\begin{compactenum}
\item [(a)] $\sigma_{1}^{2}<\var (w (U_{i}))<\sigma_{2}^{2}$;
\item [(b)] $E|w (U_{i})-Ew (U_{i})|^{2r}\leq m_{2r}$; and
\item [(c)] the random variables $w (U_{i})$ have density $f_{W} (w)$ whose  Fourier
transform is bounded in absolute value by $g$.
\end{compactenum}
\end{lemma}

\begin{proof}
These statements are consequences of the proportionality relations
\eqref{eq:proportionality} and the smoothness of Nash equilibria.  By
Proposition~\ref{theorem:no-extremists}, Nash equilibria are
continuous on $\left[\underline u,\overline u \right]$ and for large $n$ satisfy
$\xnorm{v}_{\infty}<\epsilon$, where $\epsilon >0$ is any small
constant. Consequently, by
Proposition~\ref{proposition:proportionality}, the proportionality
relations \eqref{eq:proportionality} hold on the entire interval
$\left[\underline u,\overline u\right]$. Because $EU_{1}=0$, it
follows that for any $\epsilon >0$, if 
$n$ is sufficiently large then $|Ew (U_{i})|<\epsilon$, and so
assertions (a)--(b) follow routinely from \eqref{eq:proportionality}.

The existence of the density $f_{W} (w)$ follows from the smoothness
of Nash equilibria and smoothness of the sampling density $f$, by
standard change-of-variables rules of calculus.  Any Nash equilibrium
$v(u)$ is continuous on the entire interval
$[\underline{u} ,\overline{u} ]$; hence, by
Proposition~\ref{proposition:max-cont-interval}, $v(u)$ is
$C^{\infty}$ on $(\underline{u} ,\overline{u})$. Thus, if $U$ is
a random variable with density $f (u)$ then the random variable
$W:=2v (U)/E\psi (S_{n})$ has density
\begin{equation}\label{eq:w-density}
	f_{W} (w)=f (u)E\psi (S_{n})/ (2 v' (u)), \quad \text{where} \;
	w=2v (u)/E\psi (S_{n}),
\end{equation}
at every point $w$ such that $v'(u)\not=0$. It follows by
Corollary~\ref{corollary:1st2ndEsts} that for any $\epsilon>0$, if $n$
is sufficiently large then
\begin{displaymath}
  (1-\epsilon)f(u)\leq f_{W}(w)\leq(1+\epsilon)f(u) \quad \text{where}
  \;	w=2v (u)/E\psi (S_{n}).
\end{displaymath}
Furthermore, since the functions $f(u)$, $v(u)$, and $\psi$ are all
$C^{\infty}$ with compact support, equation~\eqref{eq:w-density}
implies that the density $f_{W}(w)$ is continuously differentiable,
with derivative
\begin{displaymath}
  f_{W}' (w)= \frac{f' (u) (E\psi (S_{n}))^{2}}{4v' (u)^{2}}-\frac{f
    (u) (E\psi (S_{n}))^{2}v'' (u)}{4v' (u)^{3}}. 
\end{displaymath}
By Corollary~\ref{corollary:1st2ndEsts} implies that the ratio
$(E\psi(S_{n}))^{2}/4v'(u)^{2}$ is bounded above and below by $1\pm
\epsilon$ when $n$ is large, and also that $|v''(u)|$ is small
compared to $|v'(u)|$, so it follows that for large $n$ the ratio $|f'_{W}(w)|/|f'(u)|$
is uniformly close to $1$. Thus, in particular,
\begin{equation}\label{eq:density-derivative-bound}
	|f_{W}' (w)|\leq \kappa 	
\end{equation}
where $\kappa <\infty$ is a constant that does not depend on either
$n$ or on the choice of Nash equilibrium.

The last step is to prove the existence of a dominating function $g(\theta)$
for the Fourier transform of $f_{W}$. This will rely on the
differentiablility of the density  $f_{W}(w)$ and the
inequality~\eqref{eq:density-derivative-bound}. We will analyze the
Fourier transform in three
regions: (i) for values $|\theta |\leq \gamma $, where $\gamma  >0$
is a small fixed constant; (ii) for values $|\theta |\geq K$, where
$K$ is a large but fixed constant; and (iii) for $\gamma  <|\theta
|<K$. Region (i) is easily dealt with, in view of the bounds (a)--(b)
on the second and third moments and the estimate $|Ew (U)|<\epsilon'$,
as these together with Taylor's theorem imply that for all $|\theta
|<1$,
\[
	|\hat{f}_{W} (\theta)- (1+i\theta Ew (U) -\theta^{2}\var (w
	(U))/2 |\leq m_{3} |\theta |^{3}.
\]
Next consider region (ii), where $|\theta |$ is large. Integration by
parts shows that
\[
	\hat{f}_{W} (\theta)=\int_{w (\underline u )}^{w (\overline u)}
	f_{W} (w)e^{i\theta w}\,dw = -\int_{w (\underline u )}^{w
	(\overline u)} \frac{e^{i\theta w}}{i\theta} f_{W}' (w)\,dw + 
	\frac{e^{i\theta w}}{i\theta} f_{W} (w) \bigg\vert_{w
	\underline u}^{w (\overline u)};	 
\]
because $f_{W} (w)$ is uniformly bounded at $w (\underline u )$ and $w (\overline u)$, by
\eqref{eq:derivative-bounds} and \eqref{eq:w-density}, and because
$|f_{W}' (w)|\leq \kappa$, by \eqref{eq:density-derivative-bound}, it
follows that there is a constant $C<\infty$  such that for all
sufficiently large $n$ and all Nash equilibria,
\[
	|\hat{f}_{W} (\theta)|\leq C/|\theta | \quad \forall \; \theta \not =0.
\]
Thus,  setting $g (\theta)=C/|\theta |$ for all $|\theta |\geq 2C$,
we have a uniform bound for the Fourier transforms $\hat{f}_{W}
(\theta)$ in the region (ii).

Finally, to bound $|\hat{f}_{W} (\theta)|$ in the region (iii) of
intermediate $\theta -$values, we use the proportionality rule once
again in the form $|w (u)-u|<\epsilon$, valid for all $u\in
[\underline u ,\overline u]$. This implies
\begin{align*}
	\hat{f}_{W} (\theta) &=\int_{\underline u}^{\overline u} e^{i\theta w (u)}f (u) \,du\\
		    &=\int_{\underline u}^{\overline u} e^{i\theta u}f (u) \,du
		    +\int_{\underline u}^{\overline u} (e^{i\theta w (u)}-e^{i\theta u})
		    f (u)\,du \\
		    &=\hat{f}_{U} (\theta) + R (\theta)
\end{align*}
where $|R (\theta)|<\epsilon '$ uniformly for $|\theta |\leq C$ and
$\epsilon ' \rightarrow 0$ as $\epsilon \rightarrow 0$. Because
$\hat{f}_{U}$ is the Fourier transform of an absolutely continuous
probability density, its absolute value is bounded away from $1$ on
the complement of $[-\gamma ,\gamma]$, for any $\gamma >0$. Since
$\epsilon >0$ can be made arbitrarily small
(cf. Proposition~\ref{proposition:proportionality}), it follows that
 a continuous, positive  function $g (\theta )$ that is bounded away
from 1 on $|\theta |\in [\gamma ,C]$ exists such that
$|\hat{f}_{W})\theta|\leq g (\theta)$ for all $|\theta |\in [\gamma ,C]$.
The extension of $g$ to the whole real line can now be done by
smoothly interpolating at the boundaries of regions (i), (ii), and (iii).
\end{proof}

\subsection{Proof of  Theorem
1}\label{ssec:proof-balanced}

Because the function $\psi$ is smooth and has compact support,
differentiation under the expectation in the  necessary condition $2v
(u)=E\psi (v (u)+S_{n})u$  is permissible, and so for every $u\in
[-\underline u ,\overline u]$ a $\tilde{v} (u)$ exists
intermediate between $0$ and $v (u)$ such that
\begin{equation}\label{eq:taylor}
	2v (u)=E\psi (S_{n})u+E\psi ' (S_{n}+\tilde{v} (u))v (u)u .
\end{equation}
The proof of  Theorem 1 will hinge on the use
of the Edgeworth expansion (Proposition~\ref{theorem:edgeworth}) to
approximate each of the two expectations in \eqref{eq:taylor}
precisely.

As in Lemma~\ref{lemma:edgeworth-hypotheses}, let $w(u)=2v (u)/E\psi
(S_{n})$.  We have already observed, in the proof of
Lemma~\ref{lemma:edgeworth-hypotheses}, that for any $\epsilon >0$, if
$n$ is sufficiently large then for any Nash equilibrium, $|Ew
(U)|<\epsilon$. It therefore follows from the proportionality rule
that
\begin{equation}\label{eq:moment-proportionality}
	  \bigg|\frac{4\,\var (v (U))}{(E\psi
	  (S_{n}))^{2}\sigma_{U}^{2}}-1\bigg|\leq \epsilon
	  \quad \text{and} \quad
	  \bigg| \frac{E |v (u)-Ev (u)|^{k}}{( E\psi
	  (S_{n}))^{k}E|U|^{k}} \bigg|< \epsilon \; \; \forall \,k\leq 8.
\end{equation}
Moreover, Lemma~\ref{lemma:edgeworth-hypotheses} and
Proposition~\ref{theorem:edgeworth} imply the distribution of $S_{n}$
has a density with an Edgeworth expansion, and so for any
continuous function $\varphi :[-\delta ,\delta]\rightarrow \zz{R}$,
\begin{equation}\label{eq:edgeworth-integral}
	E\varphi (S_{n})=\int_{-\delta}^{\delta} \varphi (x)
	\frac{e^{-y ^{2}/2}}{\sqrt{2\pi n}\sigma_{V}}
	\left(1+ \sum_{k=3}^{m}n^{- (k-2) /2}P_{k} (y)\right) \,dx
	+r_{n} (\varphi )
\end{equation}
where
\begin{align*}
	&\sigma_{V}^{2}:=\var (v (U)),\\
	&y=y (x)= (x-ES_{n})/\sqrt{\var (S_{n})},
\end{align*}
and $P_{k} (y)=C_{k}H_{3} (y)$ is a multiple of the $k$th  Hermite
polynomial. The constants $C_{k}$ depend only on the first $k$ moments
of $w (U)$, and consequently are uniformly bounded by  constants
$C_{k}'$ not depending on $n$ or on the choice of Nash
equilibrium. The error term $r_{n} (\varphi)$ satisfies
\begin{equation}\label{eq:remainder}
	|r_{n} (\varphi  )|\leq
	\frac{\epsilon_{n}}{n^{(m-2)/2}}\int_{-\delta}^{\delta} \frac{|\varphi
	(x)|}{\sqrt{2\pi \var (S_{n})}} \,dx.
\end{equation}
In the special case $\varphi =\psi$, \eqref{eq:edgeworth-integral} and the remainder estimate
\eqref{eq:remainder} (with $m=4$) imply 
\[
	E\psi (S_{n})\leq \frac{1}{\sqrt{2\pi
	n}\sigma_{V}}\int_{-\delta}^{\delta} \psi (x)\,dx  +o (n^{-1}\sigma_{V}^{-1}).
\]
Because $4\, \sigma_{V}^{2}\approx (E\psi (S_{n}))^{2}\sigma_{U}^{2}$
for large $n$, this implies that for a suitable constant $\kappa
<\infty$,
\begin{equation}\label{eq:not-too-big}
	E\psi (S_{n})\leq \frac{\kappa}{\sqrt[4]{n}}.
\end{equation}

\begin{claim}\label{claim:mean-var}
There exist constants $\alpha_{n}\rightarrow \infty$  such that in
every Nash equilibrium,
\begin{gather}\label{eq:meanBound}
	|ES_{n}|\leq \alpha_{n}^{-1}\sqrt{\var (S_{n})} \quad \text{and}\\
\label{eq:varBound}
	\var (S_{n})\geq \alpha_{n}^{2}.
\end{gather}
\end{claim}

\begin{proof}
[Proof of  Theorem \ref{mu0characterization}: Conclusion]
Before we begin the proof of the claim, we indicate how it will imply 
Theorem \ref{mu0characterization}. If \eqref{eq:meanBound} and
\eqref{eq:varBound} hold, then for every $x\in [-\delta ,\delta]$,
\[
	|y (x)|\leq (1+2\delta )/\alpha_{n} \rightarrow 0.
\]
Consequently, the dominant term in the Edgeworth expansion
\eqref{eq:edgeworth-integral} for $\varphi =\psi$ (with $m=4$), is the first,
and so for any $\epsilon >0$, if  $n$ is sufficiently large,
\[
	E\psi (S_{n})=\frac{1}{\sqrt{2\pi
	n}\sigma_{V}}\int_{-\delta}^{\delta} \psi (x) \,dx (1 \pm \epsilon ).
\]
(Here the notation $(1\pm \epsilon)$ means the ratio of the two
sides is bounded above and below by $(1\pm \epsilon)$.)
Thus $4\, \sigma_{V}^{2}\approx (E\psi (S_{n}))^{2}\sigma_{U}^{2}$
implies
\[
	\sqrt{\pi n/2}\sigma_{U} (E\psi (S_{n}))^{2} =
	\int_{-\delta}^{\delta}\psi (x) \,dx (1\pm \epsilon)
	=2\pm 2\epsilon ,
\]
proving the assertion \eqref{eq:e-vote-total}.
\end{proof}

\begin{proof}
[Proof of Claim~\ref{claim:mean-var}]
First we deal with the remainder term $r_{n} (\varphi)$ in the
Edgeworth expansion \eqref{eq:edgeworth-integral}. By
Lemma~\ref{lemma:big-var}, the expectation $E\psi (S_{n})$ is at least
$C/n$ for large $n$, and so by \eqref{eq:moment-proportionality} the
variance of $S_{n}$  must be at least $C'/n$. Consequently, by
\eqref{eq:remainder}, the remainder term $r_{n} (\varphi)$ in
\eqref{eq:edgeworth-integral} satisfies
\[
	|r_{n} (\varphi )|\leq C''  \frac{\epsilon_{n}\xnorm{\varphi}_{1}}{n^{(m-2)/2}\sqrt{\var (S_{n})}}\\
	       \leq C''' \frac{\epsilon_{n}\xnorm{\varphi}_{1}}{n^{(m-3)/2}}.
\]
Suitable choice of $m$ will make this bound small compared to any
desired monomial $n^{-A}$, and so we may ignore the remainder term in
the arguments to follow.

Suppose there were a constant $C<\infty$  such that along some
sequence $n \rightarrow \infty$  Nash equilibria existed satisfying
$\var (S_{n})\leq C$. By \eqref{eq:moment-proportionality}, this would
force $C/n\leq E\psi (S_{n})\leq C'/\sqrt{n}$, which in turn would
imply that
\begin{equation}\label{eq:mean-too-big}
	 C''\var (S_{n})\log n \geq |ES_{n}|^{2}\geq C'''\var (S_{n})\log n ,
\end{equation}
because otherwise the dominant term in the Edgeworth series for $E\psi
(S_{n})$ would be either too large or too small asymptotically (along
the sequence $n \rightarrow \infty$) to match the requirement that
$C/n\leq E\psi (S_{n})\leq C'/\sqrt{n}$.  (Observe that because the
ratio $|ES_{n}|^{2}/\var (S_{n})$ is bounded above by $C''\log n$,
the terms $e^{-y^{2}/2}P_{k} (y)$ in the integral
\eqref{eq:edgeworth-integral} are of size at most $(\log n)^{A}$ for
some $A$ depending on $m$, and so the first term in the Edgeworth
series is dominant.)  We will show that \eqref{eq:mean-too-big} leads
to a contradiction.

Suppose  $ES_{n}>0$ (the case $ES_{n}<0$ is similar). The Taylor
expansion \eqref{eq:taylor} for $v (u)$ and the hypothesis  $EU=0$
implies 
\begin{equation}\label{eq:vMean}
	2Ev (U)=E\psi ' (S_{n}+\tilde{v} (U))v (U)U.
\end{equation}
The Edgeworth expansion
\eqref{eq:edgeworth-integral}  for $E\psi ' (S_{n}+\tilde{v} (u))$
together with the independence of $S_{n}$ and $U$ and the inequalities
\eqref{eq:mean-too-big}, implies that  for any $\epsilon >0$, if $n$
is sufficiently large then
\begin{multline}\label{eq:psiPrimeMean}
	E\psi ' (S_{n}+\tilde{v} (u))\\
	 =\frac{1}{\sqrt{2\pi \var (S_{n})}}\int_{-\delta}^{\delta} \psi '
	(x) \exp \{- (x+\tilde{v} (u)-ES_{n})^{2}/2\var (S_{n})\}
	 \,dx (1\pm \epsilon ).
\end{multline}
Now because $\psi$ and $\psi '$ have support $[-\delta ,\delta]$,
integration by parts yields
\begin{multline}\label{eq:integrateByParts}
	\int_{-\delta}^{\delta} \psi '
	(x) \exp \{- (x+\tilde{v} (u)-ES_{n})^{2}/2\var (S_{n})\}
	 \,dx \\
	  =
	 \int_{-\delta}^{\delta} \psi (x) \exp \{- (x+\tilde{v}
	 (u)-ES_{n})^{2}/2\var (S_{n})\} \frac{x+\tilde{v} (u)-ES_{n}}{\var (S_{n})}
	 \,dx,
\end{multline}
and because $x+\tilde{v} (u)$ is of smaller order of magnitude than
$ES_{n}$, it follows that for large $n$
\begin{equation}\label{eq:IBP-Consequence}
	E\psi ' (S_{n}+\tilde{v} (u))=-\frac{ES_{n}}{\var (S_{n})}
	E\psi (S_{n}) (1\pm \epsilon).
\end{equation}
But it now follows from the Taylor series for $2Ev (U_{i})$ (by
summing over $i$) that
\begin{equation}\label{eq:IBP-Consequence-Contra}
	2ES_{n}=-n\frac{ES_{n}}{\var (S_{n})}E\psi (S_{n}) Ev (U)U (1\pm \epsilon ),
\end{equation}
which is a contradiction, because the right side is negative and the
left side positive. This proves the assertion \eqref{eq:varBound}.

The proof of inequality \eqref{eq:meanBound} is similar. Suppose 
for some $C>0$  Nash equilibria existed along a sequence $n
\rightarrow \infty$ for which $ES_{n}\geq C\sqrt{\var (S_{n})}$. In
view of  \eqref{eq:varBound}, this hypothesis implies in particular that
$ES_{n}\rightarrow \infty$, and also that $|y (x)|\geq C/2$ for all
$x\in [-\delta ,\delta]$. Thus, the Edgeworth approximation
\eqref{eq:psiPrimeMean} remains valid, as does the integration by parts
identity \eqref{eq:integrateByParts}. Because $ES_{n}\rightarrow
\infty$, the terms $x+\tilde{v} (u)$ are of smaller order of magnitude
that $ES_{n}$, and so once again
\eqref{eq:IBP-Consequence} and therefore
\eqref{eq:IBP-Consequence-Contra} follow. Again we have a
contradiction,  because the right side of
\eqref{eq:IBP-Consequence-Contra} is negative while the left side
diverges to $+\infty$.

\end{proof}

\section{Conclusion}\label{conc}
In this paper we have shown the asymptotic efficiency of QV in a canonical non-cooperative independent private values environment.  We hope future research will further clarify the performance of QV in environments where collusion is possible, with aggregate uncertainty and partially common values and under a wider range of distributional assumptions.  Hopefully such analyses will inform and be guided by on-going field experiments with QV.

\newcommand{\E}{\mathbb{E}}
\newcommand{\V}{\mathbb{V}}

\appendix

\section{Strict Monotonicity  of Nash
Equilibria}\label{sec:monotonicity}

\begin{lemma}\label{proposition:no-zeros}
If  $v (u)$ is a Nash
equilibrium then $v (u)\not =0$ for
all $u\not =0$.
\end{lemma}

\begin{proof}
  Since any Nash equilibrium $v(u)$ is a nondecreasing function of $u$
  (by Proposition~\ref{proposition:existence}), if $v (u)=0$ for
  some $u>0$ then $v (u')=0$ for all $u'\in (0,u)$. Because the
  density $f (u)$ of the value distribution $F$ is strictly positive
  on $\left[\underline u,\overline u\right]$, it follows that the
  probability $p$ that every agent in the sample casts vote $V_{i}=0$
  is strictly positive. But then an agent with utility $u$ could
  improve her expectation by buying $\varepsilon >0$ votes, where
  $\varepsilon \ll u\psi (0)p$, because the expected utility gain
  would be at least
\[
	u \Psi (\varepsilon) p\sim  u\psi (0)p\varepsilon
\]
at a cost of $\varepsilon^{2}$. Because by hypothesis $\psi (0)>0$, the
expected utility gain would overwhelm the increased vote cost for
small $\varepsilon >0$.
\end{proof}

\begin{corollary}\label{corollary:strict-monotonicity} Any Nash
equilibrium $v (u)$ is \emph{strictly} increasing on $\left[\underline
u,\overline u\right]$.
\end{corollary}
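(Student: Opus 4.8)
The plan is a short contradiction argument built directly on Lemmas~\ref{proposition:monotonicity}, \ref{proposition:no-zeros}, and \ref{proposition:weierstrass}. Suppose a (pure-strategy) Nash equilibrium $v$ were not strictly increasing on $[\underline u,\overline u]$. By Lemma~\ref{proposition:monotonicity} it is nondecreasing, so the only way strict monotonicity can fail is if $v$ is constant on some nondegenerate subinterval $[u_{1},u_{2}]$ with $u_{1}<u_{2}$; write $c$ for the common value, $v\equiv c$ on $[u_{1},u_{2}]$.

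First I would treat the case $c=0$. Any nondegenerate interval $[u_{1},u_{2}]$ contains a point $u^{*}\neq 0$, and then $v(u^{*})=0$ contradicts Lemma~\ref{proposition:no-zeros}. Next, the case $c\neq 0$. Here I invoke the pointwise necessary condition \eqref{eq:weierstrass}: for every $u\in[u_{1},u_{2}]$ we have $E\psi(S_{n}+v(u))u=2v(u)$, which, since $v\equiv c$ on the interval, reduces to $E\psi(S_{n}+c)\,u=2c$ for all $u\in[u_{1},u_{2}]$. The factor $E\psi(S_{n}+c)$ is a number not depending on $u$, because $S_{n}$ is the vote total of the other $n$ agents, whose joint distribution is governed by the common strategy and the i.i.d.\ sampling of their utilities and is thus independent of the utility of the agent under consideration. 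So a fixed linear function of $u$ equals the constant $2c$ on a nondegenerate interval; comparing the values at two distinct points of $[u_{1},u_{2}]$ forces $E\psi(S_{n}+c)=0$, whence $2c=E\psi(S_{n}+c)\,u=0$, contradicting $c\neq 0$. (Equivalently: $E\psi(S_{n}+c)=2c/u$ cannot be constant in $u$ when $c\neq 0$, since $u\mapsto 2c/u$ is strictly monotone.)

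Since neither $c=0$ nor $c\neq 0$ is possible, $v$ is constant on no nondegenerate subinterval, and a nondecreasing function with this property is strictly increasing on $[\underline u,\overline u]$, which is the Corollary. I do not expect a substantive obstacle here: the only points worth stating carefully are that the argument uses \eqref{eq:weierstrass} in its pointwise form --- valid for pure-strategy equilibria by Lemma~\ref{proposition:weierstrass} --- together with the fact that $E\psi(S_{n}+c)$ does not vary with the utility of the agent whose best response we are examining; everything else is immediate from the monotonicity and no-zeros lemmas already established.
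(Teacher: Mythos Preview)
Your argument is correct, and in fact more elementary than the paper's. The paper proceeds by differentiating the necessary condition \eqref{eq:weierstrass}: at every point where $v'(u)$ exists one obtains
\[
E\psi(S_{n}+v(u))=\bigl(2-E\psi'(S_{n}+v(u))u\bigr)\,v'(u),
\]
and since the left side is strictly positive for $u\neq 0$ (by \eqref{eq:weierstrass} and Lemma~\ref{proposition:no-zeros}), this forces $v'(u)\neq 0$ at almost every $u$, hence $v'>0$ a.e., hence strict monotonicity. Your route instead exploits \eqref{eq:weierstrass} at two distinct points of a hypothetical flat stretch and derives an immediate algebraic contradiction, never touching differentiability of $v$ or the derivative $\psi'$. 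This is cleaner for the corollary as stated. The paper's approach, on the other hand, extracts a bit more: it effectively shows $v'(u)>0$ at every point of differentiability with $u\neq 0$, and it anticipates the first-order ODE \eqref{eq:ode} and the smoothness analysis of Section~\ref{ssec:smoothness} that are used heavily later.
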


\begin{proof}
  Proposition  \ref{proposition:existence} implies that $v(u)$ is
  nondecreasing in $u$, so it suffices to show that $v(\cdot )$ takes
  distinct values at distinct arguments $u_{1}\not=u_{2}$. By
  Lemma~\ref{proposition:no-zeros}, 
   $v(u)>0$ for $u>0$ and $v(u)<0$ for $u<0$; hence, by  the
  necessary condition \eqref{eq:weierstrass}, for every $u\not = 0$,
  \begin{displaymath}
    E\psi (S_{n}+v (u))>0 .
  \end{displaymath}

  Let $u_{1}\not = u_{2}$ be two distinct nonzero values. By the
  necessary condition  \eqref{eq:weierstrass},
  \begin{align*}
    E\psi (S_{n}+v (u_{1}))u_{1}&=v(u_{1})\quad \textrm {and}\\
    E\psi (S_{n}+v (u_{2}))u_{2}&=v(u_{2});
  \end{align*}
  since $E\psi (S_{n}+v (u_{1}))\not = 0$, it follows that
  \begin{displaymath}
    E\psi (S_{n}+v (u_{1}))u_{2}\not=v(u_{1}) \quad \Longrightarrow
    \quad v(u_{2})\not=v(u_{1}).
  \end{displaymath}
  Thus, the function $v(u)$ takes distinct values at distinct
  arguments $u$. By Lemma~\ref{proposition:existence}, $v(u)$ is
  nondecreasing in $u$; therefore, $v(u)$ is strictly increasing.  
\end{proof}

\section{Necessary Condition for a Nash Equilibrium}\label{sec:weierstrass}

\begin{proof}[Proof of Proposition \ref{proposition:weierstrass}]
  For an agent with value $u>0$ a best response $v=v(u)$ must maximize
  expected utility minus vote cost \eqref{eq:wealth-maximization},
  and so  for every $\Delta >0$, 
  \begin{align*}
    E\left\{\Psi (S_{n}+v+\Delta)-\Psi (S_{n}+v) \right\}u&\leq
                                                            2\Delta v +\Delta^{2} \quad \text{and}\\
    \notag 	E\left\{\Psi (S_{n}+v-\Delta)-\Psi (S_{n}+v)
    \right\}u&\leq 	-2\Delta v +\Delta^{2}.
  \end{align*}
  Dividing by $\Delta$ and letting $\Delta \rightarrow 0$ yields
  \begin{align*}
    2v&\geq \limsup_{\Delta \rightarrow 0 +} \frac{1}{\Delta}(
       E\Psi (S_{n}+v+\Delta)-E\Psi (S_{n}+v) ) \quad
       \textrm{and}\\
    -2v &\geq \limsup_{\Delta \rightarrow 0 +} \frac{1}{\Delta}(
       E\Psi (S_{n}+v-\Delta)- E\Psi (S_{n}+v)).
  \end{align*}
  Because $\Psi$ is continuously differentiable with compactly
  supported derivative $\psi$, the limsups can be taken under the
  expectations, where they become limits, and so
  \begin{displaymath}
    2v= E\psi(v+S_{n}).
  \end{displaymath}
  A similar argument applies to $u<0.$
\end{proof}

\section{Hoeffding's Inequality}\label{sec:hoeffding}

Hoeffding's inequality \cite{hoeffding} is a substantial sharpening of the Chebyshev
bound for sums of \emph{bounded} independent random variables.
Following are two variants of the inequality  adapted to the
needs of this paper; the second shows that for sums of bounded,
i.i.d. random variables the probabilities of ``large deviations'' are
exponentially decaying in the number of summands.

\begin{hoeffding}
  Let $X_{1},X_{2},\cdots , X_{n}$ be independent random variables
  satisfying $a \leq X_{i}\leq b$, where $a<b$ are two real
  constants. Then for any real $t>0$,
  \begin{align}
    \label{eq:hoeffding-ineq}
    P\xset{|S_{n}- E S_{n}| \geq t}&\leq 2 e^{-2t^{2}/n(b-a)^{2}} \quad
    \Longrightarrow \\
    P\xset{|S_{n}- E S_{n}| \geq nt}&\leq 2 e^{-2nt^{2}/(b-a)^{2}}
  \end{align}
\end{hoeffding}

\section{The Berry-Esseen Theorem}\label{sec:berry-esseen}

The Berry-Esseen theorem provides sharp bounds on the error in the
central limit theorem. For the proof and further discussion, see
\cite{feller}, section XVI. 5.

\begin{berryEsseenTh}
  Let $Y_{1},Y_{2}, \cdots $ be independent, identically distributed
  random variables such that
  \begin{equation}
    \label{eq:berryEsseenTh-hyp}
    EY_{1}=0, \quad EY_{1}^{2}=\sigma^{2}>0, \quad \textrm{and} \;
    E|Y_{1}|^{3}=\gamma <\infty.
  \end{equation}
  Then for all $x\in\mathbb{R}$ and all inegers $n \geq 1$,
  \begin{equation}
    \label{eq:berryEsseenTh}
    \bigg\vert P \xset {\frac{1}{\sigma\sqrt{n}}\sum_{i=1}^{n}Y_{i} \leq
      x}- \Phi (x) 
    \bigg \vert \leq \frac{3\gamma}{\sigma^{3}\sqrt{n}},
  \end{equation}
  where $\Phi$ denotes the standard normal cumulative distribution
  function.
\end{berryEsseenTh}

\begin{proof}[Proof of Proposition~\ref{proposition:llt}]
  
  First, we claim that it suffices to prove the inequality
  \eqref{eq:small-local-prob} for intervals of length $\alpha$. To see
  this, observe that any interval $J$ of length larger than $\alpha$
  is contained in an interval $J'$ whose length $|J'|=n\alpha$ is an
  integer multiple of $\alpha$ satisfying $n \alpha \leq 2
  |J|$. Clearly, the probability that
  $S^{Y}_{n}:=\sum_{i=1}^{n} Y_{i}$ falls in $J$ is no larger than the
  probability that it falls in $J'$, and by the Bonferroni inequality
  this probability is no larger than $n$ times the maximal probability
  over all intervals of length $\alpha$; thus, if the inequality holds
  for intervals of length $\alpha$ then it will hold (with $\epsilon$
  replaced by $2\epsilon$) for all intervals of length $\geq\alpha$ .
  
  Second, we can assume, without loss of generality, that $EY_{1}=0$
  and $\alpha =1$, because this can always be accomplished by
  translation and rescaling.  Thus, we must show that for any
  $\epsilon >0$ and $C<\infty$ there exist $C'=C'(\epsilon,C)$ and
  $n'=n'(\epsilon,C)$ such that if $n \geq n'$ and if the summands
  $Y_{i}$ are i.i.d. and satisfy the moment constraints
  \eqref{eq:var-to-third}, then for every real number $a$,
\begin{displaymath}\label{eq:llt-Obj}
  P \xset {S^{Y}_{n} \in [a,a+1]}\leq\epsilon .
\end{displaymath}

Let $\sigma^{2}=EY_{i}^{2}$ be the variance and $\gamma=E|Y_{i}|^{3}$
the third absolute moment of the summands. By hypothesis,
$\gamma \leq C\sigma^{3}$; consequently,
by the Berry-Esseen theorem,
\begin{displaymath}
  P \xset {S^{Y}_{n}\in [a,a+1]}\leq \Phi (\frac{a+1}{\sigma\sqrt{n}})-\Phi(\frac{a}{\sigma\sqrt{n}})
  + 6C /\sqrt{n}. 
 \end{displaymath}
 If the variance satisfies $\sigma^{2} \geq C' n$ then the interval
 $[a/\sigma \sqrt{n},(a+1)/\sigma \sqrt{n}]$ has length
 $(\sigma \sqrt{n})^{-1}$ bounded above by $1/\sqrt{C'}$; thus, if
 $C'$ is chosen large enough that $1/\sqrt{2 \pi C'}<\epsilon/2$ then
 for every $a \in \mathbb{R}$,
 \begin{displaymath}
    \Phi (\frac{a+1}{\sigma\sqrt{n}})-\Phi(\frac{a}{\sigma\sqrt{n}})
    =\frac{1}{\sqrt{2\pi}}\int_{\frac{a}{\sigma\sqrt{n}}}^{\frac{a+1}{\sigma\sqrt{n}}}e^{-t^{2}/2} 
   \,dt \leq \epsilon /2. 
 \end{displaymath}
Finally, if $n'$ is chosen so large  that $6C/\sqrt{n'}<\epsilon/2$,
then for all $n \geq n'$ we will have
\begin{displaymath}
  P \xset {S^{Y}_{n}\in [a,a+1]}\leq
  \Phi (\frac{a+1}{\sigma\sqrt{n}})-\Phi(\frac{a}{\sigma\sqrt{n}})+ 6C
  /\sqrt{n}\leq\epsilon .
 \end{displaymath}
\end{proof}

\section{Proof of  Proposition~\ref{proposition:min-problem}
}\label{sec:min-problem} 

We shall assume throughout that $\delta <1/\sqrt{2}$, and that the
function $\Psi$ and its derivatives satisfy the model assumptions
(M1)--(M4) of section~\ref{ssec:model}. Thus, $\psi/2$ is an even,
$C^{\infty}$ probability density with support $[-\delta ,\delta]$; it
has positive derivative $\psi '$ on $(-\delta ,0)$ (and hence negative
derivative on $(0,\delta)$); and there is a single point $\iota$ of
inflection in the interval $(-\delta ,0)$ such that $\psi'$ is
strictly increasing in $[-\delta,\iota]$ and strictly decreasing in
$[\iota ,0]$

Define 
\begin{equation}\label{eq:H}
	H (\alpha ,w)= (1-\Psi (w))|\underline u | - (\alpha -w)^{2}.
\end{equation}
Proposition~\ref{proposition:min-problem} asserts that, under the
assumption $\delta <1/\sqrt{2}$, there is a unique value $\xi >\delta$
such that (i) the maximum value of the function $w\mapsto H (\xi ,w)$
for $w\in [-\delta,\delta]$ is $0$, and (ii) this maximum is attained
at at a unique point $w\in (-\delta ,\delta)$. The next lemma
establishes the uniqueness of the value $\xi$; Lemma~\ref{lemma:w}
below will show the  uniqueness of the  maximizer $w$.

\begin{lemma}
  \label{lemma:unique-Opt}
  There is a unique $\xi>\delta$ such that
  $\max _{w\in[-\delta,\delta]}H(\xi,w)=0$. Moreover,
  \begin{compactenum}
  \item [(U1)] if $\alpha>\xi$ then
    $\max_{w\in[-\delta,\delta]}H(\alpha,w)<0$; and 
  \item [(U2)] if $\alpha <\xi$ then $\max_{w\in[-\delta,\delta]}H(\alpha,w)>0$.
  \end{compactenum}
\end{lemma}

\begin{proof}
  For each fixed $w\in[-\delta,\delta]$ the function
  $\alpha\mapsto H(\alpha,w)$ is {strictly} decreasing in the
  interval $\alpha\in [\delta,\infty)$, because its derivative
  $-2(\alpha-w)$ is negative throughout this interval. Hence,  since
  $H$ is jointly continuous in its arguments and since the interval
  $[-\delta,\delta]$ is compact, the function
  \begin{displaymath}
    h(\alpha):= \max_{w\in[-\delta,\delta]}H(\alpha,w)
  \end{displaymath}
  is continuous and strictly decreasing for
  $\alpha\in[\delta,\infty)$. Thus,  to complete the proof
  it suffices to show that there exists $\xi\in [\delta,\infty )$
  such that $h(\xi)=0$.

  Clearly, $\lim_{\alpha
    \rightarrow\infty}h(\alpha)=-\infty$, because to first order the
  maximum value of $H(\alpha,w)$ over $w\in[-\delta,\delta]$ is
  controlled by the quadratic term in \eqref{eq:H}, so by the
  Intermediate Value Theorem of calculus it is enough to show that
  $h(\delta)>0$. This is where the hypothesis that $\delta<1/\sqrt{2}$
  comes in, as it implies that $(\alpha-(-\delta))^{2}=4\delta^{2}<2$
  when $\alpha=\delta$. By the standing model assumptions
  (cf. sec.~\ref{ssec:model}), $|\underline{u} |\geq 1$, so
  \begin{displaymath}
    (1-\Psi (-\delta))|\underline{u} |=2|\underline{u} |\geq 2 \quad \Longrightarrow
    H(\delta,-\delta)> 0.
  \end{displaymath}
\end{proof}

\begin{lemma}
  \label{lemma:w} Under the standing hypotheses (M1)--(M4) of
  section~\ref{ssec:model}, there is a unique point $w_{*}\in (-\delta,\delta)$
  where the function $w\mapsto H(\xi,w)$ attains the value $0$.
\end{lemma}

\begin{proof}
  Local minima and maxima of the function $w\mapsto H(\xi,w)$ must be
  critical points, that is, points $w$ where the first partial derivative
  $\partial H/\partial w$ vanishes. We will argue that there are
  either two or three critical points  in the interval
  $(-\delta,\delta)$, and that one  of these is
  the unique point in $[-\delta,\delta]$ where $w\mapsto H(\xi,w)$
  achieves its maximum value.

  The  first and second partial derivatives of $H(\xi,w)$ with respect
  to $w$ are
  \begin{align*}
    \frac{\partial H}{\partial w}= -\psi(w)|\underline{u} |+2(\xi-w)
                                   \quad \textrm {and} \quad
    \frac{\partial ^{2}H}{\partial w^{2}}=-\psi'(w)|\underline{u} |-2.
  \end{align*}  
  By hypothesis, the function $\Psi$ is odd, and hence so is its
  second derivative $\psi'$.  Moreover, $\psi'$ is positive in
  $(-\delta,0)$ and consequently negative in $(0,\delta)$.  By
  assumption (M4), there is a unique $\iota \in(-\delta,0)$ such that
  $\psi'$ is strictly increasing on $[-\delta, \iota]$ and strictly
  decreasing on $[\iota ,0]$, and so $\psi'$ is  strictly
  decreasing on $[0,-\iota]$ and strictly increasing on
  $[\iota ,\delta]$. Therefore, the function
  $\partial ^{2}H/\partial w^{2}$ has at most two zeros in
  $[-\delta,\delta]$, both in the interval $(0,\delta )$. It now
  follows by a sign change argument that the first partial
  $\partial H/\partial w$ has at most three zeros in
  $[-\delta,\delta]$; these must be separated by zeros of $\partial
  ^{2}H/\partial w^{2}$.

  According to our standing assumptions, the reward function $\Psi$
  is identically $1$ in the interval $[\delta,\infty )$. Hence, since
  $\xi>\delta$, the function $w\mapsto H(\xi,w)$ is \emph{negative} in the
  interval $[\delta,\xi )$. Define $w_{*}\in (-\delta,\delta)$ by
  \begin{displaymath}
    w_{*}:=\max \xset {w \in [-\delta,\delta] \,:\, H(\xi,w)=0};
  \end{displaymath}
  Lemma~\ref{lemma:unique-Opt} ensures that $w_{*}$ is well-defined,
  and that $w_{*}$ is a critical point. To complete the proof, we must
  show that there are no other points $w \not = w_{*}$ where $H(\xi,w)=0$.

  Since $H(\xi,w)=0$ at the endpoints $w=w_{*}$ and $w=\xi$ and
  $H(\xi,w)<0$ for $w_{*}<w<\xi$, it follows that the function
  $w\mapsto H(\xi, w)$ attains a minimum value at some point
  $w_{+}\in (w_{*},\xi)$; this must also be a critical point. This
  accounts for two of the (at most) three critical points. Now suppose
  that there were a second point $w_{**}<w_{*}$ in the interval
  $(-\delta,\delta)$ where $H(\xi,w_{**})=0$. Since this point $w_{**}$ would
  be a local maximum of $w\mapsto H(\xi,w)$, it would necessarily be a
  third critical point. But because $H(\xi,w)<0$ for all
  $w\in (w_{**},w_{*})$, there would be at least one local minimum of
  $w\mapsto H(\xi,w)$ in the interval $ (w_{**},w_{*})$; this would be
  a fourth critical point, contradicting the fact that there are at
  most three.

\end{proof}


\begin{thebibliography}{28}
\providecommand{\natexlab}[1]{#1}
\providecommand{\url}[1]{\texttt{#1}}
\expandafter\ifx\csname urlstyle\endcsname\relax
  \providecommand{\doi}[1]{doi: #1}\else
  \providecommand{\doi}{doi: \begingroup \urlstyle{rm}\Url}\fi

\bibitem[Al-Najjar and Smorodinsky(2000)]{nabil}
Nabil~I. Al-Najjar and Rann Smorodinsky.
\newblock Pivotal players and the characterization of influence.
\newblock \emph{Journal of Economic Theory}, 92\penalty0 (2):\penalty0
  318--342, 2000.

\bibitem[Bowen(1943)]{bowen}
Howard~R. Bowen.
\newblock The interpretation of voting in the allocation of economic resources.
\newblock \emph{Quarterly Journal of Economics}, 58\penalty0 (1):\penalty0
  27--48, 1943.

\bibitem[Casella(2005)]{casella}
Alessandra Casella.
\newblock Storable votes.
\newblock \emph{Games and Economic Behavior}, 51\penalty0 (2):\penalty0
  391--419, 2005.

\bibitem[Casella and Sanchez(2019)]{storableqv}
Alsessandra Casella and Luis Sanchez.
\newblock Storable votes and quadratic voting. an experiment on four california
  propositions.
\newblock https://www.nber.org/papers/w25510, 2019.

\bibitem[Cavaille et~al.(2019)Cavaille, Chen, and der Straeten]{cavaille}
Charlotte Cavaille, Daniel~L. Chen, and Karine~Van der Straeten.
\newblock Who cares? measuring attitude strengthin a polarized environment.
\newblock https://charlottecavaille.wordpress.com/pre-analysis-plans/, 2019.

\bibitem[Chandar and Weyl(2019)]{finitepopulations}
Bharat Chandar and E.~Glen Weyl.
\newblock Quadratic voting in finite populations.
\newblock http://ssrn.com/abstract=2571026, 2019.

\bibitem[Coy(2019)]{colorado}
Peter Coy.
\newblock A new way of voting that makes zealotry expensive.
\newblock \emph{Bloomberg Businessweek}, May 2019.

\bibitem[Eguia and Xefteris(2018)]{eguia}
Jon~X. Eguia and Dimitrios Xefteris.
\newblock Implementation by vote-buying mechanisms.
\newblock https://msu.edu/~eguia/SImpFebv16.pdf, 2018.

\bibitem[Eguia et~al.(2019)Eguia, Immorlica, Ligett, Weyl, and
  Xefteris]{multioption}
Jon~X. Eguia, Nicole Immorlica, Katrina Ligett, Glen Weyl, and Dimitrios
  Xefteris.
\newblock Quadratic voting with multiple alternatives.
\newblock https://msu.edu/~eguia/QVMA19.pdf, 2019.

\bibitem[Feller(1971)]{feller}
William Feller.
\newblock \emph{An Introduction to Probability Theory and its Applications.
  {V}ol. {II}.}
\newblock Second edition. John Wiley \& Sons, Inc., New York-London-Sydney,
  1971.

\bibitem[Goeree and Zhang(2017)]{goeree}
Jacob~K. Goeree and Jingjing Zhang.
\newblock One man, one \st{Vote} bid.
\newblock \emph{Games and Economic Behavior}, 101:\penalty0 151--171, 2017.

\bibitem[Groves(1973)]{groves}
Theodore Groves.
\newblock Incentives in teams.
\newblock \emph{Econometrica}, 41\penalty0 (4):\penalty0 617--631, 1973.

\bibitem[Groves and Ledyard(1977)]{gl}
Theodore Groves and John Ledyard.
\newblock Optimal allocation of public goods: A solution to the ``free rider''
  problem.
\newblock \emph{Econometrica}, 45\penalty0 (4):\penalty0 783--809, 1977.

\bibitem[Hoeffding(1963)]{hoeffding}
Wassily Hoeffding.
\newblock Probability inequalities for sums of bounded random variables.
\newblock \emph{J. Amer. Statist. Assoc.}, 58:\penalty0 13--30, 1963.
\newblock ISSN 0162-1459.

\bibitem[Holland(2019)]{holland}
Alisha~C. Holland.
\newblock Making the public work: Geography, externalities, and preferences for
  mass infrastructure.
\newblock http://alishaholland.com/papers, 2019.

\bibitem[Hylland and Zeckhauser(1980)]{hyzeck}
Aanund Hylland and Richard Zeckhauser.
\newblock A mechanism for selecting public goods when preferences must be
  elicited.
\newblock https://www.dropbox.com/s/ohg615jl0xfuqqs/hyllandzeckhauser.pdf?dl=0,
  1980.

\bibitem[Kahn et~al.(1988)Kahn, Kalai, and Linial]{kahn}
Jeff Kahn, G.~Kalai, and Nathan Linial.
\newblock The influence of variables on boolean functions.
\newblock In \emph{29th Annual Symposium on Foundations of Computer Science},
  pages 68--80, 1988.

\bibitem[Ledyard(1984)]{ledyardvote}
John Ledyard.
\newblock The pure theory of large two-candidate elections.
\newblock \emph{Public Choice}, 44\penalty0 (1):\penalty0 7--41, 1984.

\bibitem[Lee(2018)]{qvimplement}
Sherman Lee.
\newblock Quadratic voting: A new way to govern blockchains for enterprises.
\newblock \emph{Forbes}, May 2018.

\bibitem[Myatt(2015)]{myatt}
David~P. Myatt.
\newblock A theory of voter turnout.
\newblock http://www.bi.edu/InstitutterFiles/Instituttbilder/Myatt.pdf, 2015.

\bibitem[Myerson(1981)]{myersonauction}
Roger~B. Myerson.
\newblock Optimal auction design.
\newblock \emph{Mathematics of Operations Research}, 6\penalty0 (1):\penalty0
  58--73, 1981.

\bibitem[Penrose(1946)]{penrose}
L.~S. Penrose.
\newblock The elementary statistics of majority voting.
\newblock \emph{Journal of the Royal Statistical Society}, 109\penalty0
  (1):\penalty0 53--57, 1946.

\bibitem[Quarfoot et~al.(2017)Quarfoot, von Kohorn, Slavin, Sutherland,
  Goldstein, and Konar]{wild}
David Quarfoot, Douglas von Kohorn, Kevin Slavin, Rory Sutherland, David
  Goldstein, and Ellen Konar.
\newblock Quadratic voting in the wild: Real people, real votes.
\newblock \emph{Public Choice}, 172\penalty0 (1--2):\penalty0 283--303, 2017.

\bibitem[Reny(2011)]{renybayesian}
Philip~J. Reny.
\newblock On the existence of monotone pure-strategy equilibria in bayesian
  games.
\newblock \emph{Econometrica}, 79\penalty0 (2):\penalty0 499--553, 2011.

\bibitem[Tideman and Plassmann(2017)]{tidemansurvey}
Nicolaus Tideman and Florenz Plassmann.
\newblock Efficient collective decision-making, marginal cost pricing, and
  quadratic voting.
\newblock \emph{Public Choice}, 172\penalty0 (1--2):\penalty0 45--73, 2017.

\bibitem[Weyl(2012)]{qvb}
E.~Glen Weyl.
\newblock Quadratic vote buying.
\newblock http://goo.gl/8YEO73, 2012.

\bibitem[Weyl(2017)]{robustnessQV}
E.~Glen Weyl.
\newblock The robustness of quadratic voting.
\newblock \emph{Public Choice}, 172\penalty0 (1--2):\penalty0 75--107, 2017.

\bibitem[Zyczkowski and Slomczynski(2004)]{compromise}
Karol Zyczkowski and Wojciech Slomczynski.
\newblock Voting in the european union: The square root system of penrose and a
  critical point.
\newblock https://arxiv.org/abs/cond-mat/0405396, 2004.

\end{thebibliography}
\end{document}